\documentclass[a4paper,UKenglish,cleveref, autoref, thm-restate]{lipics-v2021}
%This is a template for producing LIPIcs articles. 
%See lipics-v2021-authors-guidelines.pdf for further information.
%for A4 paper format use option "a4paper", for US-letter use option "letterpaper"
%for british hyphenation rules use option "UKenglish", for american hyphenation rules use option "USenglish"
%for section-numbered lemmas etc., use "numberwithinsect"
%for enabling cleveref support, use "cleveref"
%for enabling autoref support, use "autoref"
%for anonymousing the authors (e.g. for double-blind review), add "anonymous"
%for enabling thm-restate support, use "thm-restate"
%for enabling a two-column layout for the author/affilation part (only applicable for > 6 authors), use "authorcolumns"
%for producing a PDF according the PDF/A standard, add "pdfa"

%\pdfoutput=1 %uncomment to ensure pdflatex processing (mandatatory e.g. to submit to arXiv)
%\hideLIPIcs  %uncomment to remove references to LIPIcs series (logo, DOI, ...), e.g. when preparing a pre-final version to be uploaded to arXiv or another public repository

%\graphicspath{{./graphics/}}%helpful if your graphic files are in another directory

\usepackage{blindtext}
\usepackage{hyperref}
\usepackage{graphicx}
\usepackage{amsfonts} 
\usepackage{todonotes}
\usepackage{algpseudocode}
\usepackage{algorithm}
\bibliographystyle{abbrv}

\allowdisplaybreaks

\usepackage{pifont}
\usepackage{mathtools}
 \usepackage{pdfpages}

\usepackage{subcaption}

\def\A{{\cal A}}

\def\S{{\mathcal S}}
\def\T{{\mathcal T}}
\def\OO{{\cal O}} % \O is a capital oh with a slash

\usepackage{mathtools}

\nolinenumbers %uncomment to disable line numbering

\DeclarePairedDelimiter\abs{\lvert}{\rvert}%
\DeclarePairedDelimiter\norm{\lVert}{\rVert}%

% Swap the definition of \abs* and \norm*, so that \abs
% and \norm resizes the size of the brackets, and the 
% starred version does not.
\makeatletter
\let\oldabs\abs
\def\abs{\@ifstar{\oldabs}{\oldabs*}}
\let\oldnorm\norm
\def\norm{\@ifstar{\oldnorm}{\oldnorm*}}

\usepackage{bm}
 \newtheorem{property}{Property}

\newcommand{\IR}{\mathbb{R}}
\newcommand{\IZ}{\mathbb{Z}}
\newcommand{\Z}{\mathbb{Z}}

\DeclareMathOperator{\Cell}{Cell}

\usepackage{color}
%\nolinenumbers

% \newcommand{\ES}{\textsf{\sc EvenSmorodinsky'sAlgorithm}}
% \newcommand{\AL}{\textsf{\sc AlgorithmLayer}}

\DeclareMathOperator{\LIR}{\textsc{LayeredRIR}}
\DeclareMathOperator{\ANC}{\textsc{CloseToCenter}}
\DeclareMathOperator{\RIR}{\textsc{RIR}}
\DeclareMathOperator{\ES}{\textsf{\textsc{Algo-ES}}}
\DeclareMathOperator{\HHR}{\textsf{\textsc{LayeredES}}}

\def\OO{{\cal O}} % \O is a capital oh with a slash
\def\Q{{\cal Q}}
\def\A{{\cal A}}
\def\H{{\mathcal H}}
\DeclareMathOperator{\interior}{int}

\def\I{{\mathcal I}}
\def\K{{\mathcal K}}
\def\P1{{\mathcal P_1}}
\def\P2{{\mathcal P_2}}
\def\P{{\mathcal P}}
\def\B{{\mathcal B}}

\title{{New Lower Bound and Algorithms for  Online Geometric Hitting Set Problem}}
% \titlerunning{Dummy short title} %TODO optional, please use if title is longer than one line
\titlerunning{New Lower Bound and Algorithms for Online Geometric Hitting Set Problem}

 \author{Minati De}{Dept. of Mathematics, Indian Institute of Technology Delhi, New Delhi, India}{minati@maths.iitd.ac.in}{https://orcid.org/0000-0002-1859-8800}{Partially supported by SERB MATRICS Grant MTR/2021/000584.}%TODO mandatory, please use full name; only 1 author per \author macro; first two parameters are mandatory, other parameters can be empty. Please provide at least the name of the affiliation and the country. The full address is optional. Use additional curly braces to indicate the correct name splitting when the last name consists of multiple name parts.

% \footnote{Corresponding author}

 \author{Ratnadip Mandal}{Dept. of Mathematics, Indian Institute of Technology Delhi, New Delhi, India}{maz218522@iitd.ac.in}{[orcid]}{Supported by CSIR, India, File Number- 09/0086(13712)/2022-EMR-I.}

 \author{Satyam Singh}{Dept. of Mathematics, Indian Institute of Technology Delhi, New Delhi, India}{satyam.singh@maths.iitd.ac.in}{[orcid]}{Supported by CSIR, India, File Number-09/086(1429)/2019-EMR-I.}

 \authorrunning{M. De, R. Mandal and S. Singh} %TODO mandatory. First: Use abbreviated first/middle names. Second (only in severe cases): Use the first author plus 'et al.'

\Copyright{Minati De, Ratnadip Mandal and Satyam Singh} %TODO mandatory, please use full first names. LIPIcs license is "CC-BY";  http://creativecommons.org/licenses/by/3.0/

% \author{Jane {Open Access}}{Dummy University Computing Laboratory, [optional: Address], Country ,My second affiliation, Country ,\url{http://www.myhomepage.edu} }{johnqpublic@dummyuni.org}{https://orcid.org/0000-0002-1825-0097}{(Optional) author-specific funding acknowledgements}%TODO mandatory, please use full name; only 1 author per \author macro; first two parameters are mandatory, other parameters can be empty. Please provide at least the name of the affiliation and the country. The full address is optional. Use additional curly braces to indicate the correct name splitting when the last name consists of multiple name parts.

% \author{Joan R. Public\footnote{Optional footnote, e.g. to mark corresponding author}}{Department of Informatics, Dummy College, [optional: Address], Country}{joanrpublic@dummycollege.org}{[orcid]}{[funding]}

% \authorrunning{J. Open Access and J.\,R. Public} %TODO mandatory. First: Use abbreviated first/middle names. Second (only in severe cases): Use first author plus 'et al.'

% \Copyright{Jane Open Access and Joan R. Public} %TODO mandatory, please use full first names. LIPIcs license is "CC-BY";  http://creativecommons.org/licenses/by/3.0/

% \ccsdesc[500]{Theory of computation~Packing and covering problems}
\ccsdesc[500]{Theory of computation~Computational geometry}
%\ccsdesc[100]{\textcolor{red}{500}} %TODO mandatory: Please choose ACM 2012 classifications from https://dl.acm.org/ccs/ccs_flat.cfm 

\keywords{Fat objects, Hitting set, Hypercubes,  Online algorithm, Randomized lower bound, Regular $k$-gons.} %TODO mandatory; please add comma-separated list of keywords

\category{} %optional, e.g. invited paper

\relatedversion{} %optional, e.g. full version hosted on arXiv, HAL, or other respository/website
%\relatedversiondetails[linktext={opt. text shown instead of the URL}, cite=DBLP:books/mk/GrayR93]{Classification (e.g. Full Version, Extended Version, Previous Version}{URL to related version} %linktext and cite are optional

%\supplement{}%optional, e.g. related research data, source code, ... hosted on a repository like zenodo, figshare, GitHub, ...
%\supplementdetails[linktext={opt. text shown instead of the URL}, cite=DBLP:books/mk/GrayR93, subcategory={Description, Subcategory}, swhid={Software Heritage Identifier}]{General Classification (e.g. Software, Dataset, Model, ...)}{URL to related version} %linktext, cite, and subcategory are optional

%\funding{(Optional) general funding statement \dots}%optional, to capture a funding statement, which applies to all authors. Please enter author specific funding statements as fifth argument of the \author macro.

% \acknowledgements{I want to thank \dots}%optional

%\nolinenumbers %uncomment to disable line numbering

%Editor-only macros:: begin (do not touch as author)%%%%%%%%%%%%%%%%%%%%%%%%%%%%%%%%%%
\EventEditors{John Q. Open and Joan R. Access}
\EventNoEds{2}
\EventLongTitle{42nd Conference on Very Important Topics (CVIT 2016)}
\EventShortTitle{CVIT 2016}
\EventAcronym{CVIT}
\EventYear{2016}
\EventDate{December 24--27, 2016}
\EventLocation{Little Whinging, United Kingdom}
\EventLogo{}
\SeriesVolume{42}
\ArticleNo{23}
%%%%%%%%%%%%%%%%%%%%%%%%%%%%%%%%%%%%%%%%%%%%%%%%%%%%%%

\begin{document}

\maketitle
\begin{abstract}

The hitting set problem is one of the fundamental problems in combinatorial optimization and is well-studied in offline setup. We consider the online hitting set problem, where only the set of points is known in advance, and objects are introduced one by one. Our objective is to maintain a minimum-sized hitting set by making irrevocable decisions. Here, we present the study of two variants of the online hitting set problem depending on the point set. In the first variant, we consider the point set to be the entire $\mathbb{Z}^d$, while in the second variant, we consider the point set to be a finite subset of $\mathbb{R}^2$.

If you use points in $\mathbb{Z}^d$ to hit homothetic hypercubes in $\mathbb{R}^d$ with side lengths in $[1,M]$, we show that the competitive ratio of any algorithm is $\Omega(d\log M)$, whether it is deterministic or random. This improves the recently known deterministic lower bound of $\Omega(\log M)$ by a factor of $d$. Then, we present an almost tight randomized algorithm with a competitive ratio $O(d^2\log M)$ that significantly improves the best-known competitive ratio of $25^d\log M$. Next, we propose a simple deterministic ${\lfloor\frac{2}{\alpha}+2\rfloor^d}$$\left(\lfloor\log_{2}M\rfloor+1\right)$ competitive  algorithm to hit similarly sized {$\alpha$-fat objects} in $\mathbb{R}^d$ having diameters in the range $[1, M]$ using points in $\mathbb{Z}^d$.
This improves the current best-known upper bound by a factor of at least $5^d$.

Finally, we consider the hitting set problem when the point set consists of $n$ points in $\mathbb{R}^2$, and the objects are homothetic regular $k$-gons having diameter in the range $[1, M]$. We present an $O(\log n\log M)$ competitive randomized algorithm for that. Whereas no result was known even for squares.

In particular, our results answer some of the open questions raised by Khan et al. (\textit{Symposium on Computational Geometry}, 2023) and Alefkhani et al. (\textit{Approximation and Online Algorithm}, 2023). Some of the techniques used in solving them are simple yet nontrivial, and that is the strength of the paper.

\end{abstract}
\newpage
\pagenumbering{arabic}
\section{Introduction}
The hitting set problem is one of the most fundamental problems in combinatorial optimization.
% ~\cite{AlonAABN09, DeS24b, EvenS14, KhanLRSW23, MustafaR10}.
Let $\Sigma=(\P,\cal S)$ be a \emph{range space}, where  $\cal P$ is a set of \emph{elements} and $\cal S$ is a collection of subsets of $\P$ known as \emph{ranges}. A subset $\cal H \subseteq \cal P$ is called a \emph{hitting set} of the range space $\Sigma=(\P,\cal S)$ if the set $\cal H$ intersects every range $s$ in $\cal S$. 
The objective of the \emph{hitting set problem} is to find a hitting set with the minimum cardinality.
This problem is one of Karp’s 21 classic NP-hard problems~\cite{10.6/574848}, and the best approximation factor achievable in polynomial time is $\Theta(\log |\P|)$ assuming $\text{P}\neq \text{NP}$~\cite{DinurS14,Feige98}.

Due to several applications in  VLSI design, database management, resource allocation, facility location and wireless networks, researchers have considered the hitting set problem in geometric setup (also known as \emph{geometric hitting set problem}), where the set $\cal P$ is a collection of points from $\IR^d$, and the set $\cal S$ is a finite family of geometric objects chosen from some infinite class (hypercubes, balls, etc.)~\cite{AgarwalP20, ChanH20, Ganjugunte11, MustafaR10}. 
By a slight misuse of the notation, we will use $\cal S$ to signify both the set of ranges as well as the set of objects that define these ranges.
The hitting set problem remains NP-hard, even for simple objects like unit disks and unit squares~\cite{FowlerPT81}. A PTAS is known when the geometric objects are half-spaces in $\IR^3$, and when objects are  $r$-admissible  regions in a plane (this includes pseudo-disks)~\cite{MustafaR10}.

% \red{We study the online hitting set problem when only the set $\P$ is known in advance.
% %
% % In the first setup, the whole range space $\Sigma=(\P,\cal S)$ is known in advance (i.e., both the set $\P$ and the set $\cal S$ are known in advance), and in second setup, only the set $\P$ is known. In both setups,
% %
% The set $\cal S$ is not known a prior; elements of this set are revealed one by one. The algorithm needs to maintain a hitting set for the revealed objects by choosing points of $\P$.
% Once a point is included in the solution set, the decision cannot be reverted in the future.}

% \blue{The quality of our online algorithm is analyzed using competitive analysis~\cite{BorodinE}}. The \emph{competitive ratio} of an online algorithm $\A$ for the minimization problem is  $\sup_{\I}\frac{\A_{\I}}{\OO_{\I}}$, where $\I$ is an input sequence, and $\OO_{\I}$ and $\A_{\I}$ are the cost of an optimal offline algorithm and an online algorithm, respectively, over $\I$. Here, the supremum is taken over all possible input sequences $\I$.  If $\A$ is a randomized algorithm, then $\A_{\cal I}$ is replaced by the expectation of $\A_{\cal I}$ (i.e., ${\mathbb{E}[\A_{\cal I}]}$), and the competitive ratio of the online randomized algorithm $\A$ is  $\sup_{\I}\frac{{\mathbb{E}[\A_{\cal I}]}}{\OO_{\I}}$.

%\subsection{Related Works}
Alon et al.~\cite{AlonAABN09} initiated the study of the online hitting set problem in a setup where the whole range space $\Sigma=(\P,{\cal S})$ is known in advance, but the order of arrival of the elements of ${\cal S}$ is not known {a priori}. In this setup, they proposed a deterministic $O(\log |\P| \log |\cal S|)$ competitive algorithm and obtained a lower bound of {$\Omega\left(\frac{\log |\P| \log |\cal S|}{\log\log |\P| +\log\log |\cal S|}\right)$}.
In the same setup,  Even and Smorodinsky~\cite{EvenS14} studied the problem for various geometric range spaces and achieved an optimal competitive ratio of $\Theta(\log |\P|)$ when objects are intervals, half-planes, and unit disks, and the set $\P$ is a finite subset of points in $\IR^2$. In this paper, we ask whether it is possible to obtain a similar result if we don't know the set $\cal S$ in advance, and we answer affirmatively when $\cal S$ is similarly sized homothetic copies of any regular $k$-gon.

% \red{In this paper, for unit disks, we show how to implement Even and Smorodinsky's algorithms without knowing the set $\cal S$ in advance. We extend their result for regular $k$-gons ($k\geq 4$).}

% \todo{Did not understand.. why are you keeping the below result here?} Recently, Khan et al.~\cite{KhanLRSW23}, in the same setup, gave a $\Omega(\log |\P|)$ randomized lower bound for a range space $\Sigma=(\P,{\cal S})$ where $\P\subset\IR^2$ consists of a finite number of points and the set $\cal S$ consists of axis-aligned unit squares. 
% \red{ It is not known whether the same randomized lower bound is possible for other geometric objects. In this paper, we answer this in a positive way when the objects are translated copies of either a disk or a regular $k$-gon. }\todo{similar lower bound must be for these objects. there is no surprise. Perhaps, your proof is simpler than Khan..}

Motivated by various applications, De and Singh~\cite{DeS24b} considered the online hitting set problem, where ${\cal P}={\mathbb Z}^d$ and $\cal S$ consists of unit balls or hypercubes in $\IR^d$. For unit hypercubes, they proved that every deterministic algorithm has a competitive ratio of at least~$d+1$; whereas they propose a randomized algorithm achieving a competitive ratio of~$O(d^2)$ for $d\geq3$. For unit balls in $\IR^d$, they proposed a deterministic algorithm having a competitive ratio of $O(d^4)$.

Khan et al.~\cite{KhanLRSW23} studied the problem, where, for a given $M\in\IR^{+}$, $\P$ is a finite subset of $[0,M)^2\cap\mathbb{Z}^2$ and the set $\cal S$ consists of axis-aligned integer squares $s\subseteq [0, M)^2$ in $\IR^2$. Here, an integer square is a square whose all corners have integral coordinates. They obtained an $O(\log M)$ competitive algorithm for this variant. This result is tight since the lower bound is $\Omega(\log M)$ which even holds for intervals~\cite{EvenS14}. Observe that, since the integer squares are from $[0, M)^2$, the side length of the squares is in the range $[1, M]$.
In the same setup, no result is known when the squares are not integer. In fact, the authors~\cite{KhanLRSW23} posed this as an open problem. Motivated by this, in this paper, we have considered a more general range space where ${\cal P}$ is a finite subset of $\IR^2$
 and the set $\cal S$ consists of squares (not necessarily integer squares) of side length in the range $[1,M]$.
% \red{They also came up with a randomized lower bound of 
% $\Omega(\log |\P|)$  for a range space $\Sigma=(\P,{\cal S})$ where $\P\subset\IR^2$ consists of a finite number of points and the set $\cal S$ consists of axis-aligned unit squares. }\todo{From where did you get?}
% In this paper, we establish the same lower bound for translates of a regular $k$-gon or a disk. While this result does not surpass existing literature, our proof is significantly simpler.
%
% we establish a randomized lower bound of $\Omega(\log |\mathcal{P}|)$ for translates of a regular $k$-gon or a disk. While this result does not surpass existing literature, our approach introduces significantly simpler techniques for its proof compared to previous methods.

When $\P=(0, M)^d\cap\mathbb{Z}^d$ and ${\cal S}$ is a family of $\alpha$-fat objects\footnote{Definition of fat object is given in Section~\ref{notation}. Note that this definition is slightly different than~\cite{AlefkhaniKM23}. If $\hat{\alpha}$ and $\alpha$ are the fatness of an object according to the definition of~\cite{AlefkhaniKM23} and ours, respectively,  then $\hat{\alpha}\leq \frac{1}{\alpha}$.} in $(0, M)^d$, very recently, Alefkhani et al.~\cite{AlefkhaniKM23} proposed a deterministic online algorithm with a competitive ratio of $(\frac{4}{\alpha} +1)^{2d} \log M$. They proved that every deterministic algorithm has a competitive ratio of~$\Omega(\log M)$ for this problem. They~\cite{AlefkhaniKM23} noted the vast gap between the lower and the upper bounds, and proposed to bridge this gap as an open problem. In this paper, we propose an exponential factor improvement in the upper bound and give an improved randomized lower bound.

% Recently, Alefkhani et al.~\cite{AlefkhaniKM23} considered the problem where $\P=(0, M)^d\cap\mathbb{Z}^d$ and ${\cal S}$ is a family of $\alpha$-fat objects in $(0, M)^d$ in the setup \red{when the whole range space is known in advance}\todo{Recheck it.. It seems that they don't need to know the whole range space in advance.}.
% They proposed a deterministic online algorithm with a competitive ratio of $(\frac{4}{\alpha} +1)^{2d} \log M$~\footnote{Note that the definition of fatness is slightly different in~\cite{AlefkhaniKM23} than what we consider. Let $\hat{\alpha}$ and $\alpha$ be the fatness of an object according to the definition of Alefkhani et al.~\cite{AlefkhaniKM23} and ours (given in Section~\ref{notation}), respectively. Then, it is easy to observe that $\hat{\alpha}\leq \frac{1}{\alpha}$.}, and proved that every deterministic algorithm has a competitive ratio of~$\Omega(\log M)$. Note that there is a vast gap between the lower and the upper bounds for hitting similarly sized fat objects in $(0, M)^d$ using points in $(0, M)^d\cap\mathbb{Z}^d$. They~\cite{AlefkhaniKM23} proposed to bridge this gap as an open problem. 

% \red{In this paper, we are able to reduce this gap by providing a randomized lower bound of $\Omega(d\log M)$ as well as a ${\lfloor\frac{2}{\alpha}+2\rfloor^d} \left(\lfloor\log_{2}M\rfloor+1\right)$-competitive deterministic algorithm.}

\subsection{Our Contributions and Comparison to Prior Work}
%In this paper, we obtain the following results.
% \begin{enumerate}
%     \item
We study the online hitting set problem when only the set $\P$ of points is known in advance.
The set $\cal S$ of objects is not known a priori; elements of this set are revealed one by one. The algorithm needs to maintain a hitting set for the revealed objects by choosing points from $\P$.
Once a point is included in the solution set, the decision cannot be reverted in the future.
Below, we mention each contribution formally and discuss the context and comparison to prior work. 
   
    \textbf{A New Improved Lower Bound.} First,  we establish that the competitive ratio of any algorithm, whether deterministic or randomized, is $\Omega(d\log M)$ for hitting homothetic hypercubes with side lengths in the range $[1, M]$ using points in $\mathbb{Z}^d$ (Theorem~\ref{thm: random lower bound homothetic hypercubes}). This improves the current best-known deterministic lower bound of $\Omega(\log M)$~\cite{AlefkhaniKM23} by a factor of $d$.
    %
    %  Notice that the results of Theorem~\ref{thm:fat_ub} and Theorem~\ref{thm: random lower bound homothetic hypercubes} together reduce the gap between the upper and lower bounds for hitting similarly sized $\alpha$-fat objects in $\mathbb{R}^d$ using points in $\mathbb{Z}^d$, an open problem proposed in~\cite{AlefkhaniKM23}.
    % As mentioned earlier, in the same setup as us, the best-known lower bound is a deterministic lower bound of $\Omega(\log n)$ for unit disks~\cite{EvenS14} and a randomized lower bound of $\Omega(\log n)$ for unit squares~\cite{KhanLRSW23}.
    %
    To obtain the lower bound, we use Yao's minimax principle (as in~\cite{ChanZ09}). We first construct a complete binary tree $\T$ of height $O(d \log M)$, where each path from the root to a leaf of $\T$ represents a sequence of similarly sized homothetic hypercubes. If any path of $\T$ from the root node to one of the leaf nodes is presented at random as an input sequence,  then we bound the competitive ratio of any deterministic algorithm for this random input sequence.

    \textbf{An Almost Tight Algorithm for Hypercubes.} Next, we consider the hitting set problem using points in $\mathbb{Z}^d$, where objects are (axis-aligned) hypercubes having side lengths in $[1, M]$ and  $d\geq3$.
  Since the value of $\alpha$ is $1$ for (axis-aligned) hypercubes, the best known competitive ratio for this problem was~$25^d\log M$ due to Alefkhani et al.~\cite{AlefkhaniKM23}.
We propose a randomized algorithm $\LIR$ that achieves a competitive ratio of~$O(d^2\log M)$ (Theorem~\ref{thm: hyp_ub}).
The algorithm $\LIR$ uses a generalized version of an algorithm by De and Singh~\cite{DeS24b}, designed for hitting unit hypercubes using points in $\mathbb{Z}^d$. To accomplish this, we introduce two concepts: \emph{layer formation} and \emph{core construction}. Although the layer formation technique has been used before~\cite{CaragiannisFKP07}, our approach differs from prior implementations.
We partition the set of homothetic hypercubes into layers such that each hypercube within a particular layer contains at least $2^d$ points and at most $3^d$ points from $(s\mathbb{Z})^d$, where the value of $s$ will depend on the layer of the object. Since hypercubes within each layer are not of the same size, we cannot directly apply the generalized version of De and Singh's algorithm. This takes us to the introduction of the core construction concept, where we create a core object for each hypercube in a layer. This core object contains the same set of points from $(s\mathbb{Z})^d$ as the original object but has a fixed side length dependent on the layer. This allows us to use the generalized version of De and Singh's algorithm effectively.
 Here is a brief overview of our algorithm. Upon the arrival of a new hypercube, if it is not hit by previously selected points, we first determine the layer. We then evoke the generalized algorithm of~\cite{DeS24b} to hit the core of the object using points in $(s\mathbb{Z})^d$, where $s$ will depend on the layer. We show that algorithm $\LIR$ achieves a competitive ratio of~$O(d^2\log M)$ (Theorem~\ref{thm: hyp_ub}).

   \textbf{An Exponential Improved Competitive Algorithm for Fat Objects.} 
   Fat objects have several desirable properties and  are well-studied due to their various applications and appearance in geometric problems~\cite{Chan03, 
 BergSVK02, EfratKNS00}.
We focus on the hitting set problem, where the set $\P=\IZ^d$ and the set $\cal S$ consists of similarly sized $\alpha$-fat objects in $\IR^d$ having widths in the range $[1, M]$. For this, we present a deterministic algorithm $\ANC$, which is as follows: Upon the arrival of a new object $\sigma$ with width $w$, if $\sigma$ is not hit by the previously selected points, our algorithm identifies a point $r\in (2^{i+1}\mathbb{Z})^d$ to hit $\sigma$, where $i=\lfloor\log w\rfloor$. We show that this algorithm achieves a competitive ratio of at most~${\lfloor\frac{2}{\alpha}+2\rfloor^d} \left(\lfloor\log_{2}M\rfloor+1\right)$ (Theorem~\ref{thm:fat_ub}). 
The previous best-known upper bound for this problem due to Alefkhani et al.~\cite{AlefkhaniKM23} was $\left(\frac{4}{\alpha}+1\right)^{2d}\log M$. Note that $\left(\frac{4}{\alpha}+1\right)^{d}\left(\frac{4}{\alpha}+1\right)^{d}\geq \left(5\right)^d\left(\frac{4}{\alpha}+1\right)^{d}>\left(5\right)^d\left(\frac{2}{\alpha}+2\right)^{d}$.  Since the value of $\alpha$ is in the range $(0,1]$ and $\frac{4}{\alpha}+1=\frac{2}{\alpha}+\frac{2}{\alpha}+1\geq \frac{2}{\alpha}+2+1>\frac{2}{\alpha}+2$, the first and second inequalities hold.
Thus, our result improves the upper bound of Alefkhani et al.~\cite{AlefkhaniKM23}  by a factor of at least $5^d$ which is an exponential improvement.

% Next, we give a randomized algorithm $\LIR$  for hitting similarly sized homothetic hypercubes having side lengths in the range $[1, M]$ using points in $\mathbb{Z}^d$, where $d\geq3$, and show that the algorithm achieves a competitive ratio of~$O(d^2\log M)$ (Theorem~\ref{thm: hyp_ub}). \blue{Algorithm $\LIR$ uses a generalized version of an algorithm~\cite{DeS24b} as a black box, which was defined for hitting unit hypercubes using points in $\mathbb{Z}^d$. In order to accomplish this, we consider two conditions, namely \emph{layer formation} and \emph{core construction}. Although the layer formation technique has been used before~\cite{CaragiannisFKP07}, our approach is different from how it was used previously. In the core construction method, a core is defined for each hypercube. The core is basically a hypercube containing the same set of integer points as the object and has a specific side length that depends on the layer of the object. Now, a brief overview of our algorithm. When a new hypercube is introduced that is not hit by previously selected points, we first determine the layer of the object and then evoke the generalized algorithm of~\cite{DeS24b} to hit the core of the object using points in $(s\mathbb{Z})^d$, where the value of $s$ will depend on the layer of the object.}

    \textbf{An $O(\log n)$ Competetive Algorithm for Similarly Sized Homothetic Regular $k$-gons.}
     Finally, we consider the hitting set problem, where the set $\P\subset\IR^2$ consists of $n$ points and the set $\cal S$ consists of similarly sized homothetic regular $k$-gons ($k\geq 4$) having diameters in the range $[1, M]$.
   %  As mentioned before, here, in our setup, only the point set $\P$ is known in advance, and the objects in $\cal S$ will be introduced one by one.
       For this, we 
     propose a randomized algorithm $\HHR$
     having a  competitive ratio of  $O(\log n\log M)$ (Theorem~\ref{Theo:Main_Bounded}). In this setup, no algorithm was known, even for  squares, as mentioned earlier. In fact, 
     both Alefkhani et al.~\cite{AlefkhaniKM23} and Khan et al.~\cite{KhanLRSW23} posed it as an open problem to design an online algorithm having a competitive ratio of $O(\log n)$ for hitting squares using a set $P\subset \mathbb{R}^2$  of $n$ points. Thus, for a fixed $M$, the above-mentioned outcome affirmatively answers this open problem. On the way to design our algorithm $\HHR$, we propose an $O(\log n)$ competitive deterministic algorithm for translates of a regular $k$-gon (Theorem~\ref{thm:translates}) that is similar to the algorithm of  Even and Smorodinsky~\cite{EvenS14} for hitting translates of a disk. We show how to implement their algorithm without the knowledge of the set of objects beforehand. In particular, we show how to find the extreme points (defined in Section~\ref{sec: homothetic regular k-gons}) used to hit the introduced objects without knowing the objects (Lemma~\ref{clm:ext}). Along the way of extending the result of unit balls to regular $k$-gons, we obtain a key property~(Lemma~\ref{lemma:regular}) of regular $k$-gons that apparently was known only for unit disks. We refer to the property as \emph{angle property}. We feel this would be of independent interest, and it will find its applications in other geometric problems.

    % \red{Both Khan et al.~\cite{KhanLRSW23} and Alefkhani et al.~\cite{AlefkhaniKM23} posed the following open problem regarding squares. Design an online algorithm having a competitive ratio of $O(\log n)$ for hitting squares using a set $P\subset \mathbb{R}^2$ (known in advance) of $n$ points. In this paper, we obtain an $O(\log n\log M)$-competitive randomized algorithm for the hitting set problem for a range space $\Sigma=(\P,{\cal S})$, where the set $\P$ is a subset of $\IR^2$ having $n$ points and the set $\cal S$ consists of homothetic regular $k$-gon ($k\geq 4)$ having diameters in the range $[1, M]$, where $M> 1$. For a fixed constant $M$, our algorithm achieves a competitive ratio of $O(\log n)$, which affirmatively answers the above open problem.}

%\end{enumerate}

% \subsection{Organization}
% In Section~\ref{notation}, we introduce terminology that will be used in subsequent sections. Section~\ref{sec: fat} considers the hitting set problem for similarly sized $\alpha$-fat objects using points in $\mathbb{Z}^d$ and provides a deterministic algorithm. In Section~\ref{sec: hypercube_upperb}, we give a randomized algorithm for hitting similarly sized homothetic hypercubes using points in $\mathbb{Z}^d$. Then, to complement this result, in Section~\ref{sec: random lower bound}, we present a randomized lower of the problem. In Section~\ref{sec: homothetic regular k-gons}, we consider the hitting set problem of similarly sized homothetic regular $k$-gon using $n$ points in $\mathbb{R}^2$ and give a randomized algorithm. Finally, we conclude in Section~\ref{conclusion}.

\subsection{Other Related Works}
The continuous version of the hitting set problem, where the point set $\P=\mathbb{R}^d$, is known as the \emph{piercing set problem}. When the set $\S$ consists of translated copies of a convex object, the corresponding piercing set problem is equivalent to the \emph{unit covering problem}~\cite{DeJKS24}. Charikar et al.~\cite{CharikarCFM04} considered the online unit covering problem and presented an $O(2^dd\log d)$ competitive algorithm for balls in $\IR^d$, and they also gave a deterministic lower bound of $\Omega\left(\frac{\log d}{\log\log\log d}\right)$ for this problem. Later, Dumitrescu et al.~\cite{DumitrescuGT20} improved the upper and lower bounds of the competitive ratio to $O({1.321}^d)$ and $\Omega(d+1)$, respectively. In particular, they~\cite{DumitrescuGT20} obtained competitive ratios of 5 and 12 for balls in $\IR^2$ and $\IR^3$, respectively. 
Dumitrescu and Tóth~\cite{DumitrescuT22} proved that the lower bound of the unit covering problem for hypercubes in $\mathbb{R}^d$ is at least $2^d$, and also this ratio is optimal, as it is obtainable through a simple deterministic algorithm that allocates points to a predefined set of hypercubes~\cite{ChanZ09}. Recently, De et al.~\cite{DeJKS24} studied the online piercing set problem for $\alpha$-fat objects in $\mathbb{R}^d$, having widths in the range $[1, M]$. They~\cite{DeJKS24} presented a deterministic algorithm for this problem with a competitive ratio of at most $\left(\frac{2}{\alpha}\right)^d((1+\alpha)^d-1)\lceil\log_{(1+\alpha)} \left(\frac{2M}{\alpha}\right)\rceil +1$. 
We refer to~\cite{DeJKS24} for other results regarding the online piercing set problem of various objects.

\section{Notation and Preliminaries}\label{notation}
We use $\mathbb{Z}^{+}$ and $\IR^{+}$ to denote the set of positive integers and positive real numbers, respectively. We use $[n]$ to represent the set  $\{1,2,\ldots,n\}$, where $n\in\mathbb{Z}^{+}$.
For any $\beta\in\mathbb{R}$, we use $\beta\mathbb{Z}$ to denote the set $\left\{\beta z \ |\ z\in\mathbb{Z}\right\}$, where $\mathbb Z$ is the set of integers.
\begin{observation}\label{obs:points}
    Let $\sigma$ be a hypercube with side lengths between $\ell\beta$ and $r\beta$, where $\ell$, $r$, $\beta\in \IR^{+}$ and $\ell<r$. Then, the hypercube $\sigma$ contains at least $\lfloor\ell\rfloor^d$ and at most $\lfloor r+1\rfloor^d$ points from $(\beta\IZ)^d$. 
    % Also, $\sigma$ contains at most $\lceil r\rceil^d$ $($respectively, $(r+1)^d)$ many points from $(\beta\IZ)^d$, when $r\in\IR^{+}\setminus\IZ$ $($respectively, $r\in\IZ)$.
\end{observation}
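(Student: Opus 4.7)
The plan is to reduce Observation~\ref{obs:points} to a one-dimensional counting statement and then take the $d$-fold product. First I would rescale: a hypercube $\sigma \subset \IR^d$ of side length $L \in [\ell\beta, r\beta]$ contains the same number of points of $(\beta\IZ)^d$ as the rescaled hypercube $\frac{1}{\beta}\sigma$, of side length $L/\beta \in [\ell, r]$, contains of $\IZ^d$. Since $\sigma$ is a product of $d$ axis-parallel intervals of equal length $L/\beta$, the lattice-point count factors as $\prod_{i=1}^{d} N_i$, where $N_i$ is the number of integers in the $i$-th projected interval.

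Next I would establish the one-dimensional bounds. For a closed interval $I = [a, a+L']$ with $L' = L/\beta \in [\ell, r]$, a direct count gives $|I \cap \IZ| = \lfloor a + L' \rfloor - \lceil a \rceil + 1$, from which one reads off the extremes: the minimum value of $|I \cap \IZ|$ is $\lfloor L' \rfloor$ (attained, for instance, when $a$ is slightly larger than an integer), and the maximum value is $\lfloor L' \rfloor + 1$ (attained when $a$ is an integer). Thus for every axis $i$ we have the uniform bounds
\[
\lfloor \ell \rfloor \;\le\; \lfloor L' \rfloor \;\le\; N_i \;\le\; \lfloor L' \rfloor + 1 \;\le\; \lfloor r \rfloor + 1 \;=\; \lfloor r+1 \rfloor,
\]
using monotonicity of $\lfloor \cdot \rfloor$ together with $\ell \le L' \le r$ and the identity $\lfloor x \rfloor + 1 = \lfloor x+1 \rfloor$ valid for all real $x$.

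Finally, multiplying the bounds across the $d$ coordinates yields
\[
\lfloor \ell \rfloor^d \;\le\; \prod_{i=1}^{d} N_i \;\le\; \lfloor r+1 \rfloor^d,
\]
which is exactly the claim. The only subtle point is the floor-bookkeeping in the one-dimensional estimate (in particular, keeping track of the open/closed endpoints and confirming that the two floor expressions $\lfloor r \rfloor + 1$ and $\lfloor r+1 \rfloor$ coincide), and this is routine rather than a genuine obstacle; no further geometric argument is required.
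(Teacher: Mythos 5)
Your argument is correct, and it follows the natural route: rescale so the lattice is $\IZ^d$, factor the count over the $d$ coordinates of the axis-aligned hypercube, and apply the one-dimensional bound $\lfloor L' \rfloor \le \big|[a,a+L']\cap\IZ\big| \le \lfloor L' \rfloor + 1$, together with monotonicity of the floor and the identity $\lfloor r \rfloor + 1 = \lfloor r+1 \rfloor$. The paper states this Observation without proof, treating it as folklore, so there is nothing to compare against; the only point worth making explicit, which you touch on but could state more carefully, is that the axis-alignment is what makes the product decomposition valid, and the one-dimensional lower bound $\lfloor L'\rfloor$ deserves a one-line verification (e.g., write $a = \lfloor a\rfloor + \{a\}$ and observe the count equals $\lfloor \{a\} + L'\rfloor$ when $\{a\}>0$, which lies in $\{\lfloor L'\rfloor,\lfloor L'\rfloor+1\}$), since "attained for $a$ slightly above an integer" establishes sharpness but not the bound itself.
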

% \red{Given $x,\beta\in\IR$, we define two functions as follows. $\lfloor x\rfloor_{\beta}=\max\{m\in\beta\mathbb{Z}\ |\ m\leq x\}$, and $\lceil x\rceil_{\beta}=\min\{n\in\beta\mathbb{Z}\ |\ n\geq x\}$.}
% \todo{Where are we using them?}

For any point $p\in\IR^d$, we use $p(x_i)$ to denote the $i$th coordinate of $p$, where $i\in[d]$. The point $p$ is an \emph{integer point} if for each $i\in[d]$, the coordinate $p(x_i)$ is an integer. The \emph{convex hull} of a set $P$ of points is defined as the intersection of all convex regions containing $P$.

% Any two integer points $p$ and $q$ are said to be \emph{consecutive integer points} if there exists an index $j\in[d]$ such that $|p(x_j)-q(x_j)|=1$ and  $p(x_i)=q(x_i)$ for all $i\in[d]\setminus\{j\}$.
% We use the term \emph{integer hypercube} to denote a hypercube whose all corners are integer points. 
By an \emph{object}, we refer to a compact set in $\mathbb{R}^d$ having a non-empty interior. For any object~$\sigma$, we use $\partial\sigma$ to denote the \emph{boundary} of the object, and $\text{int}(\sigma)=\{x\in \sigma\ |\ x\notin \partial\sigma\}$ refers the \emph{interior} of the object. A set $\cal S$ of objects is said to be \emph{similarly sized} if the ratio of the largest diameter of an object in $\cal S$ to the smallest diameter of an object in $\cal S$  is bounded by a fixed constant.
% \red{We use  $\sigma(c)$ to represent a geometric object $\sigma$ centered at a point $c$.}\todo{Cannot write this due to the next red part.}
% \red{We use $\Q(\sigma)$ to denote the set of integer points contained in the object $\sigma$. } \todo{where is it used? In section 4 only?}
 % An object $\sigma$ is said to be \emph{hit/stabbed} by a point $p$ if $p\in\sigma$. 
 A set of the form $\sigma+{\bf v}=\{c+{\bf v}\ |\ c\in \sigma\}$, where ${\bf v}\in\IR^d$, is called a \emph{translated copy/translate} of $\sigma$. 
A set of the form $\lambda \sigma+{\bf v}=\{\lambda c+{\bf v}\ |\ c\in \sigma\}$, where ${\bf v}\in\IR^d$ and $\lambda\in \mathbb{R}^+$, is called a \emph{homothetic copy} of $\sigma$. 
A set $\cal S$ of objects is said to be \emph{homothetic} if each object $\sigma\in \cal S$ is a homothetic copy of every other object $\sigma' \in \cal S$. {A set $\cal S$ of objects is said to be \emph{similarly sized homothetic} if $\cal S$ is similarly sized as well as homothetic.}
% A set $\cal S$ of objects is said to be \emph{similarly sized} if the ratio of the largest diameter of an object in $\cal S$ to the smallest diameter of an object in $\cal S$ is bounded by a fixed constant. 
% We use $d_{\infty}(x,y)$ to represent the distance between two points $x$ and $y$ under $L_{\infty}$-norm. 
% Let $c$ be a point in $\IR^d$. We use \emph{open hypercube} (respectively, \emph{closed hypercube}) to denote an open (respectively, closed) $L_{\infty}$ ball of radius $r$ centered at $c$.
% We use $H_d(c,r)$ to denote an open $L_{\infty}$ ball (i.e., open hypercube) of radius $r$ centered at $c$, while we use $H_d[c,r]$ to denote a closed $L_{\infty}$ ball (i.e., closed hypercube) of radius $r$ centered at $c$. 
% In other words, $H_d(c,r)=\{x\in\mathbb{R}^d:d_{\infty}(x,c)<r\}$, while $H_d[c,r]=\{x\in\mathbb{R}^d:d_{\infty}(x,c) \leq r\}$.
%  A point $y'$ is a \emph{reflection of a point} $y$ in a plane over a point $x$ if the midpoint of the line segment $\overline{yy'}$ is $x$.
% Now, we define the \emph{reflection of an object} $\sigma$ over a point $c\in \IR^d$ as $-\sigma(c)=\{y\in\IR^d\ |\ \frac{x+y}{2}=c \text{ and } x\in\sigma\}=\{2c-x\ |\ x\in\sigma\}$. 

 A number of different definitions of fatness (not extremely long and skinny) are available in the geometry literature~\cite{AlefkhaniKM23, Chan03, DeJKS24}. 
% For example, Alefkhani et al. defined fat objects as follows~\cite[Definition~1]{AlefkhaniKM23}. Let $\sigma\subset\IR^d$ be an object. The width of $\sigma$ is the side length of the largest (axis-aligned) hypercube contained in $\sigma$, while the height of $\sigma$ is the side length of the smallest (axis-aligned) hypercube containing $\sigma$. 
% The object $\sigma$ is $\alpha$-fat, for some $\alpha\in(0,1]$ if the ratio of its width over its height is at least $\alpha$. 
% The object $\sigma$ is $\alpha$-fat, for some $\alpha\geq 1$ if the ratio of its height over its width is at most $\alpha$.
We use a definition similar to~\cite{DeJKS24}. 
Let $\sigma$ be an {object}. For any point $x\in \sigma$, we define $\alpha(x)= \frac{\min_{y\in \partial\sigma}d_{\infty}(x,y)}{\max_{y\in \partial\sigma}d_{\infty}(x,y)}$, where $d_{\infty}(.,.)$  represents the distance under the $L_{\infty}$-norm.
The \emph{aspect ratio} $\alpha(\sigma)$ of the object $\sigma$ is defined as the maximum value  of $\alpha(x)$ for any point  $x\in \sigma$, i.e., $\alpha(\sigma)=\max \{\alpha(x)  : x \in  \sigma\}$. An object is considered an \emph{$\alpha$-fat object} if its aspect ratio is at least $\alpha$. 
We say that a set $\cal S$  of objects is \emph{fat} if there exists a constant $0<\alpha\leq 1$ such that each object in $\cal S$ is $\alpha$-fat. 
Note that, for a set $\cal S$ of fat objects,  each object $\sigma\in {\cal S}$ does not need to be convex and also connected.
For an $\alpha$-fat object, the value of $\alpha$ is invariant under translation, reflection, and scaling.
 We refer~\cite{DeJKS24} for various properties of $\alpha$-fat objects.

% \blue{Let $\alpha$ and $\hat{\alpha}$ be the aspect ratio of an object $\sigma$ with respect to our definition and Alefkhani et al.'s~\cite{AlefkhaniKM23} definition, respectively. Then, notice that $\alpha\leq\hat{\alpha}$.}

A point $c\in \sigma$ with $\alpha(c)=\alpha(\sigma)$ is defined as a \emph{center} of the object $\sigma$. Note that the center of an $\alpha$-fat object might not be unique.  If an object $\sigma$ has multiple points satisfying the property of a center, then for our purpose, we can arbitrarily choose any one of them as the center of $\sigma$. 
The minimum (respectively, maximum) {$L_{\infty}$ distance} from the center to the boundary of the object is referred to as the \emph{width} (respectively, \emph{height}) of the object. 
For a fixed $\alpha$, a set $\cal S$ of $\alpha$-fat objects is said to be \emph{similarly sized fat objects} when the ratio of the largest width of an object in $\cal S$ to the smallest width of an object in $\cal S$  is bounded by a fixed constant.

\section{Lower Bound for Similarly Sized Homothetic Hypercubes in $\IR^d$ Using Points in $\mathbb{Z}^d$}\label{sec: random lower bound}
In this section,  for hitting similarly sized homothetic hypercubes having widths in the range $[1, M]$ using points in $\IZ^d$, we prove that the lower bound of the competitive ratio of any randomized algorithm is  $\Omega(d\log M)$.  As mentioned earlier, throughout this section, we use \emph{width} of a hypercube to denote half of its side length.

    For simplicity of expressions, we assume that $M$ is an integral power of $4$. 
    Additionally, we assume that a hypercube $\sigma$ is hit by some point $p$ if $p$ lies in the interior of $\sigma$. We can remove this assumption by replacing each constructed hypercube with an infinitesimally shrunken version of itself.
   % can be removed easily by shrinking all the introduced hypercubes by an infinitely small value.}

    We use Yao's minimax principle (as in~\cite{ChanZ09}) to prove the lower bound. 
    To establish it, we create a complete binary tree $\T$ of height\footnote{The height of a tree is the number of vertices lying in the path from the root node to a leaf node.} $O(d \log M)$, where any path from the root to a leaf represents a sequence of hypercubes having widths in the range $[1, M]$, and the common intersection region of these hypercubes contains an integer point. 
    Let $\cal S$ be a set consisting of all sequences corresponding to {$2^{O(d \log M)}$} many paths from the root to each leaf node in $\T$. 
    We show that if any of the sequences of $\cal S$ is chosen randomly with uniform distribution, then the expected number of points placed by any deterministic algorithm is $\Omega(d \log M)$, whereas an offline optimum algorithm needs only one point. Consequently, the expected competitive ratio on any sequence from $\cal S$ is $\Omega(d\log M)$.
    A sequence in $\cal S$ can be chosen uniformly at random as follows. The root is introduced with certainty. Then, at each internal node, with equal probability, either the left or the right child is chosen.

    \begin{figure}[htbp]
    \centering
    \begin{subfigure}[b]{0.49\textwidth}
        \centering
        \includegraphics[page=3, width= 60 mm]{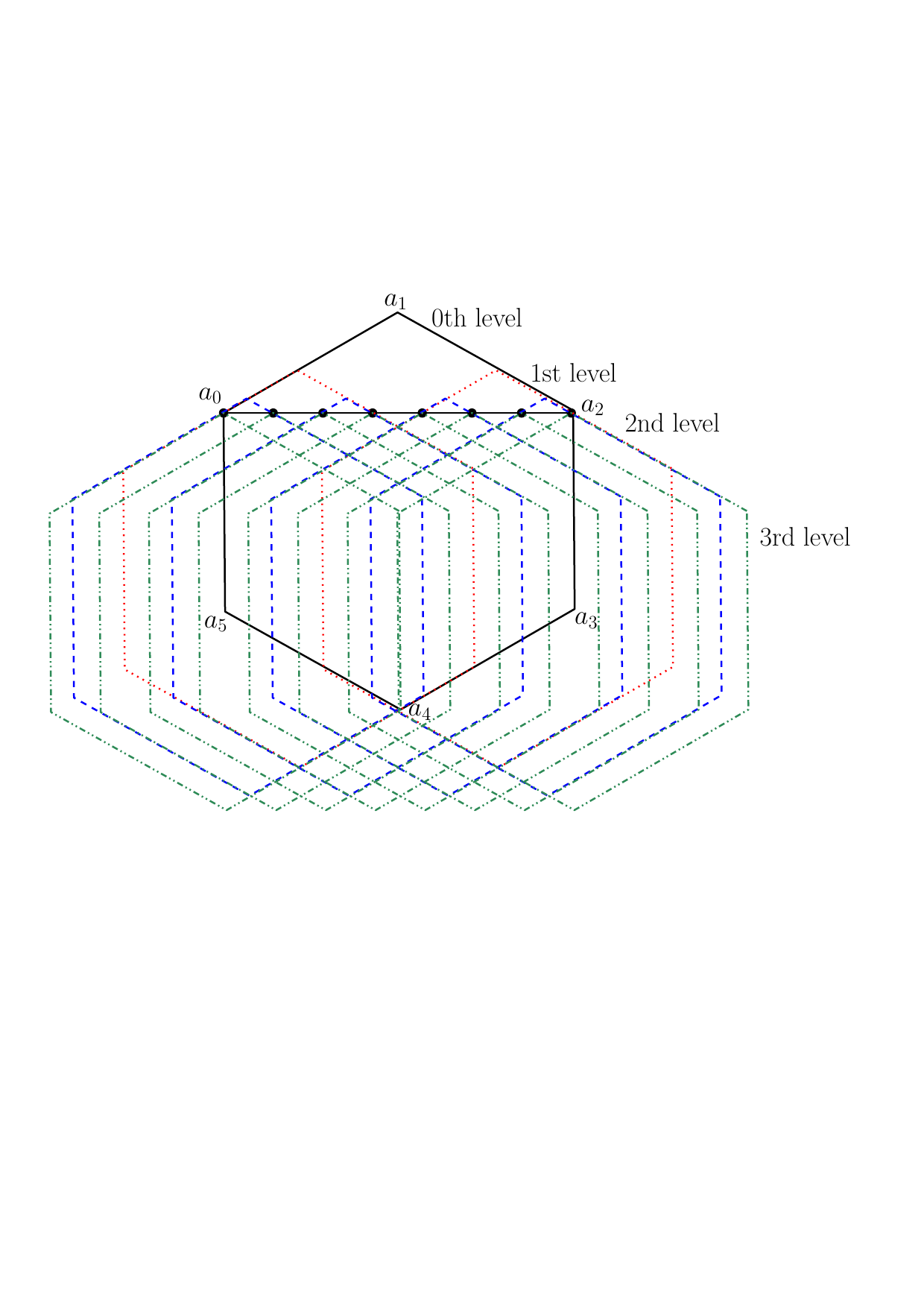}
        % \caption{}
        % \label{fig: 11}
    \end{subfigure}
    \hfill
    \begin{subfigure}[b]{0.49\textwidth}
        \centering
        \includegraphics[page=4, width=40 mm]{Random_Lower_Bound.pdf}
        % \caption{}
        % \label{fig: 22}
    \end{subfigure}
    \caption{Construction of the tree $\T$ and $\T_B$ when $d=2$ and $M=16$. The red and blue colors represent the $1$st and $2$nd level block nodes, respectively. Here, $c^{1,1}_{1,1}=(0, 0)$, $c^{1,1}_{2,1}=(16, 0)$, $c^{2,1}_{1,1}=(8, 8)$ and so on.}
    \label{fig:lower homothetic hypercube}
    \end{figure}

    Now, we describe the construction of the tree $\T$. The tree $\T$ can be visualized as a complete $2^d$-ary tree $\T_B$ of height $\log_4 M$ such that each node of $\T_B$, called \emph{block node}, itself represents a complete binary tree of height $d$. See Figure~\ref{fig:lower homothetic hypercube} for an example. For each $k\in[\log_4 M]$ and $l\in[2^{d(k-1)}]$, let $N^{k,l}_B$ denote the $l$th block node at the $k$th level of $\T_B$, where at each level, we number the block nodes from left to right. So, the $2^d$ children block nodes of $N^{k,l}_B$ are $N^{k+1,2^d(l-1)+1}_B$, $N^{k+1,2^d(l-1)+2}_B,$ $\dots, N^{k+1,2^d(l-1)+2^d}_B$. Let  $\T^{k,l}_B$ be a complete binary tree corresponding to a block node $N^{k,l}_B$. For each $i\in[d]$ and $j\in[2^{i-1}]$, let $N^{k,l}_{i,j}$ denote the $j$th node at the $i$th level of the tree $\T^{k,l}_B$, where at each level, we number the nodes from left to right. Now, for each $i\in [d-1]$ and $j\in[2^{i-1}]$, two children nodes of an internal node $N^{k,l}_{i,j}$ of $\T^{k,l}_B$ are $N^{k,l}_{i+1,2j-1}$ and $N^{k,l}_{i+1,2j}$, respectively. Next, for $k<\log_4 M$, we describe the parent-child relation between a leaf of a tree $\T^{k,l}_B$ with the root of a tree corresponding to the child block node of $N^{k,l}_B$. 
    For a leaf node $N^{k,l}_{d,j}$ of $\T^{k,l}_B$, where $j\in [2^{d-1}]$, two children are $N^{k+1,2^d(l-1)+2j-1}_{1,1}$ and $N^{k+1,2^d(l-1)+2j}_{1,1}$, respectively.
    % So, consider a 
    % leaf node $N^{k,l}_{d-1,j}$ of a block node $N^{k,l}_B$ for some $j\in [2^{d-1}]$. Then, the two children nodes of $N^{k,l}_{d-1,j}$ are $N^{k+1,2^d(l-1)+2j-1}_{0,1}$ and $N^{k+1,2^d(l-1)+2j}_{0,1}$.
   Thus, $\T$ is a complete binary tree consisting of nodes $N^{k,l}_{i,j}$, where $k\in[\log_4 M]$, $l\in[2^{d(k-1)}]$, $i\in[d]$ and $j\in[2^{i-1}]$, such that $N^{1,1}_{1,1}$ is the root of it. As a result, the height $h$ of $\T$ is $ h  = d\log_4 M= \frac{d}{2}\log_2 M$.
    % The height $h= (d-1)\log_4 M+\log_4 M-1= \frac{d}{2}\log_2 M -1$.
 % \vspace{-5mm}
 %    \begin{align*} 
 %       .
 %    \end{align*}
    % \vspace{-5mm}
    % \begin{align*} 
    %     h= \text{ Height of }T & = \text{Height of a block node of }T_B \times (\text{Height of }T_B + 1) + \text{Height of }T_B \\ 
    %     & = (d-1)\log_4 M+\log_4 M-1= \frac{d}{2}\log_2 M -1
    % \end{align*}

    Corresponding to each node $N_{i,j}^{k,l}$, we have a hypercube having a {width} $\frac{M}{4^{k-1}}$ centered at a point $c^{k,l}_{i,j}\in \IR^d$. Now, we explain the construction of the centers of all the hypercubes. This will be done recursively.
    \begin{itemize}
\item  First, we assume that $c^{1,1}_{1,1}$ is at the origin.

\item Now, consider an internal node $N^{k,l}_{i,j}$ of $\T^{k,l}_B$, where $k\in[\log_4 M]$, $l\in[2^{d(k-1)}]$, $i\in[d-1]$ and $j\in[2^{i-1}]$. Recall that the two children nodes of $N^{k,l}_{i,j}$ are $N^{k,l}_{i+1,2j-1}$ and $N^{k,l}_{i+1,2j}$, respectively. We define $c^{k,l}_{i+1,2j-1}=c^{k,l}_{i,j}+\frac{M}{4^{k-1}}{\bf e}_{i}$ and $c^{k,l}_{i+1,2j}=c^{k,l}_{i,j}-\frac{M}{4^{k-1}}{\bf e}_{i}$, where ${\bf e}_i$ is the $i$th standard unit vector in $\IR^d$. 
    In other words, the center $c^{k,l}_{i+1,2j-1}$ (respectively, $c^{k,l}_{i+1,2j}$) corresponding to the left (respectively, right) child of $N^{k,l}_{i,j}$ is a translated copy of $c^{k,l}_{i,j}$, differing only in the $i$th coordinate. In particular, for the left (respectively, right) child, this translation is by $\frac{M}{4^{k-1}}{\bf e}_{i}$ (respectively, $-\frac{M}{4^{k-1}}{\bf e}_{i}$). 

    \item 
    Next, consider a leaf node $N^{k,l}_{d,j}$ of $\T^{k,l}_B$, where $k<\log_4 M$, $l\in[2^{d(k-1)}]$ and $j\in[2^{d-1}]$. Recall that the two children nodes of $N^{k,l}_{d,j}$ are $N^{k+1,2^d(l-1)+2j-1}_{1,1}$ and $N^{k+1,2^d(l-1)+2j}_{1,1}$, respectively. To ensure that the hypercubes 
    % $\sigma_{N^{k+1,2^d(l-1)+2j-1}_{0,1}}$ and $\sigma_{N^{k+1,2^d(l-1)+2j}_{0,1}}$ 
    corresponding to the nodes $N^{k+1,2^d(l-1)+2j-1}_{1,1}$ and $N^{k+1,2^d(l-1)+2j}_{1,1}$ are contained inside the intersection of the hypercubes 
    % $\sigma_{N^{k,l}_{d-1,j}}$ 
    corresponding to the nodes lying in the path from the root $N^{k,l}_{1,1}$ to the leaf $N^{k,l}_{d,j}$, we define $c^{k+1,2^d(l-1)+2j-1}_{1,1}=\frac{c^{k,l}_{1,1}+c^{k,l}_{d,j}}{2}+\frac{M}{2^{2k-1}}{\bf e}_{d}$ and $c^{k+1,2^d(l-1)+2j}_{1,1}=\frac{c^{k,l}_{1,1}+c^{k,l}_{d,j}}{2}-\frac{M}{2^{2k-1}}{\bf e}_{d}$. 
    % In other words, the center $c^{k+1,2^d(l-1)+2j-1}_{0,1}$ (respectively, $c^{k+1,2^d(l-1)+2j}_{0,1}$) corresponding to the left (respectively, right) child of $N^{k,l}_{d-1,j}$ are differing only in the positive (respectively, negative) $d$th coordinate from the point $\frac{1}{2^{k+1}}c^{k,l}_{d-1,j}$.
    In other words, the center $c^{k+1,2^d(l-1)+2j-1}_{1,1}$ (respectively, $c^{k+1,2^d(l-1)+2j}_{1,1}$) corresponding to the left (respectively, right) child of $N^{k,l}_{d,j}$ is a translated copy of the point $\frac{c^{k,l}_{1,1}+c^{k,l}_{d,j}}{2}$, differing only in the $d$th coordinate. In particular, for the left (respectively, right) child, this translation is by $\frac{M}{2^{2k-1}}{\bf e}_{d}$ (respectively, $-\frac{M}{2^{2k-1}}{\bf e}_{d}$).
    \end{itemize}

    % Specifically, let $N$ be a $i$th label node of $\T^{k,l}_B$ let $\sigma$ and $\sigma'$ be two hypercubes centered at points $c$ and $c'$ corresponding to nodes $N$ and $N'$ of $\T^{k,l}_B$ such that $N'$ is a child of $N$. Then, we have $c'=c+sign(\sigma)\frac{M}{4^{k-1}}{\bf e}_{i}$
    
    % In other words, the center $c^{k,l}_{i+1,2j-1}$ (respectively, $c^{k,l}_{i+1,2j}$) corresponding to the left (respectively, right) child of $N^{k,l}_{i,j}$ is differing only in the positive (respectively, negative) $(i+1)$th coordinate from the center $c^{k,l}_{i,j}$.

    From the construction of hypercubes, it is easy to observe the following lemmas (for proof, we refer to Appendix~\ref{App_obs: homothetic child node relation} and~\ref{App_claim:lb_hyper}).

    \begin{lemma}\label{obs: homothetic child node relation}
        Let $N$, $N'$, and $N''$ be three nodes of the tree $\T$ such that $N'$ and $N''$ are the two children of $N$. Also, let $\sigma'$ and $\sigma''$ be two hypercubes corresponding to the nodes $N'$ and $N''$, respectively. Then, $\sigma'$ and $\sigma''$ are interior disjoint. 
    \end{lemma}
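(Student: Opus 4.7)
The plan is to do a clean case split on whether the parent $N$ is an internal node of some block-subtree $\T^{k,l}_B$, or a leaf of $\T^{k,l}_B$ (whose two children are roots of block-subtrees one level deeper in $\T_B$). In both cases the children are assigned the same width, so interior disjointness reduces to checking a single one-dimensional separation of centers.

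In the first case, write $N = N^{k,l}_{i,j}$ with $i \in [d-1]$. Both children $N^{k,l}_{i+1,2j-1}$ and $N^{k,l}_{i+1,2j}$ still sit at level $k$ of $\T_B$, so by construction their associated hypercubes have width $\frac{M}{4^{k-1}}$, i.e.\ side length $\frac{2M}{4^{k-1}}$. Their centers are $c^{k,l}_{i,j}\pm \frac{M}{4^{k-1}}\mathbf{e}_i$, agreeing in every coordinate except the $i$-th, where they differ by $\frac{2M}{4^{k-1}}$. Since this gap equals the side length, the two hypercubes meet only in the hyperplane $x_i = c^{k,l}_{i,j}(x_i)$ and hence have disjoint interiors.

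In the second case, write $N = N^{k,l}_{d,j}$ with $k<\log_4 M$. The two children are the roots of the block-subtrees $\T^{k+1,2^d(l-1)+2j-1}_B$ and $\T^{k+1,2^d(l-1)+2j}_B$, so their hypercubes both have width $\frac{M}{4^{k}}$ (side length $\frac{2M}{4^k}=\frac{M}{2^{2k-1}}$). Their centers differ only in the $d$-th coordinate, by $2 \cdot \frac{M}{2^{2k-1}} = \frac{M}{2^{2k-2}} = \frac{4M}{4^k}$, which strictly exceeds the side length $\frac{2M}{4^k}$. Thus the $d$-th coordinate projections of the two hypercubes are disjoint intervals, and the hypercubes themselves are interior disjoint (in fact disjoint).

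The argument is essentially bookkeeping; the only thing to be careful about is that the two sub-cases use different translation magnitudes ($\frac{M}{4^{k-1}}\mathbf{e}_i$ versus $\frac{M}{2^{2k-1}}\mathbf{e}_d$) and that the children in the second case belong to the next level of $\T_B$, which halves the width. No real obstacle is expected; the statement is an immediate consequence of the recursive definition of the centers $c^{k,l}_{i,j}$.
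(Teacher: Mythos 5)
Your proof is correct and takes essentially the same route as the paper's: a two-way case split on whether $N$ is an internal node or a leaf of a block-subtree, followed by the observation that in each case the children's centers differ in exactly one coordinate by an amount at least the common side length. The arithmetic and the conclusion match the paper's Appendix~B.1 argument.
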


    % \noindent Now, we have the following claim (for proof, we refer to Appendix~\ref{App_claim:lb_hyper}).

\begin{lemma}\label{claim:lb_hyper}
Let $\T^{k,l}_B$ be a tree for some $k\in[\log_4 M]$ and $l\in[2^{d(k-1)}]$. Let ${\cal J}=(\sigma_u)_{u=1}^{d}$ be a sequence of hypercubes having a width $w=\frac{M}{4^{k-1}}$ corresponding to nodes along a path from the root $N_{1,1}^{k,l}$ to a leaf $N^{k,l}_{d,j}$ of the tree $\T^{k,l}_B$ for some $j\in[2^{d-1}]$. Then, we have the following:
\begin{enumerate}
    \item[(i)] The common intersection region $Q=\cap_{u=1}^{d} \sigma_u$ contains two interior disjoint hypercubes $H_1$ and $H_2$ having width $\frac{w}{2}$ centered at the points $\frac{c^{k,l}_{1,1}+c^{k,l}_{d,j}}{2}+\frac{w}{2}{\bf e}_{d}$ and $\frac{c^{k,l}_{1,1}+c^{k,l}_{d,j}}{2}-\frac{w}{2}{\bf e}_{d}$, respectively.
    % centered at  $c^{k+1,2^d(l-1)+2j-1}_{0,1}=c^{k,l}_{0,j}+\sum_{v=1}^{d-1} sign(v)\frac{M}{2^{k+1}} {\bf e}_{v}+\frac{M}{2^{k+1}}{\bf e}_{d}$ and $c^{k+1,2^d(l-1)+2j}_{0,1}=c^{k,l}_{0,j}+\sum_{v=1}^{d-1} sign(v)\frac{M}{2^{k+1}} {\bf e}_{v}-\frac{M}{2^{k+1}}{\bf e}_{d}$, respectively.
    
    \item[(ii)] If $k<\log_4 M$, let $\T_B^{k+1,l'}$ be a tree rooted at a child of the leaf node $N^{k,l}_{d,j}$ of the tree $\T_B^{k,l}$, where  $l'\in \{2^d(l-1)+2j-1, 2^d(l-1)+2j\}$. Let ${\Pi}=(\sigma'_u)_{u=1}^{d}$ be a sequence of hypercubes corresponding to a path from the root to a leaf of the tree $\T_B^{k+1,l'}$. Then, the union of the hypercubes $\cup_{u=1}^{d} \sigma'_u$ is contained in $\cap_{u=1}^{d} \sigma_u$.
    % For $k+1\in[\log_4 M-1]$, let ${\cal J}=(\sigma_u)_{u=0}^{d-1}$ be a sequence of hypercubes having a width $K=\frac{M}{4^k}$, corresponding to nodes along a path from the root to one of the leaves of the tree $T^{k,l}_B$. 
    % Let $N^{k,l}_{d-1,j}$ be a leaf node of the tree $T_B^{k,l}$ is  for some $j\in[2^{d-1}]$. The leaf node $N^{k,l}_{d-1,j}$ of the tree $T_B^{k,l}$ has $N_l=N^{k+1,2^d(l-1)+2j-1}_{0,1}$ (respectively, $N_r=N^{k+1,2^d(l-1)+2j}_{0,1}$) as the left (respectively, right) child. Let ${\cal K}=(\sigma_v)_{v=d}^{2d-1}$ (respectively, ${\cal K}'=(\sigma_v)_{v=d'}^{2d-1'}$)  be a sequence of hypercubes having a width $\frac{w}{4}$, corresponding to nodes along a path $P$ from the root to one of the leaves of the tree $T_l=T^{k+1,2^d(l-1)+2j-1}_B$ (respectively, $T_r=T^{k+1,2^d(l-1)+2j}_B$). Then, $\cup_{v=d}^{2d-1} \sigma_v$ (respectively, $\cup_{v=d'}^{2d-1'} \sigma_v$) contained in $\cap_{u=0}^{d-1} \sigma_u$.
\end{enumerate}
\end{lemma}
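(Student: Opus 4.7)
My approach is a coordinate-by-coordinate analysis for both parts. Write $c_1 := c^{k,l}_{1,1}$ and $c_d := c^{k,l}_{d,j}$. The key structural observation is that inside the block tree $\T^{k,l}_B$, each step from level $i$ to level $i+1$ translates the center by $\epsilon_i w\,{\bf e}_i$ for some sign $\epsilon_i \in \{-1,+1\}$ selected by the path, so along the root-to-leaf path each of the first $d-1$ coordinates of the center is shifted at exactly one level while the $d$-th coordinate is never shifted; in particular $c_d = c_1 + w\sum_{i=1}^{d-1} \epsilon_i\,{\bf e}_i$. For part (i), this immediately gives, for each $p \in \{1,\dots,d-1\}$, that the level-$p$ hypercube has $p$-th coordinate range $[c_1(x_p)-w,\,c_1(x_p)+w]$ while every hypercube at level $>p$ is shifted by $\epsilon_p w$ in that coordinate, so the common $p$-th intersection is a length-$w$ interval with midpoint $(c_1(x_p) + c_d(x_p))/2$. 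In the $d$-th coordinate no shift occurs, so the common intersection is the full length-$2w$ interval about $c_1(x_d)$. Hence $Q$ is an axis-aligned box of side lengths $(w,\dots,w,2w)$ centered at $(c_1 + c_d)/2$, and splitting $Q$ along the hyperplane $x_d = c_1(x_d)$ yields the two interior disjoint side-$w$ (i.e.\ width $w/2$) hypercubes $H_1, H_2$ with the stated centers.

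For part (ii), I will first check by direct substitution, using $w = M/4^{k-1} = M/2^{2k-2}$, that the formula $c^{k+1,l'}_{1,1} = (c_1+c_d)/2 \pm (M/2^{2k-1})\,{\bf e}_d$ equals $(c_1+c_d)/2 \pm (w/2)\,{\bf e}_d$, which is exactly the center of $H_1$ or $H_2$; call this center $c'$. The hypercubes in $\T^{k+1,l'}_B$ have width $w' = w/4$, and along any root-to-leaf path there the center is obtained from $c'$ by at most $d-1$ shifts, each of magnitude $w/4$ and each in a distinct coordinate from $\{1,\dots,d-1\}$. Redoing the coordinate-wise analysis at this finer scale, each such hypercube has every coordinate range in $[c'(x_j)-w/2,\,c'(x_j)+w/2]$: a shifted coordinate contributes the range $[c'(x_j),\,c'(x_j)+w/2]$ or its reflection, and an unshifted coordinate contributes $[c'(x_j)-w/4,\,c'(x_j)+w/4]$. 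Therefore every such hypercube fits inside the side-$w$ cube centered at $c'$, i.e.\ inside $H_1$ or $H_2$; taking the union along the path and invoking part (i) gives $\bigcup_u \sigma'_u \subseteq H_i \subseteq \bigcap_u \sigma_u$.

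I do not anticipate a real obstacle: both parts reduce to elementary interval arithmetic once the one-shift-per-coordinate structure is spotted. The only things that need care are the two numerical alignments that make everything fit cleanly, namely that the between-block translation $M/2^{2k-1}$ equals exactly $w/2$ (placing the child-block root precisely at the center of $H_1$ or $H_2$) and that the factor-$4$ width drop between consecutive blocks gives exactly the margin needed to fit the sub-block hypercubes inside $H_i$.
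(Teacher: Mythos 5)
Your proof is correct and takes essentially the same approach as the paper: both arguments establish the "one shift per level, in a fresh coordinate" structure (the paper codifies this as equations for $c^{k,l}_{u,j'}$ and $\sigma_u$ with $sign(\cdot)$ factors; you use $\epsilon_i$), both deduce that $Q$ is the hyperrectangle with side lengths $(w,\dots,w,2w)$ centered at $(c_1+c_d)/2$, and both prove (ii) by noting the child-block root lands exactly at the center of $H_1$ or $H_2$ and its descendants, being width-$w/4$ cubes shifted by at most $w/4$ per coordinate, stay inside the width-$w/2$ cube $H_i$. The coordinate-by-coordinate interval reasoning you use and the paper's iterative $\sigma_1\cap\cdots\cap\sigma_u$ computation are two presentations of the same calculation.
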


    \noindent  Note that the width of a hypercube corresponding to a leaf node of $\T$ is $4$ (by putting the value of $k=\log_4 M$ in $\frac{M}{4^{k-1}}$). As a result, due to Lemma~\ref{claim:lb_hyper},  the intersection of hypercubes corresponding to a sequence $\I\in \S$ contains a hypercube having a width $2$. Since a hypercube having width $2$ must contain an integer point, we have the following lemma.
    
    % As a result, one can have the following claim. 
    % \red{give an argument why points of $\Z^d$ will be in the intersection region.}

    \begin{lemma}\label{obs: homothetic opt off}
        Let ${\cal I}=(\sigma_i)_{i=1}^h\in \S$ be a sequence of hypercubes having widths in the range $[1, M]$ corresponding to nodes of a path from the root to a leaf of the tree $\T$, where $h$ is the height of the tree $\T$. Then, their intersection $\cap_{i=1}^h \sigma_i$ must contain an integer point.
    \end{lemma}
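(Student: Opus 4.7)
The plan is to exploit the block-tree decomposition of $\T$ and invoke Lemma~\ref{claim:lb_hyper} in a telescoping fashion. A root-to-leaf path in $\T$ of length $h = \frac{d}{2}\log_2 M$ can be split into $\log_4 M$ consecutive ``segments'', one per level of the block tree $\T_B$; the $k$th segment consists of the $d$ hypercubes $(\sigma_i^{(k)})_{i=1}^{d}$ corresponding to a root-to-leaf path inside some block tree $\T_B^{k,l_k}$. Let $Q_k = \bigcap_{i=1}^d \sigma_i^{(k)}$ denote the intersection coming from the $k$th segment, so that
\begin{equation*}
\bigcap_{i=1}^{h} \sigma_i \;=\; \bigcap_{k=1}^{\log_4 M} Q_k.
\end{equation*}

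First I would show, by induction on $k$, that $Q_{k+1} \subseteq Q_k$ for every $k < \log_4 M$. This is immediate from Lemma~\ref{claim:lb_hyper}(ii): the root of $\T_B^{k+1,l_{k+1}}$ is a child (in $\T$) of the leaf of $\T_B^{k,l_k}$ visited by the path, hence the hypercubes making up the $(k+1)$th segment all lie inside $Q_k$, and therefore their intersection $Q_{k+1}$ lies inside $Q_k$ as well. Iterating this containment collapses the big intersection to just the last segment:
\begin{equation*}
\bigcap_{i=1}^{h} \sigma_i \;=\; Q_{\log_4 M}.
\end{equation*}

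Next I would apply Lemma~\ref{claim:lb_hyper}(i) to the final block tree $\T_B^{\log_4 M, l_{\log_4 M}}$. Its hypercubes have width $w = \frac{M}{4^{\log_4 M - 1}} = 4$, and the lemma guarantees that $Q_{\log_4 M}$ contains a hypercube of width $w/2 = 2$, i.e.\ of side length $4$. Since any axis-aligned hypercube of side length at least $1$ contains a point of $\IZ^d$, this smaller hypercube, and hence $\bigcap_{i=1}^{h} \sigma_i$, contains an integer point, establishing the claim.

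I do not expect any real obstacle here: Lemma~\ref{claim:lb_hyper} was designed precisely so that (ii) gives the telescoping containment across block-tree boundaries and (i) gives an inscribed hypercube large enough to hit $\IZ^d$ at the bottom of the tree. The only care needed is in bookkeeping: verifying that consecutive segments along the path indeed share the ``leaf-to-root'' parent-child relation required by Lemma~\ref{claim:lb_hyper}(ii), which is just the definition of how the block trees $\T_B^{k,l}$ are glued together to form $\T$.
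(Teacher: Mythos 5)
Your proof is correct and follows essentially the same route as the paper: the paper's (terse) argument also rests on Lemma~\ref{claim:lb_hyper}, with part~(ii) collapsing the intersection to the last block and part~(i) supplying the inscribed hypercube of width $2$ (side length $4$) that must contain an integer point. You have merely spelled out the telescoping bookkeeping that the paper leaves implicit.
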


    Due to Lemma~\ref{obs: homothetic opt off}, for any sequence $\I\in \S$, the size of the hitting set produced by an optimum offline algorithm over the input sequence~$\I$ is 1. Now, consider any online algorithm ALG. Let $C_i$ be the cost of the algorithm ALG at the $i$th level. Clearly, $\mathbb{E}(C_1)=1$. We will show that $\mathbb{E}(C_i)=\frac{1}{2}$ for $i\in [h]\setminus\{1\}$. While it is possible, in principle, for an algorithm to select several points at each level, it can be seen that this does not help in total cost reduction. W.l.o.g., we can assume that at each level, ALG selects only one point if the currently introduced hypercube is not hit by the solution set maintained by ALG.
    Due to Lemma~\ref{obs: homothetic child node relation}, this point may lie either in the left or in the right children node but not both. Thus, with probability $\frac{1}{2}$, the algorithm ALG chooses either a point or no point at $i$th level. As a result, $\mathbb{E}(C_i)=\frac{1}{2}\times 1 + \frac{1}{2}\times 0= \frac{1}{2}$. Therefore, we have that $\mathbb{E}(\text{ALG})=\sum_{i=1}^h \mathbb{E}(C_i) = 1+\frac{1}{2}(h-1)=\frac{1}{2}(1+\frac{d}{2}\log_2 M)$. Hence,  we have the following theorem.

%\end{proof}

\begin{theorem}\label{thm: random lower bound homothetic hypercubes}
    The competitive ratio of any (deterministic or randomized) algorithm for hitting similarly sized homothetic hypercubes having widths in the range $[1, M]$ using points in $\IZ^d$ is $\Omega(d\log M)$.
\end{theorem}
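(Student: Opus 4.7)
The plan is to apply Yao's minimax principle: exhibit a distribution over input sequences on which every deterministic online algorithm has expected cost $\Omega(d\log M)$, while the offline optimum places only one point. This then lower bounds the competitive ratio of every randomized algorithm against an oblivious adversary.

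I would construct a random input via a complete binary tree $\T$ of height $\Theta(d\log M)$, where each node corresponds to a homothetic hypercube and each root-to-leaf path yields a nested sequence of hypercubes. The adversary samples a uniformly random path by, at each internal node, flipping a fair coin to present one of the two child hypercubes. The tree should be organized into $\log_4 M$ ``blocks'' stacked vertically: within a single block, $d$ consecutive binary decisions are made while keeping the hypercube width fixed, and between blocks the width shrinks by a factor of $4$. The $i$th decision within a block splits the current hypercube into two siblings by translating the center along $\pm \mathbf{e}_i$ (the $i$th standard basis direction), by an amount equal to the current width; this produces two interior-disjoint children that both lie in a common ``inner'' subregion of the parent. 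After all $d$ decisions in a block, the intersection across the block is an axis-aligned box that still contains two disjoint hypercubes of half the width when one more $\pm \mathbf{e}_d$ split is performed, yielding the hypercubes for the next block; this is precisely the content captured by Lemma~\ref{claim:lb_hyper}.

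With this construction, two bounds drop out. First, the leaf hypercubes have width $4$ (obtained from the geometric decay $M/4^{k-1}$ over $\log_4 M$ block levels), so by iterating Lemma~\ref{claim:lb_hyper} the intersection along any root-to-leaf path contains a hypercube of width $\ge 2$, which necessarily contains an integer point; hence a single point suffices offline (Lemma~\ref{obs: homothetic opt off}), giving $\opt = 1$. Second, at each level of $\T$ the two children are interior disjoint (Lemma~\ref{obs: homothetic child node relation}), so any single integer point can hit at most one of them. An optimal deterministic algorithm therefore places at most one new point per level, and this point is useful for the next step only if the adversary's coin flip matches the algorithm's ``guess''; this happens with probability $1/2$. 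Summing, the expected cost is $1 + \tfrac12(h-1) = \Theta(d\log M)$, and Yao's principle yields the bound.

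The main obstacle is the geometric coordination inside a block: one must choose the $d$ within-block translations and the single between-block translation so that simultaneously (a) siblings are interior disjoint at every level, (b) the intersection of the hypercubes along any partial root-to-leaf path still contains the hypercubes of all descendants, and (c) after $\log_4 M$ size reductions the final common intersection is large enough to contain an integer point. The translation magnitudes $\tfrac{M}{4^{k-1}}\mathbf{e}_i$ inside block $k$ and $\tfrac{M}{2^{2k-1}}\mathbf{e}_d$ between blocks are calibrated precisely so that the ``width versus offset'' accounting works out at every level; verifying this calibration is the routine but delicate core of Lemmas~\ref{obs: homothetic child node relation} and~\ref{claim:lb_hyper}, and everything else follows by a clean application of Yao.
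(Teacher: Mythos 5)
Your proposal matches the paper's proof essentially step for step: the same Yao's-principle framing, the same block-structured binary tree of height $d\log_4 M$ with within-block $\pm\mathbf{e}_i$ translations by $M/4^{k-1}$ and between-block $\pm\mathbf{e}_d$ translations by $M/2^{2k-1}$, the same use of Lemmas~\ref{obs: homothetic child node relation}, \ref{claim:lb_hyper}, and~\ref{obs: homothetic opt off}, and the same expected-cost computation $1+\tfrac12(h-1)$ against $\opt=1$. This is a correct reconstruction of the paper's argument with no substantive deviation.
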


%%% MAY KEEP IN FULL VERSION

% Now, observe that if the tree $\T_B$ is of height $1$, i.e., it has only one block node, then the tree $\T$ is a complete binary tree of height $d$, and each node of it represents a translated copy of a hypercube. As a result, we  have the following corollary.

% \begin{corollary}
%     The competitive ratio of any (deterministic or randomized) algorithm for hitting translated copies of a hypercube using points in $\IZ^d$ is $\Omega(d)$.
% \end{corollary}

%%% END

\section{Algorithm for Similarly Sized Homothetic Hypercubes in $\IR^d$ Using Points in $\mathbb{Z}^d$} \label{sec: hypercube_upperb}
% In this section, we give a randomized algorithm inspired from~\cite{DeS24b, DumitrescuT22} for hitting homothetic hypercubes in $\IR^d$ using points in $\mathbb{Z}^d$. Also, we present a randomized lower bound of $\Omega(d\log M)$ for the problem. It improves both the current best-known upper bound and deterministic lower bound of $25^d\log M$ and $\Omega(\log M)$, respectively~\cite{AlefkhaniKM23}. Throughout the section, if explicitly not mentioned, all distances are $L_{\infty}$ distances, and all hypercubes are axis-aligned.
% In this section, we give a randomized algorithm inspired from~\cite{DeS24b, DumitrescuT22} for hitting homothetic hypercubes in $\IR^d$ using points in $\mathbb{Z}^d$. It improves both the current best-known upper bound of $25^d\log M$, known due to Alefkhani et al.~\cite{AlefkhaniKM23}. Throughout the section, if explicitly not mentioned, all distances are $L_{\infty}$ distances, and all hypercubes are axis-aligned.\subsection{Upper Bound for $d\geq 3$}\label{hypercube_UB}
In this section, we consider the hitting set problem for similarly sized homothetic hypercubes in $\IR^d$ having {widths} in the range $[1, M]$ using points in $\mathbb{Z}^d$. Note that an axis-aligned hypercube's width is half of its side length.
% Since the value of $\alpha$ is $1$ for axis-aligned hypercubes, the deterministic algorithm $\ANC$ attains a competitive ratio of at most~$4^d({\lfloor}\log_{2} M{\rfloor}+1)$ (due to Theorem~\ref{thm:fat_ub}) for this problem. 
For this, we propose an $O(d^2\log M)$ competitive randomized algorithm $\LIR$. Throughout this section, if explicitly not mentioned, all distances are $L_{\infty}$ distances, and all hypercubes are axis-aligned. We use $\Q_s(\sigma)$ to denote the set $\sigma \cap (s\mathbb{Z})^d$.

For different values of $s\in \Z$, our algorithm $\LIR$ evokes a randomized iterative reweighting algorithm  $\RIR_s$ as a subroutine. The algorithm $\RIR_s$ considers a sequence of hypercubes of width $s$ as input and uses points of $(s\Z)^d$ to hit them. This algorithm is a generalization of an algorithm in~\cite{DeS24b}.  Here, first, we present a description of the algorithm $\RIR_s$ below.

\vspace{1mm}\noindent\textbf{Algorithm $\RIR_s$.} Let ${\cal I}^{(s)}$ be the set of hypercubes having width $s$, presented to the algorithm $\RIR_s$. Let $\A^{(s)}$ be the set of points chosen by our algorithm such that each hypercube in $\I^{(s)}$ contains at least one point from $\A^{(s)}$. The algorithm maintains two disjoint sets $\A_1^{(s)}$ and $\A_2^{(s)}$ such that $\A^{(s)}= \A_1^{(s)} \cup \A_2^{(s)}$. The algorithm also maintains another set $\B^{(s)}$ of points for bookkeeping purpose; initially, each of the sets $\I^{(s)},\A^{(s)}\text{ and }\B^{(s)}$ are empty. A weight function  $w$ over all integer points from ${(s\mathbb{Z})}^d$ is also maintained by the algorithm; initially, $w(p)=3^{-(d+1)}$, for all points $p \in (s\mathbb{Z})^d$. One iteration of the algorithm is as follows.
Let $\sigma$ be a new hypercube of width $s$; update $\I^{(s)} = \I^{(s)} \cup\{\sigma\}$. 
% \red{Note that $|\Q_s(\sigma)|$ is at least $2^d$ and at most $3^d$.}
\begin{enumerate}
    \item If the hypercube $\sigma$ contains a point from $\A^{(s)}$, then do nothing.
    \item Else if the hypercube $\sigma$ contains a point from $\B^{(s)}$, then let $p\in \B^{(s)}\cap\Q_s(\sigma)$ be an arbitrary point, and update $\A_1^{(s)} = \A_1^{(s)} \cup\{p\}$.
    \item Else if $\sum_{p\in \Q_s(\sigma)} w(p)\geq 1$, then let $p$ be an arbitrary point in $\Q_s(\sigma)$, and update $\A_2^{(s)} = \A_2^{(s)} \cup\{p\}$.
    \item Else, the weights give a probability distribution on $\Q_s(\sigma)$. Successively choose points from $\Q_s(\sigma)$ at random with this distribution in $\lceil\frac{5d}{2}\rceil$ independent trials and add them to $\B^{(s)}$. Let $p\in \B^{(s)}\cap\Q_s(\sigma)$ be an arbitrary point, and update $\A_1^{(s)} = \A_1^{(s)} \cup\{p\}$. Triple the weight of every point in $\Q_s(\sigma)$.
\end{enumerate}

% \noindent \blue{To prove the below lemma, we need to ensure that $|\Q_s(\sigma)|$ is at least $2^d$ and at most $3^d$.}
\begin{lemma}\label{thm:rir}\cite[Theorem~8]{DeS24b}
{For a given $s\in \Z$, the randomized algorithm $\RIR_s$ achieves a competitive ratio of~$O(d^2)$ for hitting hypercubes having width $s$ using points in $(s\mathbb{Z})^d$, where~$d\geq3$.} 
%\todo{Needs to be changed accordingly used in the proof.}
% {The expected number of points placed by the algorithm $\RIR_s$ from the set $(s\mathbb{Z})^d$, corresponding to a point in the offline optimum, for hitting hypercubes with width $s\in\IZ$, is bounded by $O(d^2)$.}
\end{lemma}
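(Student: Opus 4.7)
My plan is to reduce the claim to its $s = 1$ special case, which is exactly Theorem~8 of~\cite{DeS24b}, via a simple scaling argument rather than reproducing the full reweighting analysis. Consider the bijection $\varphi : \IR^d \to \IR^d$ given by $\varphi(x) = x/s$. This map sends $(s\Z)^d$ onto $\Z^d$ and sends any axis-aligned hypercube $\sigma$ of width $s$ (side length $2s$) to $\varphi(\sigma)$, an axis-aligned hypercube of width $1$. Because $\varphi$ preserves containment, it restricts to a bijection $\Q_s(\sigma) = \sigma \cap (s\Z)^d \to \varphi(\sigma) \cap \Z^d$, and hence also to a bijection between hitting sets of an input sequence $\I^{(s)}$ drawn from $(s\Z)^d$ and hitting sets of the scaled sequence $\varphi(\I^{(s)}) := \{\varphi(\sigma) : \sigma \in \I^{(s)}\}$ drawn from $\Z^d$. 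In particular, the two offline optima have equal size.

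Next, I would check that $\RIR_s$ is equivariant under $\varphi$: every branching decision among cases~1--4 depends either on containment of the current hypercube with respect to previously chosen points (cases~1--2) or on the weight sum $\sum_{p \in \Q_s(\sigma)} w(p)$ (cases~3--4), and the initial weight $3^{-(d+1)}$, the tripling rule, and the sampling distribution are specified purely by the abstract point--hypercube incidence structure, not by any metric length. Hence, when $\RIR_s$ on $\I^{(s)}$ and $\RIR$ on $\varphi(\I^{(s)})$ are given the same random tape, they take matching decisions step by step, and the set chosen by the former is exactly the $\varphi$-preimage of the set chosen by the latter. Consequently $|\A^{(s)}|$ is distributed identically to the size of the output of $\RIR$ on $\varphi(\I^{(s)})$.

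Finally, invoking Theorem~8 of~\cite{DeS24b}, which states that for $d\geq 3$ the unit-width algorithm satisfies $\mathbb{E}\bigl[|\RIR(\I)|\bigr] = O(d^2)\cdot \opt(\I)$ on sequences $\I$ of unit-width hypercubes hit by $\Z^d$, I would conclude
\[
  \mathbb{E}\bigl[|\A^{(s)}|\bigr] \;=\; \mathbb{E}\bigl[|\RIR(\varphi(\I^{(s)}))|\bigr] \;=\; O(d^2)\cdot \opt\bigl(\varphi(\I^{(s)})\bigr) \;=\; O(d^2)\cdot \opt\bigl(\I^{(s)}\bigr),
\]
which is the desired competitive bound. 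The only point that needs a quick sanity check---and in my view the sole minor obstacle---is confirming that the width/side-length convention used in~\cite{DeS24b} matches the one fixed in this section (width equals half the side length), so that a unit hypercube in their setting corresponds to the hypercube of width $1$ here; once the conventions align, the reduction is exact and no new combinatorial or probabilistic argument is required.
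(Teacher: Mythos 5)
Your scaling reduction is correct, and it matches the (implicit) route the paper takes: the paper does not actually reprove this lemma, it simply cites Theorem~8 of~\cite{DeS24b} after remarking that ``this algorithm is a generalization of an algorithm in~\cite{DeS24b}.'' What that remark leaves unstated is precisely the content of your argument---that the map $\varphi(x)=x/s$ carries $(s\Z)^d$ onto $\Z^d$, carries width-$s$ hypercubes onto width-$1$ hypercubes, preserves incidences and hence optima, and that $\RIR_s$ is decision-equivalent to the $s=1$ algorithm because cases~1--4 depend only on the point--hypercube incidence structure and the weights (the initialisation $3^{-(d+1)}$, the tripling rule, and the $\lceil 5d/2\rceil$-trial sampling involve no metric quantities). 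You are also right to flag the width-versus-side-length convention as the one thing to verify: this paper fixes ``width'' to mean half the side length, so a width-$s$ hypercube meets between $2^d$ and $3^d$ points of $(s\Z)^d$ (Observation~\ref{obs:points}), which is exactly the combinatorial invariant the analysis of~\cite{DeS24b} requires; once one checks that their ``unit hypercube'' carries the same meaning, the reduction is exact. In short, the paper treats the lemma as an essentially cited fact, and your write-up supplies the short scaling/equivariance argument that justifies the generalisation from $s=1$ to arbitrary $s$; nothing further is needed.
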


To prove the above, the following fact is used. For each hypercube $\sigma$, the value of $|\Q_s(\sigma)|$ is at least $2^d$ and at most $3^d$. This leads us to the following two critical conditions.
\vspace{1mm}
% To use algorithm $\RIR$ as a subroutine for solving the hitting set problem involving homothetic hypercubes having widths in the range $[1,M]$, we must ensure the following two critical conditions:
\begin{enumerate}
    \item \textbf{Layer Formation.}  The set of homothetic hypercubes needs to be partitioned into layers such that each hypercube within a particular layer contains at least $2^d$ points and at most $3^d$ points from $(s\mathbb{Z})^d$, where the value of $s$ will depend on the layer of the object.
     \item \textbf{Core Construction.} For each object $\sigma$ in a given layer, a core hypercube $cr(\sigma)$ needs to be constructed with a fixed  width $s$ such that the core hypercube  $cr(\sigma)$ contains the same set of points from $(s\mathbb{Z})^d$ as the hypercube $\sigma$, where the value of $s$ will depend on the layer of the object.
\end{enumerate} 
    % \item \textbf{Construction of Core.} The objects in any layer are not translated copies of each other. Thus, for each object $\sigma$ in a specific layer $L$, we construct the core $cr(\sigma)$ of $\sigma\in L$ 
    % core $cr(\sigma)$ is a hypercube having a width of $2^j$ (respectively, $3.2^{j-1}$) such that the core $cr(\sigma)$ contains the same points from $\mathbb{Z}_j$ (respectively, $\mathbb{Z}_j'$) as its parent hypercube $\sigma$.

Let's address the first condition, i.e., the layer formation. 
% As in Section~\ref{sec: fat}, we could partition the set of homothetic hypercubes into $\log M+1$ layers such that for each $j\in[\log M]\cup\{0\}$, the $j$th layer  contains hypercubes having widths in the range $[2^j,2^{j+1})$.  It is easy to observe that a hypercube (belonging to the $j$th layer) having a width greater than $3.2^{j-1}$  may contain more than $3^d$ points from $(2^j\mathbb{Z})^d$. 
% For illustration in $\IR^2$, see figure~\ref{}.  
% To address this, we introduce a slightly modified layering. Here, we partition the set of all homothetic hypercubes into $2(\log M+1)$ layers such that for each $j\in[\log M]\cup\{0\}$, the layer $L_j$ (respectively, $L_j'$) contains the hypercubes having widths in the range $[2^j,3.2^{j-1})$ (respectively, $[3.2^{j-1},2^{j+1}$).  We hit hypercubes in $L_j$ using points from $\mathbb{Z}_j=(2^j\mathbb{Z})^d$, while we hit hypercubes in $L_j'$ using points in $\mathbb{Z}_j'=(3.2^{j-1}\mathbb{Z})^d$.
% To address this, we introduce a slightly modified layering. 
We partition the set of all homothetic hypercubes into $2(\log M+1)$ layers such that for each $k\in[2\log M+1]\cup\{0\}$, when $k$ is even (respectively, odd), the layer $L_k$ contains the hypercubes having widths in the range $[2.2^{\frac{k}{2}-1},3.2^{\frac{k}{2}-1})$ (respectively $[3.2^{\frac{k-1}{2}-1},4.2^{\frac{k-1}{2}-1})$). 
Let $s(k)=2.2^{\frac{k}{2}-1}$ if $k$ is even, and $s(k)=3.2^{\frac{k-1}{2}-1}$ if $k$ is odd. 
We hit hypercubes in $L_k$ using points from $\mathbb{Z}_k=(s(k)\mathbb{Z})^d$.
% be the base of the layer $L_k$. The value of $s(k)$ is $2.2^{\frac{k}{2}-1}$, when $k$ is even, while the value of $s(k)$ is $3.2^{\frac{k-1}{2}-1}$, when $k$ is odd.

% Now, we observe the following (for proof, see Appendix~\ref{App_obs:points}).
% \begin{observation}\label{obs:points}
%     A hypercube $\sigma\in L_j$ (respectively, $\sigma\in L_j'$) contains at least $2^d$ and at most $3^d$ integer points from $\mathbb{Z}_j=(2^{j}\mathbb{Z})^d$ $($respectively, $\mathbb{Z}_j'=(3.2^{j-1}\mathbb{Z})^d)$.
% \end{observation}
% \red{Due to Observation~\ref{obs:points}, we ensure that any hypercube from a specific layer contains at least $2^d$ points and at most $3^d$ points from $(s\mathbb{Z})^d$, the value of $s$ will depend on the layer of the object.}
% Due to Observation~\ref{obs:points}, a hypercube $\sigma\in L_k$ (respectively, $\sigma\in L_j'$) contains at least $2^d$ and at most $3^d$ integer points from $\mathbb{Z}_j=(2^{j}\mathbb{Z})^d$ $($respectively, $\mathbb{Z}_j'=(3.2^{j-1}\mathbb{Z})^d)$.
Due to Observation~\ref{obs:points}, a hypercube $\sigma\in L_k$ contains at least $2^d$ and at most $3^d$ points from $\mathbb{Z}_k$. Since the objects in a layer are not translated copies of each other, we still cannot use the algorithm $\RIR_{s(k)}$ for hitting objects in layer $L_k$. 
Thus, we introduce the concept of core $cr(\sigma)$ of a hypercube $\sigma$ as follows. 
Let $\sigma\in L_k$. A core $cr(\sigma)$ of $\sigma$ is a hypercube having a width of $s(k)$ such that  $cr(\sigma)$ contains the same set of points from $\mathbb{Z}_k$ as the hypercube $\sigma$, i.e., $\Q_{s(k)}(\sigma)=\Q_{s(k)}(cr(\sigma))$. Notice that to hit an object $\sigma\in L_k$, it is sufficient to hit the core $cr(\sigma)$. 
This tackles the second condition.

% \todo{You may omit the below paragraph} 

% \red{Thus, we introduce the concept of core $cr(\sigma)$ of a hypercube $\sigma$ as follows. }
% Let $\sigma\in L_j$ (or, $\sigma\in L_j'$) for some $j\in \Z$. 
% A core $cr(\sigma)$ of $\sigma$ is a hypercube having a width of $2^j$ (respectively, $3.2^{j-1}$) such that  $cr(\sigma)$ contains the same set of points from $\mathbb{Z}_j$ (respectively, $\mathbb{Z}_j'$) as the hypercube $\sigma$, i.e., $\Q_{s}(\sigma)=\Q_{s}(cr(\sigma))$, where $s=2^j$ (respectively, $s=3.2^{j-1}$). Notice that to hit an object $\sigma\in L_j$, it is sufficient to hit the core $cr(\sigma)$. 
% This tackles the second condition.

% Let  ${\cal C}_j$ (respectively, ${\cal C}_j'$) be the collection of core $cr(\sigma)$ of $\sigma\in{\cal L}_j$ (respectively, ${\cal L}_j'$) having widths $2^j$ (respectively, $3.2^{j-1}$). 
% Since objects in ${\cal C}_j$ (respectively, ${\cal C}_j'$) are translated copies of each other and each of them contains at least $2^d$ and at most $3^d$ points from $\mathbb{Z}_j$ (respectively, $\mathbb{Z}_j'$), we can apply the algorithm $\RIR_s$ as a subroutine to hit objects in $L_j$ (respectively $L_j'$) using points in $\mathbb{Z}_j$ (respectively, $\mathbb{Z}_j'$).

% Now, before describing our algorithm $\LIR$, we introduce some important terminologies. 

\begin{figure}[htbp]
    \centering
    \includegraphics[page=3, width=80mm]{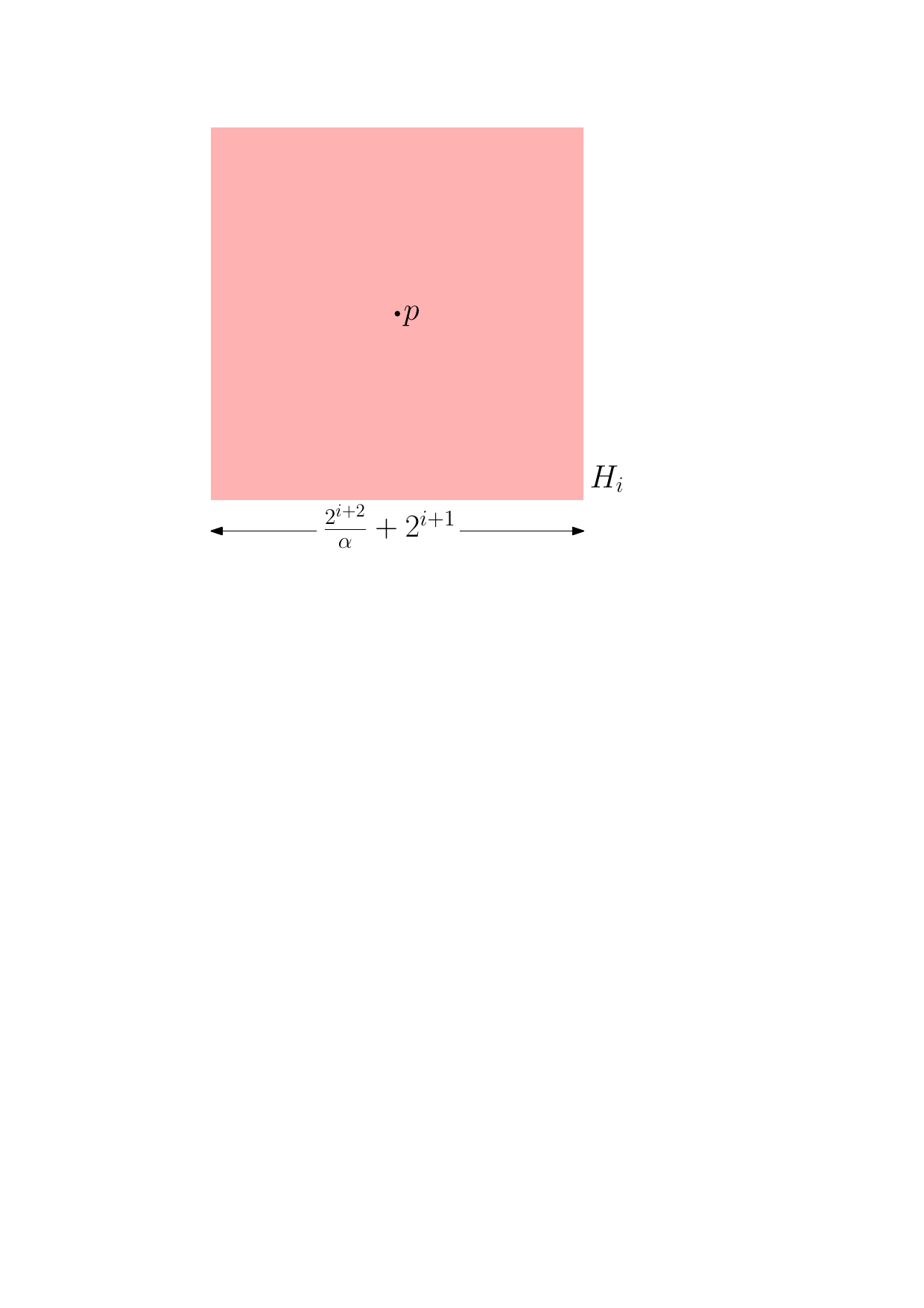}
    \caption{Illustration in plane to determine the core $cr(\sigma)$ of a hypercube $\sigma\in L_k$. Here, the black colored points denote the integer points in $\mathbb{Z}_k$, where $k\in\mathbb{Z}^{+}\cup\{0\}$. Here, the orange-colored square denotes $\sigma$. The red-colored square denotes the core $cr(\sigma)$, centered at $c$, of $\sigma$. The purple-colored rectangle denotes the convex hull $C$, centered at $c$, of integer points within $\Q_{s(k)}(\sigma)$.  When $\sigma$ contains (a) 4 (b) 6 (c) 9 (both $C$ and $cr(\sigma)$ are the same) points from $\mathbb{Z}_k$.}
    \label{fig:core}
\end{figure}
Consider a hypercube $\sigma$ belonging to a layer $L_k$.
Let $z_1,z_2,z_3,\ldots,z_l$ be $l$ integer points from~$\IZ_k$ satisfying the following condition. There exists an index $k'\in[d]$ such that $|z_i(x_{k'})-z_{i+1}(x_{k'})|=s(k)$, where $i\in[l-1]$,  and for all other indices $k''\in[d]\setminus\{k'\}$, we have  $z_i(x_{k''})=z_{i+1}(x_{k''})$, where $i\in[l-1]$. We term these points as \emph{successively consecutive points} from {$\IZ_k$}. Observe that $d_{\infty}(z_1,z_4)=3s(k)$. Since the width of $\sigma$ is strictly less than $\frac{3}{2}s(k)$, the maximum distance between any two points in $\sigma$ is strictly less than $3s(k)$. Thus, the hypercube $\sigma$ cannot contain four successively consecutive points from $\IZ_k$. Thus, we have the following.

\begin{claim}\label{clm_pair}
The distance between any pair of points of $\Q_{s(k)}(\sigma)$ is at most $2s(k)$. 
\end{claim}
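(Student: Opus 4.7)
The plan is to reduce the $L_{\infty}$ diameter bound to an axis-by-axis statement and then reuse the ``no four successively consecutive points'' argument just given, applied one coordinate at a time. Concretely, for any $p,q\in \Q_{s(k)}(\sigma)$ we have $d_{\infty}(p,q)=\max_{i\in[d]}|p(x_i)-q(x_i)|$, so it suffices to show that $|p(x_i)-q(x_i)|\le 2s(k)$ for every coordinate $i\in[d]$.

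To prove this per-coordinate bound, I would fix $i\in[d]$ and examine the projection of $\Q_{s(k)}(\sigma)$ onto the $i$-th axis. These projected values lie in $s(k)\mathbb{Z}$ and inside the projection of $\sigma$, which is an interval whose length equals the side length of $\sigma$. From the definition of the layer $L_k$ and of $s(k)$, a brief parity check shows that the width of $\sigma$ is strictly less than $\tfrac{3}{2}s(k)$ in both cases (for $k$ even the width lies in $[s(k),\tfrac{3}{2}s(k))$, and for $k$ odd it lies in $[s(k),\tfrac{4}{3}s(k))$), so the side length is strictly less than $3s(k)$. An interval of length strictly less than $3s(k)$ contains at most three consecutive multiples of $s(k)$; this is exactly the observation that four successively consecutive points on a single axis cannot all belong to $\sigma$, transposed from $\mathbb{R}^d$ to the $i$-th axis. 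Therefore the projected points span at most $2s(k)$, giving $|p(x_i)-q(x_i)|\le 2s(k)$.

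Taking the maximum over $i\in[d]$ gives $d_{\infty}(p,q)\le 2s(k)$, as claimed. I do not expect any real obstacle: the claim is essentially a coordinate-wise restatement of the diameter bound on $\sigma$ that has already been used in the preceding paragraph. The only slightly finicky step is the uniform inequality on the width of $\sigma$, but this is a direct substitution once one separates the cases $k$ even and $k$ odd.
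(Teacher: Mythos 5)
Your proof is correct and takes essentially the same route as the paper: the paper's preceding paragraph establishes that the width of a layer-$L_k$ hypercube is strictly below $\tfrac{3}{2}s(k)$, hence the side length is strictly below $3s(k)$, and then argues via ``successively consecutive points'' that the lattice points of $(s(k)\mathbb{Z})^d$ inside $\sigma$ differ in any coordinate by at most $2s(k)$. Your coordinate-wise projection argument is a more explicit rewording of that same observation.
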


Let $C$ be the convex hull of points in $\Q_{s(k)}(\sigma)$. Due to Claim~\ref{clm_pair}, it is easy to observe that the convex hull $C$ forms a hyperrectangle whose each side length is either $s(k)$ or $2s(k)$. Let $c$ be the center of $C$. Let $H$ be a closed hypercube centered at $c$ having side length $2s(k)$. Now, we claim the following (refer to Appendix~\ref{App_clm_core} for a proof).
% It is easy to observe that the convex hull $C$ is totally contained inside the hypercube $H$. 
% Now, we aim to show that $\Q_{2^j}(\sigma)=\Q_{2^j}(H)$. 

\begin{claim}\label{clm_core}
   $\Q_{s(k)}(\sigma)=\Q_{s(k)}(H)$.
\end{claim}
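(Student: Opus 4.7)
The plan is to factor the identity $\Q_{s(k)}(\sigma)=\Q_{s(k)}(H)$ through the convex hull $C$, first proving $\Q_{s(k)}(\sigma)=\Q_{s(k)}(C)$ by convexity of $\sigma$, and then proving $\Q_{s(k)}(C)=\Q_{s(k)}(H)$ by a coordinate-by-coordinate comparison that leverages the fact (from Claim~\ref{clm_pair} / the paragraph following it) that each side length of $C$ is either $s(k)$ or $2s(k)$.

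For the first equality I would argue as follows. Since the hypercube $\sigma$ is convex and $\Q_{s(k)}(\sigma)\subseteq\sigma$, the convex hull of $\Q_{s(k)}(\sigma)$ lies entirely inside $\sigma$, that is, $C\subseteq \sigma$. Hence every point of $(s(k)\IZ)^d$ lying in $C$ also lies in $\sigma$, so $\Q_{s(k)}(C)\subseteq\Q_{s(k)}(\sigma)$. The reverse inclusion is immediate because $\Q_{s(k)}(\sigma)\subseteq C$ by the definition of $C$. Therefore $\Q_{s(k)}(\sigma)=\Q_{s(k)}(C)$.

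For the second equality I would argue coordinate-wise. Note that the extreme vertices of the axis-aligned hyperrectangle $C$ lie in $\Q_{s(k)}(\sigma)\subseteq(s(k)\IZ)^d$, so the projection of $C$ on the $i$-th axis is an interval $[a_i,b_i]$ with $a_i,b_i\in s(k)\IZ$ and $b_i-a_i\in\{s(k),2s(k)\}$. The center $c$ has $i$-th coordinate $(a_i+b_i)/2$, so the projection of $H$ on the $i$-th axis is $[(a_i+b_i)/2-s(k),(a_i+b_i)/2+s(k)]$. In the case $b_i-a_i=s(k)$, this projection equals $[a_i-s(k)/2,\,b_i+s(k)/2]$, and the only lattice points from $s(k)\IZ$ it contains are $a_i$ and $b_i$, matching the lattice points from the projection of $C$. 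In the case $b_i-a_i=2s(k)$, the projection of $H$ is exactly $[a_i,b_i]$, again matching the lattice points $\{a_i,a_i+s(k),b_i\}$ of the projection of $C$. Since a lattice point in $(s(k)\IZ)^d$ lies in $C$ (respectively $H$) iff each of its coordinates lies in the corresponding projection, this gives $\Q_{s(k)}(C)=\Q_{s(k)}(H)$, and the claim follows by chaining the two equalities.

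I do not expect any single step to be a serious obstacle; the proof is essentially a bookkeeping exercise enabled by Claim~\ref{clm_pair}. The most delicate part is the coordinate-wise analysis, where one has to verify that even though the center $c$ of $C$ need not be a lattice point (in the case $b_i-a_i=s(k)$, $c$ is shifted by $s(k)/2$ from the lattice), enlarging to $H$ picks up no new lattice points along that axis because the two new half-intervals of length $s(k)/2$ on either side of $[a_i,b_i]$ are too short to reach the next lattice value.
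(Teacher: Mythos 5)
Your proof is correct and follows essentially the same route as the paper's. The paper also argues the two inclusions by passing through the hyperrectangle $C$: the forward inclusion $\Q_{s(k)}(\sigma)\subseteq\Q_{s(k)}(H)$ because $C\subseteq H$, and the reverse inclusion by the same coordinate-wise observation you make, namely that when a side of $C$ has length $s(k)$, extending it by $s(k)/2$ on each side to reach side length $2s(k)$ introduces no new points of $s(k)\IZ$. Your write-up is somewhat more explicit in factoring the identity through the two equalities $\Q_{s(k)}(\sigma)=\Q_{s(k)}(C)$ and $\Q_{s(k)}(C)=\Q_{s(k)}(H)$, and in spelling out the location of the lattice points on each axis, but the underlying idea matches the paper's.
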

% \begin{proof}
%     First, we demonstrate $\Q_{2^j}(\sigma)\subseteq\Q_{2^j}(H)$.
% Since the convex hull $C$ of $\sigma$ containing all the points of $\sigma$ from $\IZ_j$ is totally contained inside $H$, we have $\Q_{2^j}(\sigma)\subseteq\Q_{2^j}(H)$. Next, we prove $\Q_{2^j}(H)\subseteq\Q_{2^j}(\sigma)$. 
% % Recall that $H$ is a hypercube having side length $2^{j+1}$, and shares the same center. 
% If the side length of $C$ is $2^{j+1}$, then that will also work as a side length of $H$. If the side length of $C$ is $2^j$, we extend it by $2^{j-1}$ in both directions to make it of length $2^{j+1}$, considering it as the side of $H$. Since we are extending it only by $2^{j-1}$, it will not contain any new point from $\IZ_j$ apart from the points of $C$. Consequently,  we have $\Q_{2^j}(H)\subseteq\Q_{2^j}(\sigma)$.
% \end{proof}

This hypercube $H$ represents the core $cr(\sigma)$ of $\sigma$. For an illustration in $\IR^2$, see Figure~\ref{fig:core}. 
% Claim~\ref{clm_core} ensures that the core $cr(\sigma)$ contains the same points from $(2^j\mathbb{Z})^d$ as its parent hypercube $\sigma$.
Now, we describe our algorithm $\LIR$ for hitting homothetic hypercubes having widths in the range $[1, M]$.

\vspace{1mm}\noindent{\bf{Algorithm $\LIR$}}.\label{algo_LIR}
% For each $k\in\mathbb{Z}^{+}\cup\{0\}$, 
The algorithm maintains a set $\A^{(s(k))}$ for each layer $L_k$ that has been part of the input so far.
% For each layer, the algorithm uses a hitting set to hit the objects.
The algorithm maintains the set $\A=\cup_{k}\A^{(s(k))}$ as the solution set.
Upon the arrival of a hypercube $\sigma$ having a width $w$, if it is already hit by $\A$, then we ignore it. 
Otherwise, our algorithm, first, determines the layer number $k=\lfloor\log_{\frac{3}{2}} w\rfloor$. 
%Then, the object $\sigma$ belongs to the layer $L_k$.
% Then, determine the layer $L_j$ or $L_j'$ in which $\sigma$ belongs. 
Next, consider the core $cr(\sigma)$ as an input to $RIR_{s(k)}$ and
update $\A^{(s(k))}$ appropriately. 
%Finally, it will output the set $\A$ as a hitting set.

% For each $j\in\mathbb{Z}^{+}\cup\{0\}$, let $L_j$ and $L_j'$ be the layers containing all objects with widths in the range $[2^j,3.2^{j-1})$ and $[3.2^{j-1},2^{j+1})$, respectively.
% For each $j\in\mathbb{Z}^{+}\cup\{0\}$, the algorithm $\LIR$ maintain a set $\A^{(s)}$ for all the hypercubes of layer $L_j$  (respectively,  $L_j'$) that have been part of the input so far, where the value of $s$ will be $2^j$ (respectively, $3.2^{j-1}$) if hypercube lies in layer $L_j$ (respectively, $L_j'$).
% % For each layer, the algorithm uses a hitting set to hit the objects.
% The algorithm $\LIR$ also maintains the set $\A=\cup_{j}\cup_{s\in\{2^j,3.2^{j-1}\}}\A^{(s)}$.
% Upon the arrival of a new object $\sigma$ having width $w$ arrives, if it is already hit by $\A$, then ignore it. 
% Otherwise, our algorithm first determines the layer number $j=\lfloor\log w\rfloor$. Then, compute the value of $s$ depending on which layer $\sigma$ belongs among layers $L_j$ and $L_j'$.
% % Then, determine the layer $L_j$ or $L_j'$ in which $\sigma$ belongs. 
% Then, consider the core $cr(\sigma)$ as an input to $RIR_{s}$ and
% update $\A^{(s)}$ appropriately. Finally, it will output the set $\A$ as a hitting set.

% Now, we analyze the performance of the above algorithm.

\begin{theorem}\label{thm: hyp_ub}
    For hitting similarly sized homothetic hypercubes having widths in the range $[1, M]$, for some $M>1$, using points in $\mathbb{Z}^d$, the randomized algorithm $\LIR$ achieves a competitive ratio of~$O(d^2\log M)$, where $d\geq3$.
\end{theorem}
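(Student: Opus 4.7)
I would analyze $\LIR$ layer-by-layer and apply Lemma~\ref{thm:rir} within each layer. Fix an input sequence $\I$ and let $\opt^*$ denote the size of an offline optimum hitting set for $\I$ using $\mathbb{Z}^d$. The algorithm's output decomposes as $\A=\bigcup_k \A^{(s(k))}$, where $\A^{(s(k))}$ collects the points added by calls to $\RIR_{s(k)}$ while processing objects of $L_k$. Since $\LIR$ feeds the core $cr(\sigma)$---a hypercube of width exactly $s(k)$ with $\Q_{s(k)}(\sigma)=\Q_{s(k)}(cr(\sigma))$ by Claim~\ref{clm_core}---into $\RIR_{s(k)}$, the subsequence of cores forms a valid instance of the width-$s(k)$ hitting-set problem over the point set $\mathbb{Z}_k=(s(k)\mathbb{Z})^d$. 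Lemma~\ref{thm:rir} then yields $\mathbb{E}[|\A^{(s(k))}|]\leq O(d^2)\cdot \opt^{(s(k))}$, where $\opt^{(s(k))}$ is the offline optimum for hitting the cores (equivalently, the originals) of $L_k$ using points in $\mathbb{Z}_k$.

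Next, I would relate each $\opt^{(s(k))}$ to the global $\opt^*$ by a geometric charging argument. Given an optimum $S^*\subseteq \mathbb{Z}^d$ with $|S^*|=\opt^*$, for every $p\in S^*$ that hits some $\sigma\in L_k$, consider the unique $\mathbb{Z}_k$-grid cell containing $p$ and its $2^d$ corners, all in $\mathbb{Z}_k$. The key geometric claim is that every $\sigma\in L_k$ with $p\in\sigma$ must contain at least one of these corners. Indeed, $\sigma$ has width $w\geq s(k)$ and center $c_\sigma$ with $\|p-c_\sigma\|_\infty\leq w$; for each coordinate $i$, picking the corner-coordinate on the same side of $p_i$ as $(c_\sigma)_i$ yields a corner within $s(k)\leq w$ of $p_i$ and hence, via the triangle inequality applied on the chosen side, within $w$ of $(c_\sigma)_i$. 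Replacing each such $p\in S^*$ by the (at most $2^d$) corners of its $\mathbb{Z}_k$-cell therefore produces a valid $\mathbb{Z}_k$-hitting set for the objects of $L_k$, giving $\opt^{(s(k))}\leq O(\opt^*)$.

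Combining the two steps and summing over the $2\log M+2$ layers yields
\[
\mathbb{E}[|\A|] \;=\; \sum_k \mathbb{E}\bigl[|\A^{(s(k))}|\bigr] \;\leq\; O(d^2)\sum_k \opt^{(s(k))} \;\leq\; O(d^2\log M)\cdot \opt^*,
\]
as claimed. The main obstacle that I expect to grapple with is the geometric charging step: the straightforward corner argument naively contributes a factor of $2^d$ per layer, which would weaken the final bound to $O(d^2\cdot 2^d\log M)$. Removing this exponential overhead---so that a single $p\in S^*$ contributes only $O(\log M)$ to $\sum_k \opt^{(s(k))}$ rather than $O(2^d\log M)$---requires a sharper analysis, such as an amortized charging across the $O(\log M)$ layers in which $p$ participates, or bounding $|\A^{(s(k))}|$ against a fractional LP relaxation of the original problem whose optimum is already at most $\opt^*$ without the $2^d$ loss.
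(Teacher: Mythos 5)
Your overall decomposition---split the output by layer, process each layer via $\RIR_{s(k)}$ on the cores, and invoke Lemma~\ref{thm:rir} per layer---is exactly the structure of the paper's proof, and your verification that $cr(\sigma)$ is a legitimate width-$s(k)$ input over $\IZ_k$ is correct. The divergence, which you yourself flag as the ``main obstacle,'' is precisely where the gap lives. You propose to compare $\mathbb{E}[|\A^{(s(k))}|]$ against $\opt^{(s(k))}$, the optimum over $(s(k)\IZ)^d$, and then pay a conversion cost to relate $\opt^{(s(k))}$ to the global optimum $\opt^*$ over $\IZ^d$ via the $2^d$-corner replacement. Your geometric claim there is correct (for each coordinate, the corner of $p$'s $\IZ_k$-cell on the same side as the center $c_\sigma$ lies within $w$ of $c_\sigma$), but as you note, this route inherently loses a $2^d$ factor, and neither of your suggested repairs (amortizing across layers, or comparing against a fractional LP optimum) is the mechanism the paper uses.

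The paper sidesteps the conversion entirely by applying Lemma~\ref{thm:rir} in its \emph{per-point} form rather than as a black-box ratio against the $(s(k)\IZ)^d$-optimum. Concretely: fix $p\in\OO\subseteq\IZ^d$, and for each layer $L_k$, bound the expected number of points that $\RIR_{s(k)}$ places on account of cores of hypercubes $\sigma\in L_k$ with $p\in\sigma$. The underlying analysis in~\cite{DeS24b} (Theorem~8) is itself a per-point charging argument over a fixed pierce point, and it only needs the union of the width-$s(k)$ objects charged to that point to contain $O(3^d)$ points of $\IZ_k$; it does not require the pierce point to lie in $\IZ_k$. Since $\sigma$ has width $w<\tfrac{3}{2}s(k)$ and $p\in\sigma$, all such $\sigma$ (and their cores, which contain the same $\IZ_k$-points) cluster around $p$, so the per-point bound of $O(d^2)$ applies with $p$ as the reference point even though $p\notin\IZ_k$. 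Summing over the $2\lfloor\log M\rfloor+2$ layers gives $O(d^2\log M)$ per $p$, and then over $p\in\OO$ gives the theorem. So the missing idea is not an additional trick but a reinterpretation of Lemma~\ref{thm:rir}: charge the algorithm's cost in layer $k$ directly to the $\IZ^d$-points of $\OO$ rather than to a freshly constructed $\IZ_k$-optimum. With that change, the $2^d$ conversion loss never arises.
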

\begin{proof}
   Let $\cal I$ be a collection of homothetic hypercubes presented to the online algorithm.
    % Let ${\cal C}$ be the collection of core $cr(\sigma)$  for all $\sigma\in{\cal I}$.
    For each $k\in[2\lfloor \log M\rfloor+1]\cup\{0\}$, let ${\cal I}_k$ be the collection of all hypercubes in ${\cal I}$ belonging to the layer $L_k$. Let ${\cal C}_k$ be the collection of core $cr(\sigma)$ of $\sigma\in{\cal I}_k$ having widths $s(k)$. Let $\OO$ be an optimum offline hitting set for $\cal I$. Let $p$ be a point in  $\OO$.
    For hitting core objects from layer ${\cal C}_k$, due to Lemma~\ref{thm:rir}, corresponding to the point $p$, the expected number of points placed from $\IZ_k$ by the algorithm $\RIR_{s(k)}$ is~$O(d^2)$. Notice that there are total $2\lfloor \log M\rfloor+2$ layers of core objects.
    % and for each layer we are placing points independently. 
    As a result, for hitting hypercubes in ${\cal I}$ corresponding to the point $p$, the expected number of points placed from $\mathbb{Z}^d$ by the algorithm $\LIR$ is~$O(d^2\log M)$.
    Hence, we have $\mathbb{E}[|\A|]=O(d^2\log M)|\OO|$. Thus, the theorem follows.
\end{proof}

\section{Algorithm for Similarly Sized Fat Objects in $\IR^d$ Using Points in $\mathbb{Z}^d$}\label{sec: fat}
In this section, we study the hitting set problem for a range space $\Sigma=(\P, \cal S)$, where the point set $\P=\mathbb{Z}^d$ and the set ${\cal S}$ is a family of similarly sized $\alpha$-fat objects in $\IR^d$ having widths in the range $[1, M]$, where $M> 1$. For this, we propose a deterministic online algorithm $\ANC$ and prove that the algorithm achieves a competitive ratio of at most~${\lfloor\frac{2}{\alpha}+2\rfloor^d} \left(\lfloor\log_{2}M\rfloor+1\right)$.
Throughout the section, if explicitly not mentioned, all distances are $L_{\infty}$ distances, and all hypercubes are axis-aligned.
For simplicity, we use the term object to represent $\alpha$-fat object.

\vspace{1mm}
\noindent\textbf{Algorithm $\ANC$.}
For each $i\in\mathbb{Z}^{+}\cup\{0\}$, let $L_i$ be the $i$th layer containing similarly sized objects having widths in the range $[2^i,2^{i+1})$.
Upon the arrival of an object $\sigma$ centered at $c$ and having width $w$, we do the following
\vspace{1mm}\begin{enumerate}
    \item If $\sigma$ is stabbed by the existing hitting set, do nothing.
    \item Else, first, determine the layer {$L_i$} in which $\sigma$ belongs, where $i=\lfloor\log w\rfloor$. To hit $\sigma$, our online algorithm adds a point $r$ from  $ (2^{i+1}\mathbb{Z})^d$ as follows.
\end{enumerate} 
% \noindent
% \textbf{Note.} Upon the arrival of an object, the algorithm needs to know its center and width.\\
% \todo{Give a procedure to determine the nearest integer point}
% \noindent
% \textit{Procedure to Determine the Nearest Integer Point:}\label{procedure_NIP}
% Let $\sigma$ be an object centered at $c$ having width in the range $[2^i,2^{i+1})$.
% Due to Lemma~\ref{lem_correct}, an object having width in the range $[2^i,2^{i+1})$ contains at least one point from $(2^{i+1}\mathbb{Z})^d$, where $i\in\mathbb{Z}^{+}$.
% We will determine an integer point $r\in(2^{i+1}\mathbb{Z})^d$ such that the distance between $r$ and the center $c$ is minimized.
Note that, for each {$j\in[d]$}, the $j$th coordinate of the point $c$ can  be uniquely written as $c(x_j)=z_j+f_j$, where $z_j\in 2^{i+1}\mathbb{Z}$ and $f_j\in\left[0,2^{i+1}\right)$. 
% Recall that for any $\beta\in\IR$,  by  $\beta\mathbb{Z}$ we mean the set $\left\{\beta z \ \Big{|}\ z\in\mathbb{Z}\right\}$.
Based on the coordinates of the center $c$ of $\sigma$, we define the point $r$.
% (see Claim~\ref{clm: r closest point}).
For each $j\in[d]$, we set the $j$th coordinate of $r$ as follows. If $f_j\in\left[0,2^{i}\right)$, then $r(x_j)=z_j$, and if $f_j\in\left[2^{i},2^{i+1}\right)$, then $r(x_j)=z_j+2^{i+1}$.
% \[
% r(x_j)=
% \begin{cases}
%     z_j,& \text{if $f_j\in\left[0,2^{i}\right)$}\\
%    z_j+2^{i+1}, & \text{if $f_j\in\left[2^{i},2^{i+1}\right)$}.
% \end{cases}
% \]
Note that each coordinate of $r$ is an integral multiple of $2^{i+1}$. Thus, $r\in(2^{i+1}\mathbb{Z})^d$.
As per the construction of {the point} $r$, we have $|r(x_j)- c(x_j)|\leq 2^{i}$ for each $j\in[d]$. As a result, we have the following.

\begin{claim}\label{clm:drc}
    $d_{\infty}(r,c)=\max_{j\in[d]}|r(x_j)-c(x_j)|\leq 2^{i}$.
\end{claim}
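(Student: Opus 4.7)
The plan is to verify the bound by a direct case analysis on each coordinate, since the definition of $r$ is itself split into two cases based on the fractional part $f_j$. Fix an arbitrary coordinate $j \in [d]$. Using the unique decomposition $c(x_j) = z_j + f_j$ with $z_j \in 2^{i+1}\mathbb{Z}$ and $f_j \in [0, 2^{i+1})$, I will compute $|r(x_j) - c(x_j)|$ in each of the two branches used to define $r(x_j)$, and show that the value is bounded by $2^i$ in both.

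In the first case, $f_j \in [0, 2^i)$ so $r(x_j) = z_j$, and therefore $|r(x_j) - c(x_j)| = |z_j - (z_j + f_j)| = f_j$, which is strictly less than $2^i$ by the case hypothesis. In the second case, $f_j \in [2^i, 2^{i+1})$ so $r(x_j) = z_j + 2^{i+1}$, and therefore $|r(x_j) - c(x_j)| = |(z_j + 2^{i+1}) - (z_j + f_j)| = 2^{i+1} - f_j$. Since $f_j \geq 2^i$, this is at most $2^{i+1} - 2^i = 2^i$. In both branches the bound $|r(x_j) - c(x_j)| \leq 2^i$ holds.

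Since $j$ was arbitrary, taking the maximum over $j \in [d]$ gives $d_\infty(r, c) = \max_{j \in [d]} |r(x_j) - c(x_j)| \leq 2^i$, which is exactly the claim.

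There is no genuine obstacle here: the algorithm essentially performs coordinate-wise rounding of $c$ to the nearest element of $2^{i+1}\mathbb{Z}$ (rounding up when the fractional part is at least the midpoint $2^i$, rounding down otherwise), so the per-coordinate error is automatically at most half the grid spacing, i.e.\ $2^i$. The only thing to be careful about is handling the boundary $f_j = 2^i$ consistently with the case split given in the algorithm's definition, which the case analysis above does.
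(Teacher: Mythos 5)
Your proof is correct and follows the same route the paper takes: the paper simply notes that the construction of $r$ immediately gives $|r(x_j)-c(x_j)|\leq 2^i$ for each $j$, and you have spelled out the two-case computation (rounding down when $f_j<2^i$, rounding up when $f_j\geq 2^i$) that justifies that one-line assertion. No differences of substance.
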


One can prove that the  point $r$ is a closest point of $(2^{i+1}\mathbb{Z})^d$ from $c$ (for proof see Appendix~\ref{App_clm: r closest point}).
%
% \begin{claim}\label{clm: r closest point}
%     The point $r$ is the closest point of $(2^{i+1}\mathbb{Z})^d$ from the centre $c$.
% \end{claim} \todo{I believe this claim will not be referred to, and it is redundant.}
%
% \begin{proof}
%     Let $r'(\neq r)$ be any point of $(2^{i+1}\mathbb{Z})^d$.
%     % Then, the point $r'$ can be written as $ r'=a_1{\bf e}_1+a_2{\bf e}_2+\ldots+a_d{\bf e}_d$, where $a_j\in2^{i+1}\mathbb{Z}$ and ${\bf e}_j$ is the standard unit vector, for all $j\in[d]$.
%     Notice that for any $j\in[d]$, we have 
%     \begin{align*}
%     |r'(x_j)-c(x_j)|&=|r'(x_j)- z_j-f_j|\tag{$c(x_j)=z_j+f_j$, where $z_j\in 2^{i+1}\mathbb{Z}$ and $f_j\in\left[0,2^{i+1}\right)$}\\
%         &\geq ||r'(x_j)-z_j|-|f_j||\tag{Reverse triangle inequality, i.e., $|a-b|\geq||a|-|b||$}\\
%         &\geq|0-|f_j||=|f_j| \tag{Since $r'(x_j)\in(2^{i+1}\mathbb{Z})^d$, we have $|r'(x_j)-z_j|\geq 0$}\\
%         &\geq|r(x_j)-c(x_j)|\tag{As per the construction of $r$, $|r(x_j)-c(x_j)|\leq|f_j|$}. 
%     \end{align*}
%    
%     Since for each $j\in[d]$ we have $|r'(x_j)-c(x_j)|\geq |r(x_j)-c(x_j)|$,  the point $r$ is the closest point of $(2^{i+1}\mathbb{Z})^d$ from the center $c$. Hence, the claim follows.
% \end{proof}
% \vspace{2mm}
% \noindent\emph{Correctness of the Algorithm.}
\noindent
The following lemma ensures the correctness of our algorithm. One can prove this using Claim~\ref{clm:drc} (refer to Appendix~\ref{App_lem_correct}  for the proof).
\begin{lemma}\label{lem_correct}
    % Let $\sigma$ be an object centered at $c$ having width in the range $[2^i,2^{i+1})$, where $i\in\mathbb{Z}^{+}\cup\{0\}$. Then 
   The object $\sigma$ contains the  point $r$ from $(2^{i+1}\mathbb{Z})^d$.
\end{lemma}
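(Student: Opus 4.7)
The plan is to show that $r$ lies inside the closed $L_\infty$-ball $B_\infty(c, w)$ of radius $w$ about $c$, and that this ball is contained in $\sigma$. The conclusion $r \in \sigma$ is then immediate.

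For the first fact, Claim~\ref{clm:drc} already gives $d_\infty(r, c) \leq 2^i$. Since $\sigma$ belongs to layer $L_i$, its width $w$ satisfies $w \geq 2^i$ by the definition of the layers, so $d_\infty(r, c) \leq 2^i \leq w$, and hence $r \in B_\infty(c, w)$.

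For the second fact, I would invoke the definition of width from Section~\ref{notation}, namely $w = \min_{y \in \partial\sigma} d_\infty(c, y)$. Although an $\alpha$-fat object need not be connected or convex, $\sigma$ is compact with $c \in \sigma$. For any point $p$ with $d_\infty(c, p) \leq w$, I would consider the segment from $c$ to $p$ and the supremum $t^\star = \sup\{t \in [0,1] : c + t(p-c) \in \sigma\}$; by compactness the point $y = c + t^\star(p - c)$ lies in $\sigma$, and if $t^\star < 1$ then $y$ is a boundary point of $\sigma$ with $d_\infty(c, y) \leq d_\infty(c, p) \leq w$. Since the definition of $w$ forbids any boundary point of $\sigma$ at $L_\infty$-distance strictly less than $w$ from $c$, one concludes $t^\star = 1$, and so $p \in \sigma$. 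This yields $B_\infty(c, w) \subseteq \sigma$, completing the proof.

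The whole argument is short and essentially mechanical. The only mildly delicate step is the segment argument establishing $B_\infty(c, w) \subseteq \sigma$, which must accommodate the fact that $\sigma$ may be disconnected; everything else follows immediately from Claim~\ref{clm:drc} and the assignment of $\sigma$ to the layer $L_i$.
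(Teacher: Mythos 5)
Your proof is correct and follows the same route as the paper: combine Claim~\ref{clm:drc} with the containment $B_\infty(c,w)\subseteq\sigma$, where $w$ is the width of $\sigma$. The paper dispatches this latter containment with a one-line ``Notice that\ldots'', while you supply a segment-and-supremum argument making it rigorous even for disconnected $\alpha$-fat objects; the only cosmetic slip is that when $t^\star<1$ you actually get the strict inequality $d_\infty(c,y)=t^\star d_\infty(c,p)<w$, which is precisely what the appeal to the definition of width requires.
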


% \begin{proof}
%    Due to Claim~\ref{clm:drc}, we have $d_{\infty}(r,c)\leq 2^{i}$. Thus, a hypercube centered at $c$ having side length $2^{i+1}$ contains the point  $r$ from $(2^{i+1}\mathbb{Z})^d$. Notice that the object $\sigma$ centered at $c$ having a width at least $2^i$ contains a concentric hypercube of side length at least $2^{i+1}$ inside it. 
%     % Thus, the object $\sigma$ contains a concentric hypercube of side length at least $2^{i+1}$. 
%     % Now, we will show that a hypercube centered at $c$ will contains at least one point from $(2^{i+1}\mathbb{Z})^d$ inside it. 
%     % It is easy to observe that any hypercube having side length at least $2^{i+1}$ will contain at least one point from $(2^{i+1}\mathbb{Z})^d$.
%     Hence, the lemma follows.    
% \end{proof}

% Let $\Delta_i$ be a collection of hypercubes, having side length $2^{i+1}$, centered at the points of $(2^{i+1}\mathbb{Z})^d$.
% Now, before presenting the analysis of the algorithm, we first prove the following lemma that will play an essential role in the proof of the main theorem.
% \begin{lemma}
%     Any hypercube having side length $s$ can intersect at most $\Big{\lceil}\frac{s}{2^{i+1}}+2\Big{\rceil}^d$ disjoint hypercubes from $\Delta_i$, where $s\geq 2^{i+1}$.
% \end{lemma}
% \begin{proof}
    
% \myqed \end{proof}

% \vspace{2mm}
% \noindent\textit{Analysis of the Algorithm.}
\begin{theorem} \label{thm:fat_ub}
    The algorithm $\ANC$ achieves a competitive ratio of at most ${\lfloor\frac{2}{\alpha}+2\rfloor^d}$ $\left(\lfloor\log_{2}M\rfloor+1\right)$ for hitting similarly sized $\alpha$-fat objects in $\IR^d$ having  widths in the range $[1,M]$ using points in $\mathbb{Z}^d$.
\end{theorem}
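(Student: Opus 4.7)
The plan is a layered charging argument against an arbitrary optimal offline hitting set $\OO$. Since $\ANC$ decides whether to insert a point based only on which layer the arriving object lies in, I would partition the inserted points by layer: let $A_i \subseteq (2^{i+1}\Z)^d$ be the set of points inserted while processing objects in layer $L_i$, for $i \in \{0,1,\ldots,\lfloor \log_2 M \rfloor\}$. There are exactly $\lfloor \log_2 M \rfloor + 1$ such layers, so it suffices to show $|A_i| \leq \lfloor 2/\alpha + 2\rfloor^d |\OO|$ for each $i$. For each $r \in A_i$, let $\sigma_r \in L_i$ be the object whose arrival caused the insertion, and assign $r$ to an arbitrary $p(r) \in \OO \cap \sigma_r$, which exists because $\OO$ hits $\sigma_r$. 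The per-layer bound reduces to showing that for every fixed $p \in \OO$, the set $\{r \in A_i : p(r) = p\}$ has size at most $\lfloor 2/\alpha + 2\rfloor^d$.

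Fix $p$ and $i$, and let $r \in A_i$ with $p \in \sigma_r$. Let $c$ be the center and $w$ the width of $\sigma_r$; since $\sigma_r \in L_i$, we have $w \in [2^i, 2^{i+1})$. The object $\sigma_r$ is $\alpha$-fat, so its height is at most $w/\alpha < 2^{i+1}/\alpha$. For any $q \in \sigma_r$, the quantity $d_\infty(q, c)$ is bounded by the height: the maximum of $d_\infty(\cdot, c)$ over the compact set $\sigma_r$ must be attained on $\partial \sigma_r$, since an interior maximizer could be perturbed along the coordinate realizing the $L_\infty$ distance while remaining inside $\sigma_r$, contradicting maximality. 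Applying this to $p$ gives $d_\infty(p, c) < 2^{i+1}/\alpha$. Combining with Claim~\ref{clm:drc}, which gives $d_\infty(r, c) \leq 2^i$, the $L_\infty$ triangle inequality yields
\[
d_\infty(r, p) \;\leq\; d_\infty(r, c) + d_\infty(c, p) \;<\; 2^i + \frac{2^{i+1}}{\alpha} \;=\; 2^i\!\left(1 + \frac{2}{\alpha}\right).
\]

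Hence every such $r$ lies in the axis-aligned hypercube centered at $p$ of side length $2^{i+1}(1 + 2/\alpha)$. A direct lattice count finishes the step: in each coordinate the allowed interval has length $2^{i+1}(1 + 2/\alpha)$, and therefore contains at most $\lfloor 1 + 2/\alpha \rfloor + 1 = \lfloor 2/\alpha \rfloor + 2 = \lfloor 2/\alpha + 2 \rfloor$ integer multiples of $2^{i+1}$; taking the product over $d$ coordinates gives the claimed bound on $|\{r \in A_i : p(r) = p\}|$. Summing over $p \in \OO$ gives $|A_i| \leq \lfloor 2/\alpha + 2\rfloor^d |\OO|$, and summing over all $\lfloor \log_2 M \rfloor + 1$ layers produces the theorem.

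The main subtlety I anticipate is the step $d_\infty(p, c) \leq \text{height}(\sigma_r)$: the paper's definition of a fat object permits $\sigma_r$ to be non-convex and even disconnected, so this bound is not quite automatic, but the compactness-plus-coordinate-shift argument above covers it. Everything else reduces to Claim~\ref{clm:drc}, the $L_\infty$ triangle inequality, and an elementary lattice-point count, so the remaining calculations should be routine.
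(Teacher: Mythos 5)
Your proof is correct and follows essentially the same route as the paper: charge points layer by layer to the optimal point hitting the triggering object, bound $d_\infty(r,p)$ via Claim~\ref{clm:drc} and the $L_\infty$ triangle inequality, and count lattice points of $(2^{i+1}\Z)^d$ in the resulting hypercube around $p$. The only differences are cosmetic (you decompose by layer first and then by OPT point, and use an explicit assignment map to avoid double-counting) plus a welcome bit of extra rigor in justifying that $d_\infty(p,c)$ is bounded by the height for interior points $p$, which the paper uses implicitly.
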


% \begin{figure}[htbp]
%     \centering
%     \includegraphics[width=40 mm]{Shooting.pdf}
%     \caption{
%     % A open hypercube $H_i$ of side length $\frac{2^{i+2}}{\alpha}$, centered at $p$, contains all the centers of objects in $\I_{p,i}$. 
%     An open hypercube $H_i'$ of side length $\frac{2^{i+2}}{\alpha}+2^{i+1}$, centered at $p$, contains all points in $\A_{p,i}$}
%     \label{fig:fat}
% \end{figure}
% \todo{Remove this figure. It does not add any value.}

\vspace{-5mm}
\begin{proof}
    Let  $\I$ be a collection of similarly sized objects presented to the online algorithm.
    For each $i\in\{0,1,\ldots,\lfloor \log M\rfloor\}$, let $\I_i$ be the collection of all objects in $\I$ having widths in the range $\left[2^i, 2^{i+1}\right)$. Let $\A$ and $\OO$ be hitting sets returned by the algorithm $\ANC$ and an optimum offline algorithm, respectively, for the input sequence~$\I$. Let $p$ be a point in $\OO$. Let $\I_p\subseteq \I$ be the set of objects containing the point $p$. For each $i\in\{0,1,\ldots,\lfloor \log M\rfloor\}$, let $\I_{p,i}\subseteq \I_p$ be the collection of all objects having widths in the range $\left[2^i, 2^{i+1}\right)$. Let $\A_{p}\subseteq \A$ be the set of hitting points placed by $\ANC$ to hit all objects in $\I_p$. For each $i\in\{0,1,\ldots,\lfloor \log M\rfloor\}$, let $\A_{p,i}\subseteq \A_p$ be the set of hitting points explicitly placed by our algorithm $\ANC$ to hit objects in $\I_{p,i}$.
    It is easy to see that $\A=\cup_{p\in \OO}\A_p=\cup_{p\in \OO}\left(\cup_{i=0}^{\lfloor\log M\rfloor}\A_{p,i}\right)$.
    Therefore, the competitive ratio of our algorithm is upper bounded by $|\A_p|$.
    % $(\lfloor\log M\rfloor+1)\times\max_{p, i} |\A_{p, i}|$.

   Let $c$ be the center of an object $\sigma\in \I_{p,i}$. To hit $\sigma$, our algorithm adds a point $r\in(2^{i+1}\mathbb{Z})^d$ such that $d(r,c)\leq 2^i$ (due to Claim~\ref{clm:drc}). Since $c$ is the center of $\sigma\in \I_{p,i}$ having a width strictly less than $2^{i+1}$ and $p\in\sigma$, we have $d(c,p)<\frac{2^{i+1}}{\alpha}$.
    Now, using triangle inequality, we have $d(r,p)\leq d(r,c)+d(c,p)$. Consequently, we have $d(r,p)\leq \frac{2^{i+1}}{\alpha}+ 2^{i}$.
    % The distance between Since $a$ is the center of an object $\sigma\in \I_{p,i}$ containing the point $p$ and having width strictly less than $2^{i+1}$, the distance  between $a$ and $p$ is strictly less than \red{$\frac{2^{i+1}}{\alpha}$}. \todo{Check this. }  
    % Therefore, an open hypercube $H_i$ of side length $\frac{2^{i+2}}{\alpha}$, centered at $p$, contains all the centers of objects in $\I_{p,i}$ (see Figure~\ref{fig:fat}).
    % Since our algorithm places the nearest integer point from the center as the hitting point,  an open hypercube $H_i'$ of side length $\frac{2^{i+2}}{\alpha}+2^{i+1}$, centered at $p$, contains all points in $\A_{p,i}$ (see Figure~\ref{fig:fat}). 
     Hence, the interior of a hypercube of side length $\frac{2^{i+2}}{\alpha}+2^{i+1}$ centered at $p$ contains all points in $\A_{p,i}$. 
     % (see Figure~\ref{fig:fat}).
    % In the following claim, we will show that any open hypercube of side length $\frac{2^{i+2}}{\alpha}+2^{i+1}$ contains at most $(\lceil\frac{2}{\alpha}\rceil+2)^d$ integer points from $(2^{i+1}\mathbb{Z})^d$ 
Due to Observation~\ref{obs:points}, the interior of any hypercube of side length $2^{i+1}(\frac{2}{\alpha}+1)$ contains at most $\lfloor\frac{2}{\alpha}+2\rfloor^d$ points from $(2^{i+1}\mathbb{Z})^d$.
Thus, we have $|\A_{p,i}|\leq \lfloor\frac{2}{\alpha}+2\rfloor^d$. Recall that $|\A_p|=\Sigma_{i=0}^{\lfloor\log_2 M\rfloor}|\A_{p,i}|$. Thus, we have $|\A_p|\leq \lfloor\frac{2}{\alpha}+2\rfloor^d(\lfloor\log_2 M\rfloor+1)$. Hence, the theorem follows.
\end{proof}
    
% Now, it is easy to observe the following (for proof see Apendix~\ref{App_clm:1}).

% \begin{claim}\label{clm:1}
%     Any open hypercube of side length $(2^{i+1}(\lceil\frac{2}{\alpha}\rceil+2)$ contains at most $(\lceil\frac{2}{\alpha}\rceil+2)^d$ points from $(2^{i+1}\mathbb{Z})^d$.
% \end{claim}

% \todo{Why don't you merge Claim 4 and Observation 7? Make a general statement.}

% \begin{proof}
    % Let $\Delta_d$ be the collection of all integer points in $[0, 2^{i+1}(\lceil\frac{2}{\alpha}\rceil+2)]^d$.  Let $a$ and $b$ be two points in $\Delta_d$ such that the $j$th coordinate of $a$ and $b$ are $0$ and $2^{i+1}(\lceil\frac{2}{\alpha}\rceil+2)$, respectively. Note that the distance between $a$ and $b$ is $(2^{i+1}(\lceil\frac{2}{\alpha}\rceil+2)+1)^d$.  Thus, any open hypercube $H$ of side length $2^{i+1}(\lceil\frac{2}{\alpha}\rceil+2)$ containing $a$ cannot contain $b$. Thus, the hypercube $H$ can contain at most $(2^{i+1}(\lceil\frac{2}{\alpha}\rceil+2))^d$ points from $\Delta_d$. Hence, the hypercube $H$ contains at most $(\lceil\frac{2}{\alpha}\rceil+2)^d$ points from $\Delta_d\cap(2^{i+1}\mathbb{Z})^d$.  
% \end{proof}
 % \noindent Due to Claim~\ref{clm:1},

For balls in $\IR^d$, by putting the value of  $\alpha=\frac{1}{\sqrt{d}}$  in Theorem~\ref{thm:fat_ub}, we have the following.

\begin{corollary}\label{cor:ball}
    For hitting balls in $\IR^d$ having  radii in the range  $[1,M]$ using points in $\mathbb{Z}^d$, the algorithm $\ANC$ achieves a competitive ratio of at most ${\lfloor2\sqrt{d}+2\rfloor^d} \left(\lfloor\log_{2}M\rfloor+1\right)$.
\end{corollary}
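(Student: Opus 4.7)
The plan is to obtain Corollary~\ref{cor:ball} as a direct instance of Theorem~\ref{thm:fat_ub}: once I verify that every Euclidean ball in $\IR^d$ is a $(1/\sqrt{d})$-fat object in the sense of Section~\ref{notation}, substituting $\alpha = 1/\sqrt{d}$ into the competitive ratio $\lfloor 2/\alpha + 2\rfloor^d(\lfloor \log_2 M \rfloor + 1)$ from the theorem yields exactly the claimed bound.

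The only real content is the fatness computation. Fix a ball $B$ of radius $r$ centered (in the Euclidean sense) at $c$. For any boundary point $y$ we have $\sum_{j=1}^d (y_j - c_j)^2 = r^2$, so by the Cauchy--Schwarz inequality $\max_j |y_j - c_j|$ achieves its minimum value $r/\sqrt{d}$ precisely when $|y_j - c_j| = r/\sqrt{d}$ for every $j$, and its maximum value $r$ precisely when $y - c$ is aligned with a coordinate axis. Hence $\alpha(c) = (r/\sqrt{d})/r = 1/\sqrt{d}$. For any off-center point $x \in B$, writing $x = c + \delta u$ with $\delta > 0$, one picks the boundary point $y' = c - ru/\|u\|_2$ and checks that $d_\infty(x, y') > r$, inflating the denominator of $\alpha(x)$; simultaneously one perturbs the minimizer from the centered case to obtain a boundary point at $L_\infty$ distance strictly less than $r/\sqrt{d}$ from $x$, shrinking the numerator. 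Therefore $\alpha(x) < \alpha(c)$ for all $x \neq c$, which forces $\alpha(B) = 1/\sqrt{d}$, identifies $c$ as a center of $B$ in the paper's sense, and fixes the width of $B$ at $r/\sqrt{d}$.

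The remainder is bookkeeping. Radii in $[1, M]$ give widths in $[1/\sqrt{d}, M/\sqrt{d}]$, a range whose $\log_2$-span is $\log_2 M$, so $\ANC$ activates at most $\lfloor \log_2 M \rfloor + 1$ layers on such an input. Substituting $\alpha = 1/\sqrt{d}$ into Theorem~\ref{thm:fat_ub} then gives the competitive ratio $\lfloor 2\sqrt{d} + 2\rfloor^d(\lfloor \log_2 M \rfloor + 1)$, as claimed. The only genuine obstacle is the off-center half of the aspect-ratio verification; this can be made rigorous in a few lines using the monotonicity argument sketched above, after which everything is routine arithmetic.
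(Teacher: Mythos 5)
Your overall route matches the paper's: the paper simply plugs $\alpha = 1/\sqrt{d}$ into Theorem~\ref{thm:fat_ub} and moves on, whereas you actually attempt to verify the fatness claim, which is a reasonable thing to want to do.

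The center computation is correct: $\alpha(c) = (r/\sqrt{d})/r = 1/\sqrt{d}$. But your off-center argument, as written, has a concrete flaw. You fix $x = c + \delta u$, pick $y' = c - ru/\|u\|_2$, and assert $d_\infty(x,y') > r$. Computing, $y' - x = -(r + \delta\|u\|_2)\,u/\|u\|_2$, so $d_\infty(x,y') = (r + \delta\|u\|_2)\,\|u\|_\infty/\|u\|_2$, which need not exceed $r$. For instance, take $u$ in the all-ones direction and $\delta$ small: then $\|u\|_\infty/\|u\|_2 = 1/\sqrt{d}$ and $d_\infty(x,y') \approx r/\sqrt{d} < r$. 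So the ``inflating the denominator'' step does not go through with this particular $y'$; you would need to choose a different extremal boundary point (e.g.\ along a single coordinate axis opposed to the sign of the largest component of $x - c$, giving $\max_{y\in\partial B} d_\infty(x,y) = r + \|x-c\|_\infty$). Fortunately the whole off-center analysis is unnecessary: the definition in Section~\ref{notation} declares an object $\alpha$-fat if $\alpha(\sigma) = \max_x \alpha(x) \ge \alpha$, and $\alpha(c) = 1/\sqrt{d}$ already establishes $\alpha(B) \ge 1/\sqrt{d}$, which is all that Theorem~\ref{thm:fat_ub} requires. You do not need equality $\alpha(B) = 1/\sqrt{d}$, and you do not need to show $c$ is \emph{the} center (the paper allows an arbitrary choice among maximizers). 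Your final observation --- that radii in $[1,M]$ translate to widths in $[1/\sqrt d, M/\sqrt d]$, which has the same $\log_2$-span and hence the same $\lfloor\log_2 M\rfloor + 1$ layer count --- is a legitimate point the paper glosses over (though strictly it also means $\ANC$ is being invoked with layer indices $i<0$, i.e.\ sub-integer lattices, a wrinkle the paper does not address either). In short: the skeleton of your proof is right and mirrors the paper, the $\alpha(c)$ computation is correct and sufficient, but the off-center verification is both broken as stated and superfluous.
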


\section{Algorithm for Similarly Sized Homothetic Regular $k$-gons Using $n$  Points in $\IR^2$}\label{sec: homothetic regular k-gons}
In this section, we present an algorithm for the online hitting set problem for a range space $\Sigma=(\P,\cal S)$, where the set $\P\subset\IR^2$ is a collection of $n$ points and the set $\cal S$ is a family of similarly sized homothetic regular $k$-gons having diameters in the range $[1, M]$, where $M> 1$. Here, only the point set $\P$ is known in advance, and the objects in $\S$ will be introduced one by one. For this problem, we propose an $O(\log n\log M)$ competitive randomized algorithm $\HHR$. Note that our algorithm does not need to know the value of $M$ in advance. Algorithm $\HHR$ uses a deterministic algorithm $\ES$ (similar to the one in~\cite[Section~7]{EvenS14}) as a subroutine.

% A \emph{regular $k$-gon} is a convex polygonal object in $\IR^2$ with $k$ vertices and $k$ sides such that all sides have the same length and all internal angles are equal. For example, for $k=4$, the corresponding regular $k$-gon is known as square. 
Before describing the algorithms, we first define some notations. Throughout this section, by a regular $k$-gon, we mean a regular $k$-gon  with  $k\geq 4$. The \emph{center} and \emph{diameter} of a regular $k$-gon $\sigma$ are the center and diameter of the smallest disk enclosing $\sigma$, respectively.
%A point $y'$ is said to be a \emph{reflection of a point} $y$ in a plane over a point $x$ if the midpoint of the line segment $\overline{yy'}$ is $x$.
We use the symbol $\sigma(o)$ to represent an object $\sigma$ centered at a point $o$. 
{We use $-\sigma$ to denote a reflected copy of $\sigma$ through the origin (for an example, see Figure~\ref{fig:ref} in Appendix~\ref{sec: Structural Property of Regular k-gons}).}

% We refer to~\cite{DeJKS24, IckingKLM95}
% for the definition of \emph{$d_{C}(.,.)$ function}, where $C$ is a convex object. Here, we mention the following property of the $d_{C}$ function that we use.

% \begin{property}\label{prop:conv}
%     Let $C$ be a convex object, and $x,y$ be two points in $\IR^d$. Then,  $d_C(x,y)=d_{-C}(y,x)$, where $-C$ is a reflected copy of $C$ through the origin.
% \end{property}

 A \emph{vertex ranking} of a graph $G=(V,E)$ is a coloring $c: V \rightarrow [k]$ such that for any two distinct vertices $x$ and $y$, and for every simple path $P$ from $x$ to $y$, if $c(x)=c(y)$, then there exists an internal vertex $z$ in $P$ such that $c(x)<c(z)$. 
The \emph{vertex ranking number} of $G$ is the smallest integer $k$ such that $G$ admits a vertex ranking using only $k$ colors.

\subsection{Algorithm $\mathbf{\ES}$}\label{sec: algo-es}
%\vspace{1mm}\noindent\textbf{Algorithm $\mathbf{\ES}$.}
This $\ES$ is an algorithm for the hitting set problem for translated copies of a regular $k$-gon using a fixed set of $n$ points in $\IR^2$. 
% Before explaining the algorithm $\ES$, we introduce some important tools needed to describe the algorithm. Let $\sigma$ be a regular $k$-gon for $k\geq 4$.

\vspace{1mm}\noindent\textbf{Preprocessing Step.}
As part of preprocessing, we do the following. (i) Consider a partition $P_{\sigma}$ of the plane. (ii) For every tile $s$ in $P_{\sigma}$ such that $s\cap \P\neq \emptyset$, compute four sets of extreme points $V_{s,{\tau}}$  for $\tau\in [4]$.
(iii) For each $V_{s,{\tau}}$, construct a path graph $G_{s,\tau}$ over the set $V_{s,{\tau}}$ and then, compute a vertex ranking $c_{s,\tau}$ of the graph $G_{s,\tau}$ that uses $\lfloor\log_2(2|V_{s,\tau}|)\rfloor$ colors~\cite[Proposition~5]{EvenS14}.

% \vspace{1mm}
Now, we will describe the first two preprocessing steps in detail. The third step, i.e., computing a vertex ranking,
% $c_{s,\tau}$ for each set $V_{s,\tau}$, 
is the same as described in~\cite{EvenS14}.
%(see Appendix~\ref{sec: appendix algo-es}).

\vspace{1mm}\noindent\hspace{2mm} $\bullet$ \underline{Partitioning the Plane}:
% \todo{Motivation, why this tiling is the ideal one??}
Let $\sigma$ be a regular $k$-gon. Depending on the shape and size of the object $\sigma$, we define a parameter $\ell_{\sigma}$. Assuming the diameter of $\sigma$ to be 2 units, we set the value of $\ell_{\sigma}$ as follows. If $\sigma$ is a square (respectively, regular $k$-gon with $k\geq 7$ and   regular $k$-gon with $k\in\{5,6\}$),  the value of $\ell_{\sigma}$ is $\frac{1}{2}$ (respectively, $\frac{1}{2}$ and $\frac{1}{4}$).
Similar to~\cite{ChenKS09}, we partition the plane using square tiles each of side length $\ell_{\sigma}$ such that each point $p\in\P$ lies in the interior of a tile. It is easy to achieve such a partition, $P_{\sigma}$, since the set $\P$ is finite.
Let $m_{\sigma}$ be the maximum number of tiles of the partition $P_{\sigma}$ that any translated copy of $\sigma$ can intersect. 
We refer to Appendix~\ref{sec: appendix algo-es} for proof of the following observation.

\begin{observation}\label{obs:contain}
    The value of $m_{\sigma}\leq 25$ $($respectively, $23$ and $63)$, when  $\sigma$ is a square $($respectively, regular $k$-gon with $k\geq 7$ and  regular $k$-gon with $k\in\{5,6\})$.
\end{observation}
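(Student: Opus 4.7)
The plan is to upper-bound $m_\sigma$ in two steps: first count how many tiles a translated copy of the bounding box of $\sigma$ can overlap, and then refine by discarding tiles that $\sigma$ provably cannot reach. Since $\sigma$ has diameter $2$, it is contained in a closed disk of radius $1$, so its bounding box is an axis-aligned rectangle of dimensions at most $a\times b$ with $a,b\le 2$. A direct count shows such a rectangle overlaps at most $(\lceil a/\ell_\sigma\rceil+1)(\lceil b/\ell_\sigma\rceil+1)$ tiles of side $\ell_\sigma$. For the square case ($k=4$), plugging $a=b=2$ and $\ell_\sigma=1/2$ immediately yields $(4+1)^2=25$, matching the claim.

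For $k\ge 7$ with $\ell_\sigma = 1/2$ the bounding-box count again gives at most $25$, and I would refine it to $23$. Translating so that the center $(a,b)$ of $\sigma$ lies in $(0,1/2)^2$---the only case in which the arrangement genuinely spans $5\times 5$ tiles---the four corner tiles of the arrangement have closest points to $(a,b)$ equal to $(-1/2,-1/2)$, $(1,1)$, $(-1/2,1)$, and $(1,-1/2)$. A direct computation gives $(a+1/2)^2+(1-a)^2\ge 9/8$, and summing over both coordinates shows that the squared distances from $(a,b)$ to the closest points of any two diagonally opposite corner tiles sum to at least $9/4 > 2$. Hence at least one tile in each diagonal pair has closest point at distance strictly greater than $1$ from $(a,b)$, and therefore lies outside the disk of radius $1$ containing $\sigma$. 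Both diagonals together force at least two corner tiles to be missed, yielding $m_\sigma \le 25-2 = 23$.

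For $k\in\{5,6\}$ with $\ell_\sigma = 1/4$, the naive bounding-box bound $(\lceil 2/(1/4)\rceil+1)^2 = 81$ is too coarse. I would use the tighter bounding-box dimensions of a regular pentagon or hexagon of diameter $2$ (for instance, a hexagon admits a $2\times\sqrt{3}$ bounding box up to orientation, and a pentagon has its shorter bounding-box side of length $1+\cos(\pi/5) < 2$), together with the fact that each such polygon leaves triangular cut-corner regions inside its bounding box which are disjoint from $\sigma$. For each fixed orientation and grid alignment, enumerating the tiles lying entirely within these cut-corner regions and subtracting them from the bounding-box count then yields the uniform bound $m_\sigma \le 63$.

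The main technical obstacle is the case $k\in\{5,6\}$: carefully identifying the excluded cut-corner tiles over all orientations of $\sigma$ and all alignments of the grid is a tedious but elementary geometric case analysis. The square and $k\ge 7$ cases, by contrast, follow cleanly from the bounding-box count combined with the single disk-containment inequality above.
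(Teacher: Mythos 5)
Your proofs for the square and $k\geq 7$ cases are correct, and they follow essentially the same line as the paper: bound a translated copy of $\sigma$ inside a disk of radius $1$, note that this confines the intersected tiles to a $5\times 5$ subgrid of side-$\frac12$ cells (giving $25$), and then use the diameter constraint to rule out corner tiles. Your refinement for $k\geq 7$ is a clean variant of the paper's argument: the paper observes directly that the closest points of the two diagonally opposite corner tiles $a_{11}$ and $a_{55}$ are at distance $\frac{3}{\sqrt 2}\approx 2.12 > 2$ and hence cannot both meet a diameter-$2$ disk, whereas you parametrize by the center $(a,b)\in(0,\tfrac12)^2$ and show the sum of squared distances to each diagonal pair's closest points is at least $\tfrac{9}{4}>2$. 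Both versions are valid, eliminate the same two tiles, and give $m_\sigma\leq 23$.

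The genuine gap is in the $k\in\{5,6\}$ case, and you acknowledge it yourself. You describe a plan (bound by a tighter orientation-dependent $a\times b$ bounding box and subtract tiles lying in the cut-corner triangles), but you never actually carry out the enumeration that gets from the naive count of $81$ down to $63$. In a proof of an observation whose entire content is the numerical bound, leaving the hardest of the three numbers as "a tedious but elementary geometric case analysis" does not constitute a proof of that bound. Moreover, the approach you sketch is harder to verify than necessary: the bounding box and cut-corner triangles both vary with the orientation of $\sigma$, so your enumeration would have to be uniform over all rotations. The paper sidesteps that by bounding the pentagon/hexagon inside the circumscribed unit disk (which is rotation-invariant) and counting tiles that a unit disk can intersect on a side-$\frac14$ grid; that route mirrors the $k\geq 7$ argument you already carried out and avoids the orientation dependence entirely. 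To close your gap you would either have to complete the orientation-by-orientation enumeration you propose, or switch to the disk-based count and explicitly prune the $9\times 9$ bounding-box array down to the claimed bound.

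One minor stylistic point: in the square case, the bounding-box count $(\lceil 2/\ell_\sigma\rceil+1)^2 = 25$ is tight for $\sigma$ rotated $45^\circ$ (so that its bounding box really is $2\times 2$), and for the axis-aligned square (side $\sqrt 2$) a sharper count would hold; the paper uses the diameter argument uniformly for all orientations, which is slightly cleaner but yields the same $25$.
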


% \vspace{1mm}
\noindent\hspace{2mm} $\bullet$ \underline{Extreme Points}: 
% \noindent\emph{Super-Square of a Square Tile.} 
Depending on the shape and size of the object $\sigma$, we define a parameter~$t_{\sigma}$. Assuming the diameter of $\sigma$ as 2 units, we set the value of $t_{\sigma}$ as follows. If $\sigma$ is a square (respectively, regular $k$-gon with $k\geq 7$ and regular $k$-gon with $k\in\{5,6\}$), the value of $t_{\sigma}$ is~$\frac{5}{2}$ $\left(\text{respectively, } \frac{5}{2} \text{ and } \frac{9}{4}\right)$.
For each square tile $s$ from the partition $P_{\sigma}$, we define a \emph{super-square} $S$ that is a square concentric with the tile $s$ whose side length is $t_{\sigma}$. Let ${\cal D}_s$ be the set of all translated copies of $\sigma$ with non-empty intersection with the tile $s$.
It is easily observed that $S$  contains centers of all objects in ${\cal D}_s$. 
We divide the super-square $S$ into four quadrants, each square of side length $t_{\sigma}/2$.  For $\tau\in [4]$, let  $S_{\tau}$ denote a quadrant of $S$, and let $o_{\tau}$ denote its center (see Figure~\ref{fig:cone}).
For proof of the following observation, see Appendix~\ref{sec: appendix algo-es}.

\begin{observation}\label{obs:inter}
If $\sigma' \in{\cal D}_s$, then  $\sigma'\cap\{o_1,o_2,o_3,o_4\}\neq\emptyset$.
\end{observation}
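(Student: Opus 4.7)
The plan is to recast Observation~\ref{obs:inter} as a translation-invariant containment problem and then verify it by a short case analysis keyed to the three prescribed values of~$t_\sigma$. Write $c$ for the center of $\sigma'$. The preceding observation already places $c$ inside the super-square $S$, so there exists a quadrant $S_\tau$ with $c \in S_\tau$. Since $S_\tau$ is the axis-aligned square of side $t_\sigma/2$ centered at $o_\tau$, the displacement $v := o_\tau - c$ lies in the axis-aligned square
\[
Q_\sigma \;:=\; [-t_\sigma/4,\, t_\sigma/4]^2.
\]
Now $\sigma'$ is the translate of $\sigma$ by $c$, so $o_\tau \in \sigma'$ is equivalent to $v \in \sigma_0$, where $\sigma_0$ is a copy of $\sigma$ centered at the origin. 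Thus the task reduces to proving the purely geometric containment $Q_\sigma \subseteq \sigma_0$ in each of the three prescribed cases.

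For $\sigma$ an axis-aligned square of diameter~$2$, I would argue directly: $\sigma_0 = [-\sqrt{2}/2,\, \sqrt{2}/2]^2$ and $Q_\sigma = [-5/8,\, 5/8]^2$, so the containment is immediate from $5/8 < \sqrt{2}/2$. For $k \geq 5$ the plan is to bound $Q_\sigma$ by the inscribed disk of $\sigma_0$. By the paper's convention that the diameter of $\sigma$ equals the diameter of its smallest enclosing disk, the circumradius of $\sigma_0$ is $1$ and hence its apothem (the inradius) equals $\cos(\pi/k)$. It therefore suffices to check that the circumradius of $Q_\sigma$, namely $t_\sigma\sqrt{2}/4$, is at most $\cos(\pi/k)$. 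For $k \in \{5,6\}$ this reduces to $9\sqrt{2}/16 \leq \cos(\pi/5)$, i.e.\ $0.7955\ldots \leq 0.8090\ldots$; and for $k \geq 7$, to $5\sqrt{2}/8 \leq \cos(\pi/7)$, i.e.\ $0.8839\ldots \leq 0.9009\ldots$. Both hold with a small positive slack. In each case $v$ lies in the inscribed disk of $\sigma_0$, hence in $\sigma_0$, which gives $o_\tau \in \sigma'$.

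The main obstacle is that the $k = 4$ case \emph{cannot} be handled by the inscribed-disk argument: the inscribed disk of an axis-aligned square of diameter~$2$ has radius $\sqrt{2}/2 \approx 0.707$, strictly less than the circumradius $5\sqrt{2}/8 \approx 0.884$ of~$Q_\sigma$. This is precisely why one must fall back on the ad~hoc coordinate-wise check above, which uses the axis-alignment of both $\sigma_0$ and $Q_\sigma$. The specific values $t_\sigma = 5/2$ (squares and $k \geq 7$) and $t_\sigma = 9/4$ ($k \in \{5,6\}$) were clearly chosen so that the apothem just exceeds $t_\sigma\sqrt{2}/4$, so the resulting numerical inequalities are tight but follow from elementary trigonometry; no deeper geometric input is required.
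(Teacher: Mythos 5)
Your proposal is correct and follows essentially the same route as the paper's proof: both place the center of $\sigma'$ in a quadrant $S_\tau$ of the super-square, reduce the claim to bounding the maximum displacement $t_\sigma\sqrt{2}/4$ by the inradius $\cos(\pi/k)$ for $k\geq 5$, and handle $k=4$ separately via a coordinate-wise ($L_\infty$) check because the inscribed-disk bound fails there. Your translation-invariant reformulation ($Q_\sigma \subseteq \sigma_0$) is a cosmetic repackaging of the paper's ``distance from $S_\tau$ to $o_\tau$'' argument; the numerical inequalities you verify are exactly those in the paper.
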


% \noindent\emph{Extreme Points.}
For $\sigma'\in{\cal D}_s$, let $\tau_{s,{\sigma'}}=\min\{\tau\ |\ o_{\tau}\in \sigma'\}$. Due to Observation~\ref{obs:inter}, the value of $\tau_{s,{\sigma'}}$ is well-defined. Let ${\cal D}_{s,{\tau}}$ denote the set $\{ \sigma'\in{\cal D}_s\ |\ \tau_{s,{\sigma'}}=\tau\text{ and } \sigma'\cap s\cap\P\neq\emptyset\}$. 
For every tile $s$ from the partition $P_{\sigma}$ and every $\tau\in[4]$, we define a set $V_{s,\tau}$ of extreme points as follows:
$V_{s,{\tau}}=\{p\in\P\ |\ \exists\  \sigma'\in {\cal D}_{s,{\tau}}\text{ such that } p\in\partial\sigma' \text{ and } \sigma'\cap s\cap\P=\{p\}\}$ (see Figure~\ref{fig:extreme_point}).
Note that if $ \sigma'\in {\cal D}_{s,{\tau}}$ and $\sigma'\cap s\cap\P\neq\emptyset$, then $\sigma'\cap V_{s,{\tau}}\neq\emptyset$.

% \begin{figure}[htbp]
%     \centering
%     \includegraphics[page=1, width=50mm]{Hiiting Homothetic kgons.pdf}
%     \caption{Here, the point $v$ is an extreme point i.e. $v\in V_{s,1}$ due to the existence of a unit square $\sigma_v\in {\cal D}_{s,1}$. However, the point $u$ does not belong to $V_{s,1}$ because any unit square containing the points $u$ and $o_1$ must contain the point $v$.}
%     \label{fig:extreme_point}
% \end{figure} 

\vspace{1mm}
\noindent\emph{Identifying Extreme Points.}
Here, we use \emph{$d_{C}(.,.)$ function} (also known as \emph{convex distance function}), induced by a convex object $C$, and its following property~\cite{DeJKS24, IckingKLM95}.
%We refer to~\cite{DeJKS24, IckingKLM95} for the definition of it. 
% Here, we mention the following property of the $d_{C}$ function that we use.

\begin{property}\label{prop:conv}~\cite{DeJKS24}
    Let $C$ be a convex object centered at a point $c$, and let $x,y$ be two points in $\IR^d$. Then, we have the following properties:
    \begin{enumerate}
        \item[(i)] $x\in C$ if and only if $d_C(c,x)\leq 1$. In particular, $x\in \partial C$ if and only if $d_C(c,x)=1$.
        \item[(ii)] $d_C(x,y)=d_{-C}(y,x)$, where $-C$ is a reflected copy of $C$ through the origin.
    \end{enumerate}
\end{property}

Let $p\in\P\cap s$ be a point, and $o_{\tau}$ be the center of a quadrant of the super-square of the tile $s$. 
Let $\bm{\sigma'}(o_{\tau})$ and ${\bm\sigma'}({p})$ be two copies of $-\sigma$, centered at $o_{\tau}$ and  $p$, respectively. {Now, we define the Voronoi diagram for the points in $\P\cap s$ with respect to the $d_{\sigma}$ function, induced by the convex object $\sigma$}.
We define \emph{Voronoi cell} of $p\in \P\cap s$ as $\Cell({p})=\{x\in \IR^2\ |\  {d}_{\sigma}(x,p)\leq {d}_{\sigma}(x,q) \text{~for all } q(\neq p)\in \P\cap s\}$. 
The \emph{Voronoi diagram} is a partition on the plane imposed by the collection of cells of all the points in $\P\cap s$.
Note that one can compute this Voronoi diagram in $O(|\P\cap s|\log(|\P\cap s|))$ time~\cite{ChewD85}.
A point $x$ is in $\Cell({p})$  implies that if we place a homothetic (i.e., translated and scaled) copy $\sigma'$ of $\sigma$ centered at $x$ with a scaling factor $d_{\sigma}(x,p)$, then $\sigma'$ touches  $p$, and the interior of $\sigma'$ does not contain any point of $\P\cap s$. The following lemma offers a method to decide whether $p$ is an extreme point (without knowing the input set $\S$).
\begin{lemma}\label{clm:ext}
A point $p\in \P\cap s$ belongs to $V_{s,\tau}$ if and only if {$\bm{\sigma'}(o_{\tau})\cap \partial {\bm\sigma'}(p)\cap \interior{(\Cell(p))}$} is non-empty.
\end{lemma}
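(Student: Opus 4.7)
The plan is to reformulate each of the three defining conditions of $V_{s,\tau}$ as a statement about the center $x$ of a candidate witness $\sigma'=\sigma+x$, and then swap from $d_{\sigma}$-conditions to $d_{-\sigma}$-conditions via Property~\ref{prop:conv}(ii). This is exactly what introduces the reflected copy $\bm{\sigma'}$ into the claimed characterization.

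First, I would record the following dictionary for $\sigma'=\sigma+x$ and any $y\in\IR^2$, using Property~\ref{prop:conv}(i): $y\in\sigma'$ iff $d_{\sigma}(x,y)\leq 1$, and $y\in\partial\sigma'$ iff $d_{\sigma}(x,y)=1$. Under this dictionary, the condition $p\in V_{s,\tau}$ becomes the existence of an $x\in\IR^2$ such that (a) $d_{\sigma}(x,o_{\tau})\leq 1$, (b) $d_{\sigma}(x,p)=1$, and (c) $d_{\sigma}(x,q)>1=d_{\sigma}(x,p)$ for every $q\in(\P\cap s)\setminus\{p\}$. Applying Property~\ref{prop:conv}(ii), $d_{\sigma}(x,y)=d_{-\sigma}(y,x)$, rewrites (a) and (b) as $x\in o_{\tau}+(-\sigma)=\bm{\sigma'}(o_{\tau})$ and $x\in\partial(p+(-\sigma))=\partial\bm{\sigma'}(p)$, while (c) is precisely the defining condition $x\in\interior(\Cell(p))$.

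With this dictionary in hand, both directions of the biconditional follow in one line. For the forward direction, any witnessing $\sigma'=\sigma+x$ for $p\in V_{s,\tau}$ yields an $x$ lying in the claimed triple intersection, which is therefore non-empty. For the backward direction, any $x$ in the intersection defines $\sigma'=\sigma+x$, and the dictionary shows $o_{\tau}\in\sigma'$, $p\in\partial\sigma'$, and $\sigma'\cap s\cap\P=\{p\}$, so $\sigma'$ certifies $p\in V_{s,\tau}$.

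The main subtle point, and the principal obstacle in writing a fully careful argument, is the minimality clause hidden inside the definition of $\mathcal{D}_{s,\tau}$: strictly speaking one needs $\tau=\min\{\tau':o_{\tau'}\in\sigma'\}$, while the characterization only tests $o_{\tau}\in\sigma'$. I would dispose of this by applying the characterization for $\tau=1,2,3,4$ in this order and assigning each qualifying $p$ to $V_{s,\tau}$ for the smallest successful $\tau$; this recovers precisely the definition. Alternatively, one may observe that the subsequent hitting-set algorithm only requires a hitting-set-preserving superset of each $V_{s,\tau}$, which the characterization directly provides.
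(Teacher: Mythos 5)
Your proposal takes essentially the same route as the paper's own proof: translate the membership/boundary conditions on a witness $\sigma'=\sigma+x$ into $d_\sigma$-inequalities at the center $x$, flip them with Property~\ref{prop:conv}(ii) to $d_{-\sigma}$-inequalities, and recognize these as exactly the conditions $x\in\bm{\sigma'}(o_\tau)$, $x\in\partial\bm{\sigma'}(p)$, $x\in\interior(\Cell(p))$. The paper does this one direction at a time with $c$ in place of your $x$; you phrase it more uniformly as a ``dictionary'' and then read off both implications at once. That is a slightly cleaner organization but not a different argument.

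The one substantive difference is that you explicitly flag the minimality clause in ${\cal D}_{s,\tau}$ (i.e.\ $\tau=\tau_{s,\sigma'}=\min\{\tau':o_{\tau'}\in\sigma'\}$), which the backward direction does \emph{not} establish: finding $x$ in the triple intersection certifies $o_\tau\in\sigma+x$, not that $\tau$ is the smallest such index. You are right that this is a gap; in fact the paper's own proof of the converse has the same gap and does not acknowledge it. However, your first proposed remedy --- process $\tau=1,2,3,4$ in order and assign each $p$ only to the smallest successful $\tau$ --- would be incorrect: the sets $V_{s,\tau}$ are not pairwise disjoint (a single $p$ can legitimately be an extreme point for several $\tau$ via different witnesses), and dropping $p$ from the later $V_{s,\tau}$ would break the invariant that every $\sigma'\in{\cal D}_{s,\tau}$ with $\sigma'\cap s\cap\P\neq\emptyset$ intersects $V_{s,\tau}$, which the algorithm relies on. Your second remark is the right repair: the characterization computes a superset of $V_{s,\tau}$, and using a superset is harmless both for correctness (the object is still hit) and for the competitive bound (the vertex-ranking color count $\lfloor\log_2(2|V_{s,\tau}|)\rfloor\le\lfloor\log_2 2n\rfloor$ still holds). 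If you want the ``if and only if'' exactly as stated, you would instead need to intersect with the complement of $\bigcup_{\tau'<\tau}\bm{\sigma'}(o_{\tau'})$ and test whether $\bm{\sigma'}(o_{\tau})\cap\partial\bm{\sigma'}(p)\cap\interior(\Cell(p))\setminus\bigcup_{\tau'<\tau}\bm{\sigma'}(o_{\tau'})$ is non-empty.
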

\begin{proof}
For the forward part, let us assume that $p\in V_{s,\tau}$. Due to the definition of  extreme points, {there exists an object $\sigma'\in D_{s,\tau}$} such that 
$p\in\partial\sigma'$ and $ \sigma'\cap s\cap\P=\{p\}$. Let $c$ be the center of $\sigma'$. To show $\bm{\sigma'}(o_{\tau})\cap \partial {\bm\sigma'}(p)\cap \interior{(\Cell(p))}$ is non-empty, we will show that $c\in\bm{\sigma'}(o_{\tau})\cap \partial {\bm\sigma'}(p)\cap \interior{(\Cell(p))}$. 
Since $\sigma'$ does not contain any point other than $p$ from $\P\cap s$, the distance $d_{\sigma}(c,p)$ is strictly less than $d_{\sigma}(c,q)$, where $q(\neq p)$ is any point in $\P\cap s$. As a result, we have $c\in \interior(\Cell(p))$.
Since $o_{\tau}\in \sigma'$ (respectively, $p\in\partial\sigma'$ ), we have $d_{\sigma}(c,o_{\tau})\leq1$ (respectively, $d_{\sigma}(c,p)=1$).
Due to Property~\ref{prop:conv}, we have $d_{-\sigma}(o_{\tau},c)=d_{\sigma}(c,o_{\tau})$ (respectively, $d_{-\sigma}(p,c)=d_{\sigma}(c,p)$). Hence, we have $d_{-\sigma}(o_{\tau},c)\leq 1$ (respectively, $d_{-\sigma}(p,c)=1$). Thus,  $c\in\bm{\sigma'}(o_{\tau})$ (respectively, $c\in\partial {\bm\sigma'}(p)$). As a result, we have $c\in\bm{\sigma'}(o_{\tau})\cap\partial {\bm\sigma'}(p)\cap\interior(\Cell(p))$. 

For the converse part, let us assume that $\bm{\sigma'}(o_{\tau})\cap \partial {\bm\sigma'}(p)\cap \interior{(\Cell(p))}$ is non-empty. Let $c\in\bm{\sigma'}(o_{\tau})\cap \partial {\bm\sigma'}(p)\cap \interior{(\Cell(p))}$ be a point.
Since $c\in \bm{\sigma'}(o_{\tau})\cap \partial {\bm\sigma'}(p)$, we have $d_{-\sigma}(o_{\tau},c)\leq1$ and $d_{-\sigma}(p,c)=1$.
Due to Property~\ref{prop:conv}, we have $d_{-\sigma}(o_{\tau},c)=d_{\sigma}(c,o_{\tau})$ (respectively, $d_{-\sigma}(p,c)=d_{\sigma}(c,p)$). Thus,  $d_{\sigma}(c,o_{\tau})=d_{-\sigma}(o_{\tau},c)\leq 1$ (respectively, $d_{\sigma}(c,p)=d_{-\sigma}(p,c)=1$). Hence, $\sigma(c)$, a translated copy of $\sigma$ centered at $c$, contains both $o_{\tau}$ and $p$ ($\in \partial \sigma(c)$). Now, we only need to show that $\sigma(c)$ contains only $p$ from $\P\cap s$. Since $c\in \partial {\bm\sigma'}(p)\cap \interior{(\Cell(p))}$, the distance $d_{\sigma}(c,p)\ (= 1)$ is strictly less than $d_{\sigma}(c,q)$, where $q(\neq p)$ is any other point in $\P\cap s$. As a result, we have $\sigma(c)\cap\P\cap s=\{p\}$. Hence, $p$ is an extreme point.
 \end{proof}

% \vspace{1mm}
\noindent\emph{Computing Extreme Points.} Here, we discuss how to construct the set $V_{s,\tau}$ of extreme points.
{First, draw a Voronoi diagram with respect to the $d_{\sigma}$ function, induced by the convex object $\sigma$,  for the points in $\P\cap s$~\cite{ChewD85}}. Let $\bm{\sigma'}(o_{\tau})$ be a translated copy of $-\sigma$ centered at $o_{\tau}$.
For each point $p\in\P\cap s$,  consider a translated copy ${\bm\sigma'}({p})$ of $-\sigma$, centered at  $p$. Now due to Lemma~\ref{clm:ext}, if {$\bm{\sigma'}(o_{\tau})\cap \partial {\bm\sigma'}(p)\cap \interior{(Cell(p))}$} is non-empty, then $p$ is an element of $V_{s,\tau}$.

% \vspace{1mm}\noindent\hspace{2mm} $\bullet$ \underline{Vertex Ranking of Extreme Points}: 
% The third step, i.e., computing a vertex ranking $c_{s,\tau}$ for each set $V_{s,\tau}$, is the same as described in~\cite{EvenS14} (see Appendix~\ref{sec: appendix algo-es}).

\vspace{1mm}
Now, for the sake of completeness, we describe the algorithm $\ES$  which is essentially the same as~\cite{EvenS14}.

\vspace{1mm}\noindent\textbf{Description of Algorithm $\ES$.}
The algorithm maintains a hitting set $\H_{i-1}$ for all the $i-1$ objects $\{\sigma_1,\sigma_2,\ldots, \sigma_{i-1}\}$ that have been the part of input so far. Upon the arrival of a new object $\sigma_i$, if it is already stabbed  by $\H_{i-1}$, then simply update $\H_i\leftarrow \H_{i-1}$. Otherwise, from each tile $s$ such that $\sigma_i\cap\P\cap s\neq\emptyset$, first determine the $\tau$ such that $\tau=\tau_{s,{\sigma_i}}$ and then select the point $v_{s,\sigma_i}$ such that $c_{s,\tau} (v_{s,\sigma_i})= \max\{c_{s,\tau}(v)\ |\ v \in \sigma_i\cap\P \cap V_{s,{\tau}}\}$. These points are added to $\H_{i-1}$ to obtain $\H_i$.
% The pseudo-code of the algorithm is given in Algorithm~\ref{alg:cap} (see Appendix~\ref{App_alg:cap}).

\vspace{1mm}
In the following lemma, it is demonstrated that upon the arrival of two unstabbed translated copies of $\sigma$ that have a common point $p\in s$ and are of the same type $\tau$, then they are stabbed by extreme points in $V_{s,{\tau}}$ of different colors (for proof see Appendix~\ref{sec: appendix algo-es}). This lemma plays a vital role in analysing the competitive ratio of our algorithms.

\begin{lemma}\label{lem:distinct}
    For any $\tau\in[4]$, if  $\sigma_i,\sigma_j\in{\cal D}_{s,{\tau}}$ are unstabbed upon their arrivals, and there exists a point $p\in \sigma_i\cap \sigma_j\cap\P \cap s$, then $c_{s,\tau}(v_{s,\sigma_i})\neq c_{s,\tau}(v_{s,\sigma_j})$.
\end{lemma}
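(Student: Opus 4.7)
The plan is to argue by contradiction. Suppose $c_{s,\tau}(v_{s,\sigma_i})=c_{s,\tau}(v_{s,\sigma_j})=\gamma$; I would derive a contradiction. Taking $i<j$ without loss of generality, the hypothesis that $\sigma_j$ is unstabbed upon its arrival together with $v_{s,\sigma_i}\in\H_{j-1}$ forces $v_{s,\sigma_i}\notin\sigma_j$; in particular $v_{s,\sigma_i}\neq v_{s,\sigma_j}$, so the unique simple path $\Pi$ joining $v_{s,\sigma_i}$ and $v_{s,\sigma_j}$ in the path graph $G_{s,\tau}$ has at least one internal vertex.

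Next, I would invoke the defining property of the vertex ranking $c_{s,\tau}$: since the two endpoints of $\Pi$ share the color $\gamma$, there must exist an internal vertex $z$ of $\Pi$ with $c_{s,\tau}(z)>\gamma$. The task then reduces to showing $z\in\sigma_i\cup\sigma_j$. Indeed, if $z\in\sigma_i$, then $z\in\sigma_i\cap\P\cap V_{s,\tau}$ has color strictly greater than $c_{s,\tau}(v_{s,\sigma_i})$, contradicting the rule of $\ES$ that $v_{s,\sigma_i}$ maximises color over $\sigma_i\cap\P\cap V_{s,\tau}$; the case $z\in\sigma_j$ is symmetric.

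To place $z$ in $\sigma_i\cup\sigma_j$, I would exploit the geometric construction of $G_{s,\tau}$ together with the hypothesis that $\sigma_i$ and $\sigma_j$ are translates of the same regular $k$-gon, both of type $\tau$ (so both contain $o_\tau$) and both containing the common point $p\in\P\cap s$. The path $G_{s,\tau}$ orders the extreme points of $V_{s,\tau}$ by a geometric criterion tied to their position relative to $o_\tau$, and using the angle property of regular $k$-gons (Lemma~\ref{lemma:regular}) I would argue that for every $\sigma'\in{\cal D}_{s,\tau}$ the set $\sigma'\cap V_{s,\tau}$ forms a contiguous sub-path of $G_{s,\tau}$. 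Because $\sigma_i$ and $\sigma_j$ share the point $p$, the sub-paths corresponding to $\sigma_i$ and $\sigma_j$ overlap (or are adjacent) in such a way that their union covers every extreme point lying strictly between $v_{s,\sigma_i}$ and $v_{s,\sigma_j}$ on $\Pi$. This yields the required containment $z\in\sigma_i\cup\sigma_j$ and closes the argument.

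The principal obstacle is this final geometric step, namely verifying both the contiguity claim (extreme points inside a single type-$\tau$ translate form a consecutive segment of $G_{s,\tau}$) and the covering claim (two such segments for translates sharing a point of $\P\cap s$ jointly cover the intermediate portion of $\Pi$). This is precisely where the angle property of regular $k$-gons from Lemma~\ref{lemma:regular} is indispensable: it constrains the boundaries of objects in ${\cal D}_{s,\tau}$ to cross the angularly ordered extreme points in a monotone fashion relative to $o_\tau$, thereby ruling out configurations in which an intermediate extreme point on $\Pi$ could escape both $\sigma_i$ and $\sigma_j$.
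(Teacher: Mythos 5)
Your proof follows essentially the same route as the paper's: a contradiction argument built on the vertex-ranking property of $c_{s,\tau}$, with the contiguity of $\sigma'\cap V_{s,\tau}$ (the paper's Lemma~\ref{lem:interval}) as the key geometric fact, and the max-color selection rule of $\ES$ supplying the final contradiction when the higher-colored internal vertex $z$ is placed in $\sigma_i$ or $\sigma_j$. You add a useful preliminary observation the paper leaves implicit --- that $\sigma_j$ being unstabbed forces $v_{s,\sigma_i}\notin\sigma_j$, so $v_{s,\sigma_i}\neq v_{s,\sigma_j}$ and the path $\Pi$ has an internal vertex at all. The one place where you stop short is the ``covering claim'': the paper handles it by asserting that, because $\sigma_i$ and $\sigma_j$ share the point $p$, the intervals $I_i=\sigma_i\cap V_{s,\tau}$ and $I_j=\sigma_j\cap V_{s,\tau}$ have a common point, hence $I_i\cup I_j$ is itself an interval of the path graph, which immediately forces $z\in I_i\cup I_j$. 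Your formulation (the two contiguous segments jointly cover the intermediate portion of $\Pi$) is equivalent, and you correctly flag it as the only non-trivial step. A minor misattribution: you suggest Lemma~\ref{lemma:regular} is what prevents an intermediate extreme point from ``escaping'' both $\sigma_i$ and $\sigma_j$, but in the paper the angle property is used only (via Lemma~\ref{lemma:psLine_k_s}) to prove the contiguity of $\sigma'\cap V_{s,\tau}$; the overlap of $I_i$ and $I_j$ is then deduced directly from $p$ lying in both objects, not from the angle property again. Overall, same approach, correctly structured, with the final gluing step left at the same level of terseness as the paper.
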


% \vspace{1mm}\noindent\emph{Angle Function.}
% Let $\theta_{s,{\tau}}:V_{s,{\tau}}\rightarrow \left[0,2\pi\right]$ denote an angle function, where $\theta_{s,{\tau}}(p)$ is equal to the slope of the line $\overline{o_{\tau} p}$. Observe that the angle function $\theta_{s,{\tau}}$ is one-to-one. Let $\{q_i\}_{i=1}^{|V_{s,{\tau}}|}$ denote an ordering of $V_{s,{\tau}}$ in increasing order of the angle function $\theta_{s,{\tau}}$. For any object $\sigma'\in{\cal D}_{s,{\tau}}$, the intersection $\sigma'\cap V_{s,{\tau}}$ is an \emph{interval} if there exist $i, k$ such that $\sigma'\cap V_{s,{\tau}}=\{q_j\ |\ i\leq j \leq k\}$.

% % \vspace{1mm}\noindent\emph{Vertex Ranking of Extreme Points.} 
% Let $G_{s,\tau}$ denotes the path graph over the extreme points $V_{s,{\tau}}$, where $q_i$ is a neighbor of $q_{i+1}$, for any $i\in[|V_{s,{\tau}}|-1]$. 
% Due to~\cite[Proposition~5]{EvenS14}, we have that the vertex ranking number of a path graph having $n$ vertices is $\lfloor\log_2 n\rfloor + 1$.
% Let $c_{s,\tau}:V_{s,{\tau}} \rightarrow [k]$ denotes a vertex ranking of the path graph $G_{s,\tau}$ that uses $\lfloor \log_2(2|V_{s,{\tau}}|)\rfloor$ colors. Let us consider an object $\sigma'\in{\cal D}_{s,{\tau}}$.
% Let $c_{s,\tau}^{\max}(\sigma'\cap\P)= \max\{c_{s,\tau}(v)\ |\ v \in (\sigma'\cap\P \cap V_{s,{\tau}}) \}$. Let $v_{s,\tau}^{\max}(\sigma'\cap\P)$ denotes the vertex $v\in (\sigma'\cap\P\cap V_{s,{\tau}})$ such that $c_{s,\tau} (v) = c_{s,\tau}^{\max}(\sigma'\cap\P)$.

\begin{figure}[htbp]
    \centering
    \begin{subfigure}[b]{0.30\textwidth}
        \centering
        \includegraphics[page=3, width=30 mm]{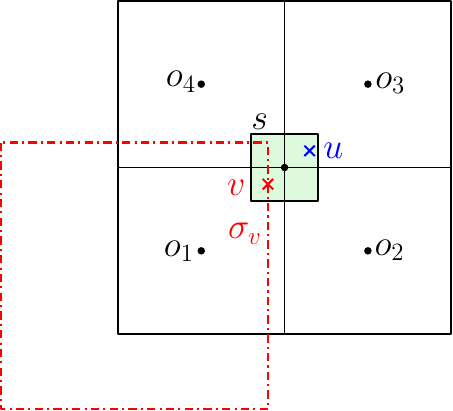}
        \caption{}
        \label{fig:cone}
    \end{subfigure}
    \begin{subfigure}[b]{0.30\textwidth}
        \centering
        \includegraphics[page=1, width=30 mm]{Hiiting_Homothetic_kgons.pdf}
        \caption{}
        \label{fig:extreme_point}
    \end{subfigure}
    \begin{subfigure}[b]{0.30\textwidth}
        \centering
        \includegraphics[page=2, width=30 mm]{Hiiting_Homothetic_kgons.pdf}
        \caption{ }
        \label{fig:bounded_scaled}
    \end{subfigure}
    \caption{(a) Super-square $S$ of a tile  $s$ of the partition $P_{\sigma}$, where $\sigma$ is a unit square. (b) Here, the point $v$ is an extreme point i.e. $v\in V_{s,1}$ due to the existence of a unit square $\sigma_v\in {\cal D}_{s,1}$. However, the point $u$ does not belong to $V_{s,1}$ because any unit square containing the points $u$ and $o_1$ must contain the point $v$. (c) Illustration of the shrunk set for $\sigma$, when $\sigma$ is not an integral power of $2$.}
    \label{fig:notation222}
\end{figure}

% \vspace{1mm}\noindent\emph{Preprocessing Step.}
% As part of preprocessing, we do the following. (i) Consider the partition $P_{\sigma}$ of the plane  as defined earlier. (ii) For every tile $s$ in $P_{\sigma}$ such that $s\cap \P\neq \emptyset$, compute four sets of extreme points $V_{s,{\tau}}$, and order each point of $V_{s,{\tau}}$ in increasing order of the angle function $\theta_{s,{\tau}}$.
% (iii) Compute a vertex ranking $c_{s,\tau}$ for each $V_{s,{\tau}}$.

% \vspace{1mm}\noindent\emph{Description of $\ES$.}
% The algorithm maintains a hitting set $\H_{i-1}$ for all the $i-1$ objects $\{\sigma_1,\sigma_2,\ldots, \sigma_{i-1}\}$ that have been the part of input so far. Upon the arrival of a new object $\sigma_i$, if it is already stabbed  by $\H_{i-1}$, then simply update $\H_i\leftarrow \H_{i-1}$. Otherwise, select a vertex $v_{s,\sigma_i}=v_{s,\tau}^{\max}(\sigma_i\cap\P)$ from each tile $s$ such that $\sigma_i\cap\P\cap s\neq\emptyset$. These vertices are added to $\H_{i-1}$ to obtain $\H_i$. The pseudo-code of the algorithm is given in Algorithm~\ref{alg:cap} (see Appendix~\ref{App_alg:cap}).

\noindent\emph{Angle Property.} To prove the analog result of Lemma~\ref{lem:distinct}, Even and Smorodinsky~\cite{EvenS14} used the following structural property of disks: {\it  Let $\sigma$ and $\sigma'$ be two distinct translated copies of a disk having a non-empty intersection region. Let $p$ be any point in $\sigma\cap \sigma'$. Let $x$ and $y$ be any two points in $\partial \sigma\cap \partial \sigma'$. Then, the angle $\angle xpy$ of the triangle $\triangle xpy$ is at least~$\frac{\pi}{2}$}.
However, to the best of our knowledge, a similar property was not known for translated copies of a regular $k$-gon. Here, we state a similar property for them. Before saying this property, we introduce some terminologies. We use the term \emph{connected component} to denote a maximal continuous set that cannot be covered by the union of two disjoint open sets (see Figure~\ref{fig:case_6} in Appendix~\ref{sec: Structural Property of Regular k-gons}).
Two objects $\sigma$ and $\sigma'$ are said to \emph{intersect} each other if $\sigma\cap\sigma'$ has some non-empty interior.
Now, we state the angle property for regular $k$-gon (for proof, we refer to Appendix~\ref{sec: Structural Property of Regular k-gons}).

\begin{lemma}[Angle Property]\label{lemma:regular}
    Let $\sigma$ and $\sigma'$ be two distinct translated copies of a regular $k$-gon such that their interior has a non-empty intersection. Let $p$ be any point in $\sigma\cap \sigma'$. Let $x$ and $y$ be any two points in $\partial \sigma\cap \partial \sigma'$ such that $x$ and $y$ belong to two different connected components of $\partial \sigma\cap \partial \sigma'$. Then, the angle $\angle xpy$ of the triangle $\triangle xpy$ is at least~$\frac{\pi}{4}$.
\end{lemma}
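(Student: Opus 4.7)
My plan is to analyze the local geometry of $\sigma\cap\sigma'$ near $x$ and $y$ and bound the two base angles of $\triangle xpy$ separately, concluding $\angle xpy \geq \pi/4$ from the triangle-angle identity $\angle xpy = \pi - \angle pxy - \angle pyx$. A useful first observation is that, by convexity of $\sigma$ and $\sigma'$, the whole triangle $\triangle xpy$ lies in $\sigma\cap\sigma'$, so the entire picture takes place inside one convex region.

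At the point $x$, let $e_x$ denote the edge of $\sigma$ through $x$ and $e_x'$ the edge of $\sigma'$ through $x$ (and $e_y,e_y'$ analogously). The hypothesis that $x$ and $y$ lie in distinct connected components of $\partial\sigma\cap\partial\sigma'$ forces $e_x$ and $e_x'$ to be non-collinear: if they were collinear, they would contribute a segment-shaped component of $\partial\sigma\cap\partial\sigma'$ containing $x$, and pushing this analysis to $y$ would either merge the two components or contradict the translation structure. Hence $e_x$ and $e_x'$ correspond to two distinct edge-directions of the regular $k$-gon and cross transversally at $x$. The interior half-planes of $\sigma$ and $\sigma'$ at $x$ cut out a wedge $W_x$ of apex $x$ and angle $\theta_x$; locally $\sigma\cap\sigma'\subseteq W_x$, so both rays $xp$ and $xy$ lie in $W_x$. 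Writing $\beta_x$ for the angle between $e_x$ and the chord $xy$, this gives $\angle pxy \leq \max(\beta_x,\theta_x-\beta_x)$, and the symmetric inequality holds at $y$.

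The crux is then to establish
\[
\max(\beta_x,\theta_x-\beta_x) + \max(\beta_y,\theta_y-\beta_y) \leq \tfrac{3\pi}{4},
\]
which immediately yields $\angle xpy \geq \pi/4$. For this I would use two ingredients. First, because $\sigma'=\sigma+v$, the chord $xy$ is a common chord of two translated copies of the same regular $k$-gon, so the angles $\beta_x,\beta_y$ that $xy$ makes with the respective edges are constrained by the discrete set of $k$ uniformly spaced edge-directions of the $k$-gon. Second, a winding/crossing argument links the wedges at $x$ and $y$: as one traverses $\partial\sigma$, one exits $\sigma'$ at one of $x,y$ and re-enters at the other, so the pair $(e_x,e_x')$ and the pair $(e_y,e_y')$ must be of opposite crossing type and their wedge angles $\theta_x,\theta_y$ cannot both be close to $\pi$ while simultaneously being near-bisected by $xy$. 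A finite case split over the pair of edge-directions meeting at $x$ and at $y$ then yields the inequality.

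The principal obstacle will be exactly this final step: cleanly combining the edge-direction constraints with the different-components hypothesis to obtain the $3\pi/4$ bound uniformly in $k\geq 4$. Particular care is needed for degenerate configurations---when $x$ or $y$ coincides with a vertex of $\sigma$ or $\sigma'$, or when some edge of $\sigma$ is parallel to an edge of $\sigma'$ that does not pass through $x$ or $y$. These corner cases are what prevent the sharper $\pi/2$ bound known for disks (and attainable for squares) from carrying over, and they are the reason that $\pi/4$ is the right uniform value in the statement.
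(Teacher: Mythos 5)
Your approach is genuinely different from the paper's: you work locally, bounding the two base angles $\angle pxy$ and $\angle pyx$ via the wedges cut out by the edges of $\sigma$ and $\sigma'$ at $x$ and at $y$, while the paper works globally, exhibiting a square $\square abcd$ (built from a principal diagonal of $\sigma$ whose endpoints lie outside $\sigma'$, or, in the degenerate even-$k$ case, from the two parallel boundary edges) that contains the lens $R$, and then invoking the elementary fact that any point inside a square sees one of its sides under an angle $\geq\pi/4$.

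Unfortunately your plan has a concrete gap: the inequality you identify as the crux,
\[
\max(\beta_x,\theta_x-\beta_x)+\max(\beta_y,\theta_y-\beta_y)\ \leq\ \tfrac{3\pi}{4},
\]
is simply false, even for squares. Take $\sigma=[0,1]^2$ and $\sigma'=[a,1+a]\times[b,1+b]$ with $a\to 1^-$ and $b\to 0^+$, so $x=(1,b)$, $y=(a,1)$, and both wedge angles equal $\theta_x=\theta_y=\pi/2$. As $a\to1$ the chord $\overline{xy}$ becomes nearly vertical, so $\beta_x\to 0$ (chord aligns with the edge $e_x$ of $\sigma$) and $\beta_y\to\pi/2$ (chord becomes perpendicular to the edge $e_y$ of $\sigma$). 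Then $\max(\beta_x,\theta_x-\beta_x)\to\pi/2$ and $\max(\beta_y,\theta_y-\beta_y)\to\pi/2$, and the sum tends to $\pi$, not $3\pi/4$. The lemma's conclusion nevertheless holds in this configuration (one checks $\angle xpy\geq\pi/2$ for every admissible $p$), and the reason it holds is precisely what your decomposition discards: the point $p$ that makes $\angle pxy$ close to its per-vertex maximum is on one side of the lens, while the $p$ that makes $\angle pyx$ close to its maximum is on the opposite side, so no single $p$ ever realizes both extremes. Bounding the two base angles independently and then summing the worst cases loses this coupling, and no amount of discretizing edge directions will repair a bound that is already too weak at the level of the two separate $\max$'s. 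You would need either to track how moving $p$ trades one base angle against the other inside the wedge at $x$ versus the wedge at $y$, or to abandon the per-vertex decomposition and argue globally (e.g.\ by containing $R$ in a square), which is what the paper does.
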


Using the Lemma~\ref{lem:distinct}, in a similar way as in~\cite{EvenS14}, one can show the following:

\begin{theorem}\label{thm:translates}
The algorithm $\ES$ achieves a competitive ratio of at most~$4m_{\sigma}\lfloor\log_2 2n\rfloor$ for the hitting set problem of translated copies of a regular $k$-gon $\sigma$, for $(k\geq 4)$, using $n$ points in $\IR^2$, where $m_{\sigma}$ is a constant that depends on $k$.
\end{theorem}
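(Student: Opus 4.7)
The plan is a charging argument of the same flavor as Even--Smorodinsky~\cite{EvenS14}. Let $\OO$ be an optimum offline hitting set and $\H$ the final hitting set produced by $\ES$. I aim to prove $|\H|\leq 4m_\sigma\lfloor\log_2 2n\rfloor\cdot|\OO|$.

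First, whenever an object $\sigma_i$ arrives unstabbed, the algorithm inserts at most one point $v_{s,\sigma_i}$ per tile $s$ with $\sigma_i\cap\P\cap s\neq\emptyset$, and by Observation~\ref{obs:contain} the number of such tiles is at most $m_\sigma$. Hence $|\H|\leq m_\sigma\cdot|\mathcal{N}|$, where $\mathcal{N}$ is the set of objects that were unstabbed on arrival. To bound $|\mathcal{N}|$, I will charge each $\sigma_i\in\mathcal{N}$ to a pair $(p^*,\tau)$ with $p^*\in\OO$ and $\tau\in[4]$, and show that each such pair receives at most $\lfloor\log_2 2n\rfloor$ charges.

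The charging is as follows. For each $\sigma_i\in\mathcal{N}$, fix any witness $p^*_i\in\OO\cap\sigma_i$; since $P_\sigma$ is constructed so that every point of $\P$ lies in the interior of a tile, $p^*_i$ determines a unique tile $s^*_i$ with $p^*_i\in\P\cap s^*_i\cap\sigma_i$. In particular $\sigma_i\in\mathcal{D}_{s^*_i,\tau^*_i}$ where $\tau^*_i=\tau_{s^*_i,\sigma_i}\in[4]$, so the algorithm did select an extreme point $v_{s^*_i,\sigma_i}\in V_{s^*_i,\tau^*_i}$ when processing $\sigma_i$. Assign $\sigma_i$ to the pair $(p^*_i,\tau^*_i)$.

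Now fix $p^*\in\OO$ in tile $s^*$ and $\tau\in[4]$, and consider all $\sigma_i\in\mathcal{N}$ assigned to $(p^*,\tau)$: any two of them, say $\sigma_i$ and $\sigma_j$, lie in $\mathcal{D}_{s^*,\tau}$, are unstabbed on arrival, and share the common point $p^*\in\sigma_i\cap\sigma_j\cap\P\cap s^*$, so Lemma~\ref{lem:distinct} gives $c_{s^*,\tau}(v_{s^*,\sigma_i})\neq c_{s^*,\tau}(v_{s^*,\sigma_j})$. Because $c_{s^*,\tau}$ uses at most $\lfloor\log_2(2|V_{s^*,\tau}|)\rfloor\leq\lfloor\log_2 2n\rfloor$ colors, the pair $(p^*,\tau)$ receives at most $\lfloor\log_2 2n\rfloor$ objects; summing over the four choices of $\tau$ and over $p^*\in\OO$ yields $|\mathcal{N}|\leq 4\lfloor\log_2 2n\rfloor\cdot|\OO|$, and multiplying by the per-object factor $m_\sigma$ completes the bound. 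The only point that needs care is verifying that $s^*_i$ truly witnesses $\sigma_i\in\mathcal{D}_{s^*_i,\tau^*_i}$ — which is immediate from $p^*_i\in\P\cap s^*_i\cap\sigma_i$ — so that Lemma~\ref{lem:distinct} is applicable; the rest is mechanical.
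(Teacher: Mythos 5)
Your proof is correct and follows exactly the Even--Smorodinsky-style charging argument that the paper itself invokes but does not spell out (the paper just says ``in a similar way as in~\cite{EvenS14}''). You correctly bound the number of points placed per unstabbed object by $m_\sigma$ via Observation~\ref{obs:contain}, and then bound the number of unstabbed objects charged to each pair $(p^*,\tau)$ by the color count via Lemma~\ref{lem:distinct}, yielding the stated ratio.
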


\subsection{Algorithm $\HHR$}
In this section, we describe the algorithm $\HHR$.
 Before describing the algorithm, we introduce some terminologies. For each $j\in\mathbb{Z}^{+}\cup\{0\}$, let $L_j$ be the $j$th layer containing all objects with diameters in the range $[2^j,2^{j+1})$. 
Now, we present the preprocessing procedure for  layer $L_j$, where $j\in\mathbb{Z}^{+}\cup\{0\}$.

% \begin{figure}[htbp]
%     \centering
%     \includegraphics[page=2, width=45 mm]{Hiiting Homothetic kgons.pdf}
%     \caption{Illustration of the shrunk set for $\sigma$, when $\sigma$ is not integral power of $2$.}
%     \label{fig:bounded_scaled}
% \end{figure}

\vspace{1mm}\noindent\textbf{Preprocessing Procedure for $L_j$.}
Let $\sigma^{j}$ be a regular $k$-gon having diameter $2^j$. (i) Consider the partition $P_{\sigma^{j}}$ of the plane as described above. (ii) For every tile $s$ in $P_{\sigma^{j}}$, compute four sets of extreme points $V_{s,{\tau}}$.
(iii) For each $V_{s,{\tau}}$, construct a path graph $G_{s,\tau}$ over the set $V_{s,{\tau}}$ and then, compute a vertex ranking $c_{s,\tau}$ of the graph $G_{s,\tau}$ that uses $\lfloor\log_2(2|V_{s,\tau}|)\rfloor$ colors~\cite[Proposition~5]{EvenS14}.

\vspace{1mm}\noindent\textbf{Shrunk Set ${\cal K}_{\sigma}$ of a Regular $k$-gon $\sigma$.} Let $\sigma$ be a regular $k$-gon having diameter $w$ such that $2^j\leq w<2^{j+1}$, for some $j\in\mathbb{Z}^{+}\cup\{0\}$. If $w=2^j$, then we set  the shrunk set ${\cal K}_{\sigma}=\{\sigma\}$; otherwise,  ${\cal K}_{\sigma}$ is defined as follows. Let $\sigma_0'$ be a concentric scaled copy of $\sigma$ having diameter $w'=2^j$. Let $c_1,c_2,\ldots,c_k$ be $k$ corners of  $\sigma$. For each $j\in[k]$, we consider an exact copy $\sigma_{j}$  of $\sigma$, and  we shrink $\sigma_j$, keeping $c_j$ as the corner, such that $\sigma_j$ becomes a translated copy of $\sigma_0'$. Let $\sigma_j'$ be the obtained shrunk copy (see Figure~\ref{fig:bounded_scaled}).
We define ${\cal K}_{\sigma}=\{\sigma'_0,\sigma_1',\sigma_2',\ldots,\sigma_k'\}$.
For proof of the following lemma, we refer to Appendix~\ref{App_lem:shrunk}.

\begin{lemma}\label{lem:shrunk}
    $\sigma=\cup_{\sigma'\in{\cal K}_{\sigma}} \sigma'$.
\end{lemma}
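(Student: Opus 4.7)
The inclusion $\bigcup_{\sigma'\in\mathcal{K}_\sigma}\sigma'\subseteq\sigma$ is immediate from convexity: when $w=2^j$ this is trivial, and otherwise each $\sigma'\in\mathcal{K}_\sigma$ has the form $\lambda\sigma+(1-\lambda)a$ with $\lambda=2^j/w\in(1/2,1)$ and anchor $a\in\{o,c_1,\ldots,c_k\}\subseteq\sigma$, so every point in $\sigma'$ is a convex combination of $a\in\sigma$ and some $x\in\sigma$, hence in $\sigma$.

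For the reverse inclusion, assume $w>2^j$, so $\lambda\in(1/2,1)$. Place the center $o$ at the origin and decompose $\sigma=\bigcup_{j=1}^k T_j$ via the fan triangulation, with $T_j=\triangle(o,c_j,c_{j+1})$ (indices mod $k$). For any $p\in\sigma$, pick $j$ with $p\in T_j$ and write the barycentric representation $p=\alpha o+\beta c_j+\gamma c_{j+1}$, with $\alpha,\beta,\gamma\ge 0$ and $\alpha+\beta+\gamma=1$. The key observation is that $p\in\sigma'_a=\lambda\sigma+(1-\lambda)a$ iff the pre-image $q_a:=a+(p-a)/\lambda$ lies in $\sigma$. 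A short expansion in the basis of $T_j$ yields three easy sufficient conditions: $q_o=p/\lambda\in T_j$ iff $\alpha\ge 1-\lambda$; $q_{c_j}\in T_j$ iff $\beta\ge 1-\lambda$; and $q_{c_{j+1}}\in T_j$ iff $\gamma\ge 1-\lambda$. If any of these holds, then $p$ lies in one of $\sigma_0'$, $\sigma_j'$, or $\sigma_{j+1}'$, and we are done.

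The residual case is $\alpha,\beta,\gamma<1-\lambda$, which summing to $1$ forces $\lambda<2/3$, so $\lambda\in(1/2,2/3)$. WLOG (by symmetry $c_j\leftrightarrow c_{j+1}$) assume $\beta\ge\gamma$; I claim that $q_{c_j}$ still lies in $\sigma$, merely not inside $T_j$. To verify this, I use the regular-$k$-gon rotation identity $c_{j+2}=2\cos(2\pi/k)\,c_{j+1}-c_j$ to re-express $q_{c_j}$ in the basis $(o,c_{j+1},c_{j+2})$; the three new barycentric coefficients become, respectively, a combination controlled by the sum-to-one identity, a term of the form $(2\cos(2\pi/k)(\lambda+\beta-1)+\gamma)/\lambda$, and $(1-\lambda-\beta)/\lambda$, the last of which is positive since $\beta<1-\lambda$. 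Placing $q_{c_j}$ in the adjacent triangle $T_{j+1}\subseteq\sigma$ then completes the argument.

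\textbf{The main obstacle} is the nonnegativity of the middle coefficient, the one on $c_{j+1}$, in the residual regime. Since $\lambda+\beta-1<0$, this coefficient is a difference of terms, and its sign is controlled by the hypotheses $\lambda>1/2$, $\beta\ge\gamma$, and $\cos(2\pi/k)\in[0,1)$ for $k\ge 4$. When this coefficient happens to be negative (which can occur for large $k$), one iterates the same change-of-basis once more into $T_{j+2}$, where the coefficients become nonnegative; in every instance, the final inequality reduces to a linear relation in $\lambda$ that becomes tight only in the simultaneous limit $k\to\infty$, $\lambda\to 1/2$. The strict bound $\lambda>1/2$ guaranteed by $w<2^{j+1}$ therefore suffices to close the argument for all $k\ge 4$, yielding $q_{c_j}\in\sigma$ and hence $p\in\sigma_j'$.
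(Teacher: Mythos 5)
Your decomposition is genuinely different from the paper's: you triangulate $\sigma$ into the fan $T_j=\triangle(o,c_j,c_{j+1})$ and analyze barycentric coefficients, whereas the paper partitions $\sigma$ into $k$ congruent kites $C_i$ with a vertex at $c_i$ and the center $c$, and then shows directly, via explicit trigonometric estimates (splitting into $k$ even and $k$ odd), that each kite is covered by $\sigma_i'\cup\sigma_0'$. Your reduction $p\in\sigma'_a\iff q_a:=a+(p-a)/\lambda\in\sigma$ is correct and elegant, the forward inclusion via convexity is fine, and the three easy sufficient conditions $\alpha\ge1-\lambda$, $\beta\ge1-\lambda$, $\gamma\ge1-\lambda$ are verified correctly; so is the observation that the residual case forces $\lambda<2/3$.

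The gap is in the residual case. What you need to prove is that, under $\alpha,\beta,\gamma<1-\lambda$, $\beta\ge\gamma$, and $\lambda>1/2$, the point $q_{c_j}=\frac{\lambda+\beta-1}{\lambda}c_j+\frac{\gamma}{\lambda}c_{j+1}$ lies in $\sigma$. This is true, but the iterated change-of-basis argument you sketch does not establish it. Concretely: (a) after substituting $c_j=2\cos(2\pi/k)c_{j+1}-c_{j+2}$ you obtain a \emph{linear} combination $q_{c_j}=\lambda_2 c_{j+1}+\lambda_3 c_{j+2}$; for containment in $T_{j+1}$ you must check $\lambda_2\ge0$, $\lambda_3\ge0$ \emph{and} $\lambda_2+\lambda_3\le1$ (equivalently, the barycentric weight on $o$ is non-negative), and your write-up addresses only the first two; (b) the coefficient recursion under repeated substitutions satisfies $\lambda_{m+1}=2\cos(2\pi/k)\lambda_m-\lambda_{m-1}$, which oscillates; you would need to show it first lands with all three barycentric coordinates non-negative, and that this happens before the angle sweeps past the far edge of $\sigma$; (c) the claim that ``the final inequality reduces to a linear relation in $\lambda$'' is not exhibited, and it is not linear — the actual condition depends on $\beta$, $\gamma$, and $\cos(2\pi/k)$ jointly. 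To close the residual case cleanly you can avoid iteration altogether: with $A:=\tfrac{1-\lambda-\beta}{\lambda}$ and $b:=\tfrac{\gamma}{\lambda}$, the constraints give $A<\tfrac12$, $b<1$, and $A+b\le\tfrac{1-\lambda}{\lambda}<1$; then one checks the $k$ support-function constraints $q_{c_j}\cdot n_m\le\cos(\pi/k)$ for the edge normals $n_m$, which all follow from these three bounds together with $k\ge4$. As written, however, your residual case is an assertion rather than a proof, and that is where the argument is incomplete.
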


% \noindent\emph{Algorithm Description.} \label{Algo_HHK}
% \vspace{1mm}
\noindent\textbf{Description of Algorithm $\HHR$.}
We use $ES_{\sigma}$ to denote the algorithm $\ES$ for points in $\IR^2$, and translated copies of $\sigma$.
For each $j\in\mathbb{Z}^{+}\cup\{0\}$, let $L_j$ be the $j$th layer containing all objects with diameters between $[2^j,2^{j+1})$. 
The algorithm maintains a hitting set $\H=\cup_j\H_{j}$ for all objects that have been the part of input so far, where $\H_{j}$ is a hitting set maintained by our algorithm for objects of layer $L_j$ that have been the part of input so far. 
Upon the arrival of a new object $\sigma$, if it is already hit by $\H$, then do nothing. 
Otherwise, first, determine the layer $L_j$ in which the object $\sigma$ belongs. 
If $\sigma$ is the first arrived object from the layer $L_j$, then first proceed with the preprocessing procedure for $L_j$ as described above.
Now, irrespective of whether $\sigma$ is the first object of $L_{j}$ or not, consider the shrunk set ${\cal K}_{\sigma}$ of $\sigma$.
Each with equal probability, a sequence of the shrunk objects $\sigma_0',\sigma_1',\ldots,\sigma_k'$ in ${\cal K}_{\sigma}$ will be selected at random, and the shrunk objects arrive one at a time according to the selected sequence. 
Upon the arrival of each shrunk object of $\K_{\sigma}$, we apply $ES_{\sigma^j}$,
and update $\H_j$ by adding the hitting set for objects in ${\cal K}_{\sigma}$.
Note that to hit $\sigma$, one needs to hit only one of the objects in ${\cal K}_{\sigma}$, but to guarantee the competitive ratio, we hit all the objects in ${\cal K}_{\sigma}$.
Also note that if an (original/ shrunk) object does not contain any points of $\P$, it should be ignored. 
% The pseudo-code of the algorithm is depicted in Algorithm~\ref{alg:cap:bound} (see Appendix~\ref{App_alg:cap:bound}).

\begin{theorem}\label{Theo:Main_Bounded}
    The randomized algorithm $\HHR$ achieves a competitive ratio of {$4m_{\sigma}(k+1)^2\lfloor \log_2 2M \rfloor\lfloor \log_2 2n\rfloor$} for the hitting set problem of homothetic copies of a regular $k$-gon $\sigma$, for $(k\geq 4)$, having diameters in the range $[1, M]$ using $n$ points in $\IR^2$, {where $m_{\sigma}$ is a constant that depends on $k$}.
\end{theorem}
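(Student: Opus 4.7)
The plan is to reduce the bounded-diameter case to the translates case already handled by Theorem~\ref{thm:translates}, paying one logarithmic factor for the layering in $M$ and a combinatorial factor of $(k+1)^2$ for the passage through shrunk sets. Since $\HHR$ maintains a separate hitting set $\H_j$ for each non-empty layer $L_j$ with $0 \le j \le \lfloor \log_2 M \rfloor$, by linearity of expectation it would suffice to establish the per-layer bound
\[
\mathbb{E}\bigl[|\H_j|\bigr] \;\le\; 4\, m_\sigma\, (k+1)^2 \lfloor \log_2 2n \rfloor \cdot |\OO^{(j)}|,
\]
where $\OO^{(j)}$ is any offline optimum for $\I^{(j)} := \I \cap L_j$, and then to sum over the at most $\lfloor \log_2 2M \rfloor$ non-empty layers.

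For a fixed layer $j$, I would combine Lemma~\ref{lem:shrunk} with the definition of $\K_\sigma$ to observe that every shrunk object handed to the subroutine $ES_{\sigma^j}$ is a translate of the canonical regular $k$-gon $\sigma^j$ of diameter $2^j$. Consequently, no matter which permutation of $\K_\sigma$ is drawn, the complete substream actually processed by $ES_{\sigma^j}$ consists of translates of $\sigma^j$, and Theorem~\ref{thm:translates} yields the deterministic inequality
\[
|\H_j| \;\le\; 4\, m_\sigma\, \lfloor \log_2 2n \rfloor \cdot |\OO_{\mathrm{sh}}^{(j)}|,
\]
where $\OO_{\mathrm{sh}}^{(j)}$ is a minimum offline hitting set for the entire shrunk stream in layer $j$. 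The random permutation itself does not enter this inequality; it serves only as the interface through which $\K_\sigma$ is fed to $ES$.

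The structural core of the proof is thus the inequality $|\OO_{\mathrm{sh}}^{(j)}| \le (k+1)^2\, |\OO^{(j)}|$. I would fix $p \in \OO^{(j)}$, set $\I_{p}^{(j)} = \{\sigma \in \I^{(j)} : p \in \sigma\}$, and first bound the geometry: every $\sigma \in \I_{p}^{(j)}$ has diameter in $[2^j, 2^{j+1})$ and contains $p$, so its center lies within distance $2^j$ of $p$, and the construction of $\K_\sigma$ (concentric scaling, or corner-anchored shrinking with translation of length less than $2^{j-1}$) then places every shrunk object $\sigma' \in \K_\sigma$ inside a ball of radius $O(2^j)$ about $p$. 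Classifying these shrunk objects by their type $i \in \{0,1,\ldots,k\}$, I would argue via a grid-packing calibrated to the inradius $2^{j-1}\cos(\pi/k)$ of $\sigma^j$ (using, as needed, the angle property of Lemma~\ref{lemma:regular}) that for each fixed $i$ at most $k+1$ well-placed points hit every type-$i$ shrunk object arising from $\I_{p}^{(j)}$. Summing over the $k+1$ types yields $(k+1)^2$ points per $p \in \OO^{(j)}$, as required.

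The main obstacle is precisely this last packing step: one has to verify, uniformly in $k \ge 4$, that $k+1$ carefully chosen points truly stab every translate of $\sigma^j$ whose center falls inside the relevant $O(2^j)$-neighborhood of $p$, with enough slack that the packing constants tuck neatly into $(k+1)^2$. Once it is in hand, chaining the three inequalities gives $\mathbb{E}[|\H|] \le 4\, m_\sigma\, (k+1)^2 \lfloor \log_2 2M \rfloor \lfloor \log_2 2n \rfloor \cdot |\OO|$, which is exactly the statement of the theorem.
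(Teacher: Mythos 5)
The proposal contains a critical error: you claim that the random permutation of $\K_\sigma$ ``does not enter this inequality'' and reduce everything to the deterministic claim $|\OO_{\mathrm{sh}}^{(j)}| \le (k+1)^2\,|\OO^{(j)}|$. That claim is false, and the randomization is in fact the load-bearing part of the argument. The difficulty is that the shrunk corner copies of objects $\sigma \ni p$ need not contain $p$; they sit near the corners of $\sigma$ and may each capture a distinct point of $\P$. One can place $n$ objects $\sigma_1,\ldots,\sigma_n$, all containing a single point $p$ and with side lengths close to $2^{j+1}$, whose corner shrunk copies $T_1,\ldots,T_n$ pairwise overlap only in a $\P$-free region and each $T_i$ contains a unique point $q_i \in \P$ with $q_i \notin T_{i'}$ for $i'\neq i$. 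Then $\OO^{(j)}=\{p\}$ has size $1$, but any hitting set for the shrunk stream must use all of $q_1,\ldots,q_n$, so $|\OO_{\mathrm{sh}}^{(j)}| = \Omega(n)$. No packing argument calibrated to the geometry of $\sigma^j$ can save this, because the offline optimum for the shrunk stream is a $\P$-dependent quantity, not a purely geometric one.

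The paper avoids this entirely by never comparing to an offline optimum of the shrunk stream. Instead, it fixes $x \in \OO$ and bounds the algorithm's cost attributable to $x$ directly: the number $|\K'_j(x)|$ of \emph{unstabbed} shrunk objects containing $x$ in layer $j$ is at most $4\lfloor\log_2 2n\rfloor$ by the vertex-ranking argument (Lemma~\ref{lem:distinct}), each unstabbed shrunk object causes at most $m_\sigma$ points to be placed, each unstabbed original object spawns at most $k+1$ shrunk objects, and --- crucially --- the random permutation ensures $\mathbb{E}|\I'_j(x)| \le (k+1)\,\mathbb{E}|\K'_j(x)|$: for each unstabbed $\sigma$ containing $x$, some $\sigma' \in \K_\sigma$ contains $x$ by Lemma~\ref{lem:shrunk}, and with probability at least $\frac{1}{k+1}$ that $\sigma'$ is drawn first in the permutation and is therefore itself unstabbed upon arrival. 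Without the random order an adversary could ensure that the one shrunk object containing $x$ is always stabbed before it arrives, letting $|\I'_j(x)|$ blow up while $|\K'_j(x)|$ stays at zero. Your layering and your use of translates per layer are fine, but the bridge back to $|\OO|$ needs this expectation argument, not a packing bound on offline optima.
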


\begin{proof}
    Let $\I$ be a collection of homothetic objects presented to the online algorithm, and let $\I'\subseteq\I$ be the subset of objects that are unstabbed upon their arrivals. Let ${\cal K}=\cup_{\sigma\in{\I'}} {\cal K}_{\sigma}$ and let ${\cal K}'\subseteq{\cal K}$ be the subset of shrunk objects that are unstabbed upon their arrivals.
    For each $j\in\{0,1,\ldots,\lfloor \log M\rfloor\}$, let $\I'_j$ be the collection of all objects in $\I'$ having diameters in the range $\left[2^j, 2^{j+1}\right)$, and ${\cal K}'_j$ be the collection of all objects in ${\cal K}'$ having diameter $2^j$.
    Let $\A$ and $\OO$ represent the hitting sets returned by the online algorithm and the optimum offline algorithm, respectively, for the input sequence~$\I$. Let $x$ be a point in $\OO$. Let $\I'(x)\subseteq \I'$ (respectively, ${\cal K}'(x)\subseteq {\cal K}'$) be the  set of objects that contains the point $x$.
    % Let $\I''(x)\subseteq \I''$ be the  set of objects that contains the point $x$.
    For each $j\in\{0,1,\ldots,\lfloor \log M\rfloor\}$, let $\I'_j(x)=\I'(x)\cap\I'_j$ and  ${\cal K}'_j(x)={\cal K}'(x)\cap{\cal K}'_j$.
    Let $s$ be the tile containing $x$. Let $\A_{s,j}\subseteq \A$ be the set of points the online algorithm explicitly places on the tile $s$ to hit when some object in ${\cal K}'_j(x)$ arrives. 
    % In other words, the set $\A_{s,j}$ is the output given by the algorithm $ES_{\sigma_j}$. 
    Note that ${\cal K}'_{j}(x)$ is the collection of all unstabbed shrunk objects containing the point $x$, and our algorithm places a hitting point for each unstabbed object. Thus, it is easy to observe that $|{\cal K}'_j(x)|\leq |\A_{s,j}|$.
    Due to Lemma~\ref{lem:distinct}, for  objects in ${\cal K}'_j(x)\cap{\cal D}_{s,{\tau}}$ our algorithm places distinct-colored extreme points from $V_{s,{\tau}}$, where $\tau \in [4]$. Due to~\cite[Proposition~5]{EvenS14}, the vertex ranking for $V_{s,\tau}$ uses at most $\lfloor \log_2(2|V_{s,{\tau}}|)\rfloor$ colors. Thus, $|\A_{s,j}| \leq 4\lfloor \log_2(2|V_{s,{\tau}}|)\rfloor$. 
    Since $|{\cal K}'_{j}(x)|\leq |\A_{s,j}|$ and $|\A_{s,j}| \leq 4\lfloor \log_2(2|V_{s,{\tau}}|)\rfloor$, we have $|{\cal K}'_j(x)|\leq 4\lfloor \log_2(2|V_{s,{\tau}}|)\rfloor\leq 4\lfloor \log_2 2n\rfloor$.

% \vspace{-2mm}
% \begin{equation}\label{eqn:log}
%     |{\cal K}'_j(x)|\leq 4\lfloor \log_2(2|V_{s,{\tau}}|)\rfloor\leq 4\lfloor \log_2(2n)\rfloor.
%     % O(\log |V_{s,{\tau}}|)= O(\log n).
% \end{equation}

    Let $\A(X)\subseteq \A$ be the set of points the online algorithm explicitly places on all the tiles to hit when some object in $X\subseteq\I'$ arrives.
    Now, $\mathbb{E}(|\A(\I'(x))|) \leq \sum_{j=0}^{\lfloor \log_2 M \rfloor} \mathbb{E}(|\A(\I_j'(x))|)$.
    Since each object in $\I_j'(x)$ contains $(k+1)$ shrunk objects, and for each shrunk object, the algorithm places at most $m_{\sigma}$ number of points, we have, $\mathbb{E}(|\A(\I_j'(x))|) \leq m_{\sigma}(k+1)\mathbb{E}(|\I_j'(x)|)$.
    Now, we bound the cardinality of $\mathbb{E}(|\I_j'(x)|)$. Due to Lemma~\ref{lem:shrunk} and Markov's inequality, we have that $\mathbb{E}(|\I_j'(x)|) \leq (k+1)\mathbb{E}(|{\cal K}_j'(x)|)$. As a result, $\mathbb{E}(|\A(\I'(x))|) \leq \sum_{j=0}^{\lfloor \log_2 M \rfloor} m_{\sigma}(k+1)^2 \mathbb{E}|{\cal K}_j'(x)| \leq 4m_{\sigma}(k+1)^2\lfloor \log_2 2M \rfloor\lfloor \log_2 2n\rfloor$ (last inequality is due to $|{\cal K}'_j(x)| \leq 4\lfloor \log_2 2n\rfloor$).
    % Due to Lemma~\ref{lem:shrunk}, for any object $\hat{\sigma}\in \I'_j(x)$, we have $\K'_j(x)\cap\K_{\hat{\sigma}}\neq \phi$. From the definition of $\K_{\hat{\sigma}}$, we need at most $(k+1)$ shrunk objects. As a result, we have $|\I'_j(x)| \leq (k+1)|{\cal K}'_j(x)|$. Since for each object in ${\cal K}'_j(x)$, the online algorithm places at most $m_{\sigma}$ number of points, we have $|\A(\I'(x))| \leq \sum_{j=0}^{\lfloor \log M \rfloor} m_{\sigma}(k+1)|{\cal K}'_j(x)| \leq m_{\sigma}(k+1) (\lfloor \log M \rfloor+1)(4\lfloor \log_2(2n)\rfloor) = O(\log n\log M)$ (last equality is due to $|{\cal K}'_j(x)|\leq 4\lfloor \log_2(2n)\rfloor$).
    % \vspace{-3mm}
    % \begin{align*}
    %     |\A(\I'(x))|
    %         \leq&\sum_{j=0}^{\lfloor \log M \rfloor}   m_{\sigma}(k+1)|{\cal K}'_j(x)|\\
    %         % \leq&m_{\sigma}(k+1)(\lfloor \log M \rfloor+1)|{\cal K}'_j(x)|\\
    %         \leq&m_{\sigma}(k+1) (\lfloor \log M \rfloor+1)(4\lfloor \log_2(2n)\rfloor)\tag{Since, $|{\cal K}'_j(x)|\leq 4\lfloor \log_2(2n)\rfloor$}\\
    %         =&O(\log n\log M).
    % \end{align*}
    As  a result, we have $\mathbb{E}(|\A|)\leq \sum_{x\in\OO} \mathbb{E}{|\A(\I'(x))|} \leq \left(4m_{\sigma}(k+1)^2\lfloor \log_2 2M \rfloor\lfloor \log_2 2n\rfloor\right)|\OO|$. Hence, the theorem follows.
\end{proof}

% \todo{the following is redundant. Why are you keeping it?}

% In the following corollary, we show that the algorithm $\ES$ achieves a competitive ratio of $O(\log n)$ for the hitting set problem of $n$ points in $\IR^2$ and translated copies of a regular $k$-gon (for proof, we refer to Appendix~\ref{sec: appendix algo-es}).

% \begin{corollary}\label{thm: algo-es}
%     The algorithm $\ES$ achieves a competitive ratio of at most~$4m_{\sigma}\lfloor\log_2 2n\rfloor$ for the hitting set problem of $n$ points in $\IR^2$ and translated copies of a regular $k$-gon $(k\geq 4)$.
% \end{corollary}

\section{Conclusion}\label{conclusion}
In this paper, we have studied the two variants of the online hitting set problem, depending on the point set $\P$. In the first variant, the point set $\P$ is the entire $\IZ^d$. 
For hitting similarly sized homothetic hypercubes in $\mathbb{R}^d$, we first show that no algorithm, be it randomized or deterministic, can have a competitive ratio better than $\Omega(d\log M)$. Then, for the same problem, we propose a randomized algorithm with a competitive ratio of~$O(d^2\log M)$. Next, for hitting similarly sized fat objects in $\IR^d$, we present a deterministic algorithm having a competitive ratio of at most~$\lfloor\frac{2}{\alpha}+2\rfloor^d\left(\lfloor\log_2 M\rfloor+1\right)$.
Though our results improve the existing upper and lower bounds for the problem, there is room for improvement.
% For hitting similarly sized fat objects in $\IR^d$, we propose a deterministic algorithm having a competitive ratio of at most~$\lfloor\frac{2}{\alpha}+2\rfloor^d\left(\lfloor\log_2 M\rfloor+1\right)$. For hitting similarly sized homothetic hypercubes in $\mathbb{R}^d$, we propose a randomized algorithm with a competitive ratio of~$O(d^2\log M)$. To complement these results, we show that no algorithm, be it randomized or deterministic, can have a competitive ratio better than $\Omega(d\log M)$.
% Though our results improve  the existing upper and lower bounds for the problem, there is a room for improvement. 
We state the following open problems.
\begin{enumerate}
    % \item Can the lower bound result of homothetic hypercubes can be extended to similarly sized fat objects in $\IR^d$?  
    \item Is there a lower bound on the competitive ratio for hitting similarly sized homothetic hypercubes that matches the upper bound of the problem? Is there any algorithm for hitting similarly sized homothetic hypercubes, having side lengths in the range $[1, M]$ with a competitive ratio of $O(d\log M)$?
    \item There is a huge gap between the lower and the upper bounds for hitting similarly sized fat objects in $\IR^d$. We propose bridging this gap further as a future direction of research. 
    % In particular, 
\end{enumerate}

In the second variant, the point set $\P\subset\IR^2$ such that $|\P|=n$. For hitting similarly sized homothetic regular $k$-gons, we obtain a randomized algorithm with a competitive ratio of~$O(\log n\log M)$, partially answering an open question for squares~\cite{AlefkhaniKM23, KhanLRSW23}.  We state the following open problems.
\begin{enumerate}
    \item Design an $O(\log n)$ competitive algorithm for hitting arbitrary disks in $\IR^2$ ($d=2$) using points from $\P\subset\IR^2$  such that $|\P|=n$.
    % \item \red{Design an $O(\log n)$-competitive algorithm for hitting translated copies of a ball in $\IR^d$ ($d\geq 3$) using points from $\P$, where $\P\subset\IR^d$ (known in advance) such that $|\P|=n$.} \todo{This might not be possible!!}
    % \item \blue{Design an $O(\log n\log M)$-competitive algorithm for hitting similarly-sized regular $k$-gon having diameter in the range $[1,M]$ in $\IR^2$ using points from $\P$, where $\P\subset\IR^2$ (known in advance) such that $|\P|=n$.} 
    \item {Design an $O(\log n)$ competitive algorithm for hitting similarly sized arbitrary oriented regular $k$-gons in $\IR^2$ using points from $\P\subset\IR^2$  such that $|\P|=n$.} 
\end{enumerate}

\newpage
\bibliography{references}

\begin{thebibliography}{10}

\bibitem{AgarwalP20}
P.~K. Agarwal and J.~Pan.
\newblock Near-linear algorithms for geometric hitting sets and set covers.
\newblock {\em Discret. Comput. Geom.}, 63(2):460--482, 2020.

\bibitem{AlefkhaniKM23}
S.~Alefkhani, N.~Khodaveisi, and M.~Mari.
\newblock Online hitting set of d-dimensional fat objects.
\newblock In {\em Approximation and Online Algorithms}, pages 134--144.
  Springer Nature Switzerland, 2023.

\bibitem{AlonAABN09}
N.~Alon, B.~Awerbuch, Y.~Azar, N.~Buchbinder, and J.~Naor.
\newblock The online set cover problem.
\newblock {\em {SIAM} J. Comput.}, 39(2):361--370, 2009.

\bibitem{CaragiannisFKP07}
I.~Caragiannis, A.~V. Fishkin, C.~Kaklamanis, and E.~Papaioannou.
\newblock Randomized on-line algorithms and lower bounds for computing large
  independent sets in disk graphs.
\newblock {\em Discret. Appl. Math.}, 155(2):119--136, 2007.

\bibitem{Chan03}
T.~M. Chan.
\newblock Polynomial-time approximation schemes for packing and piercing fat
  objects.
\newblock {\em J. Algorithms}, 46(2):178--189, 2003.

\bibitem{ChanH20}
T.~M. Chan and Q.~He.
\newblock Faster approximation algorithms for geometric set cover.
\newblock In {\em 36th International Symposium on Computational Geometry, SoCG
  2020, June 23-26, 2020}, volume 164 of {\em LIPIcs}, pages 27:1--27:14.
  Schloss Dagstuhl - Leibniz-Zentrum f{\"{u}}r Informatik, 2020.

\bibitem{ChanZ09}
T.~M. Chan and H.~Zarrabi{-}Zadeh.
\newblock A randomized algorithm for online unit clustering.
\newblock {\em Theory Comput. Syst.}, 45(3):486--496, 2009.

\bibitem{CharikarCFM04}
M.~Charikar, C.~Chekuri, T.~Feder, and R.~Motwani.
\newblock Incremental clustering and dynamic information retrieval.
\newblock {\em {SIAM} J. Comput.}, 33(6):1417--1440, 2004.

\bibitem{ChenKS09}
K.~Chen, H.~Kaplan, and M.~Sharir.
\newblock Online conflict-free coloring for halfplanes, congruent disks, and
  axis-parallel rectangles.
\newblock {\em {ACM} Trans. Algorithms}, 5(2):16:1--16:24, 2009.

\bibitem{ChewD85}
L.~P. Chew and R.~L. S.~D. III.
\newblock Voronoi diagrams based on convex distance functions.
\newblock In {\em Proceedings of the First Annual Symposium on Computational
  Geometry, Baltimore, Maryland, USA, June 5-7, 1985}, pages 235--244, 1985.

\bibitem{DeJKS24}
M.~De, S.~Jain, S.~V. Kallepalli, and S.~Singh.
\newblock Online geometric covering and piercing.
\newblock {\em Algorithmica}, pages 1--27, 2024.

\bibitem{DeS24b}
M.~De and S.~Singh.
\newblock Online hitting of unit balls and hypercubes in $\mathbb{R}^d$ using
  points from $\mathbb{Z}^d$.
\newblock {\em Theoretical Computer Science}, page 114452, 2024.

\bibitem{BergSVK02}
M.~de~Berg, A.~F. van~der Stappen, J.~Vleugels, and M.~J. Katz.
\newblock Realistic input models for geometric algorithms.
\newblock {\em Algorithmica}, 34(1):81--97, 2002.

\bibitem{DinurS14}
I.~Dinur and D.~Steurer.
\newblock Analytical approach to parallel repetition.
\newblock In D.~B. Shmoys, editor, {\em Symposium on Theory of Computing,
  {STOC} 2014, New York, NY, USA, May 31 - June 03, 2014}, pages 624--633.
  {ACM}, 2014.

\bibitem{DumitrescuGT20}
A.~Dumitrescu, A.~Ghosh, and C.~D. T{\'{o}}th.
\newblock Online unit covering in {E}uclidean space.
\newblock {\em Theor. Comput. Sci.}, 809:218--230, 2020.

\bibitem{DumitrescuT22}
A.~Dumitrescu and C.~D. T{\'{o}}th.
\newblock Online unit clustering and unit covering in higher dimensions.
\newblock {\em Algorithmica}, 2022.

\bibitem{EfratKNS00}
A.~Efrat, M.~J. Katz, F.~Nielsen, and M.~Sharir.
\newblock Dynamic data structures for fat objects and their applications.
\newblock {\em Comput. Geom.}, 15(4):215--227, 2000.

\bibitem{EvenS14}
G.~Even and S.~Smorodinsky.
\newblock Hitting sets online and unique-max coloring.
\newblock {\em Discret. Appl. Math.}, 178:71--82, 2014.

\bibitem{Feige98}
U.~Feige.
\newblock A threshold of ln \emph{n} for approximating set cover.
\newblock {\em J. {ACM}}, 45(4):634--652, 1998.

\bibitem{FowlerPT81}
R.~J. Fowler, M.~Paterson, and S.~L. Tanimoto.
\newblock Optimal packing and covering in the plane are np-complete.
\newblock {\em Inf. Process. Lett.}, 12(3):133--137, 1981.

\bibitem{Ganjugunte11}
S.~K. Ganjugunte.
\newblock {\em Geometric Hitting Sets and Their Variants}.
\newblock PhD thesis, Duke University, Durham, NC, {USA}, 2011.

\bibitem{10.6/574848}
M.~R. Garey and D.~S. Johnson.
\newblock {\em Computers and Intractability; A Guide to the Theory of
  NP-Completeness}.
\newblock W. H. Freeman \& Co., USA, 1990.

\bibitem{IckingKLM95}
C.~Icking, R.~Klein, N.~L{\^{e}}, and L.~Ma.
\newblock Convex distance functions in 3-space are different.
\newblock {\em Fundam. Informaticae}, 22(4):331--352, 1995.

\bibitem{KhanLRSW23}
A.~Khan, A.~Lonkar, S.~Rahul, A.~Subramanian, and A.~Wiese.
\newblock Online and dynamic algorithms for geometric set cover and hitting
  set.
\newblock In {\em 39th International Symposium on Computational Geometry, SoCG
  2023, June 12-15, 2023}, volume 258 of {\em LIPIcs}, pages 46:1--46:17.
  Schloss Dagstuhl - Leibniz-Zentrum f{\"{u}}r Informatik, 2023.

\bibitem{MustafaR10}
N.~H. Mustafa and S.~Ray.
\newblock Improved results on geometric hitting set problems.
\newblock {\em Discrete {\&} Computational Geometry}, 44(4):883--895, 2010.

\end{thebibliography}

\newpage
\appendix
\section*{Appendix}

\section{Angle Property of Regular $k$-gons}\label{sec: Structural Property of Regular k-gons}
In this section, we prove the angle property of regular $k$-gons (Lemma~\ref{lemma:regular}) that is needed to establish the competitive ratio of algorithms $\HHR$ and $\ES$. 
% First, we introduced some terminologies. We use the term \emph{connected component} to denote a maximal continuous set that cannot be covered by the union of two disjoint open sets.
% For two objects $\sigma$ and $\sigma'$, if $\sigma\cap\sigma'\neq\emptyset$, then $\sigma\cap\sigma'$ has some non-empty interior.
% Two objects $\sigma$ and $\sigma'$ are said to \emph{intersect} each other if $\sigma\cap\sigma'$ has some non-empty interior.

\vspace{1mm}\noindent
$\blacktriangleright$
\textbf{Lemma~\ref{lemma:regular}}~\textsf{(Angle Property)}\textbf{.}~\textit{Let $\sigma$ and $\sigma'$ be two distinct translated copies of a regular $k$-gon such that their interior has a non-empty intersection. Let $p$ be any point in $\sigma\cap \sigma'$. Let $x$ and $y$ be any two points in $\partial \sigma\cap \partial \sigma'$ such that $x$ and $y$ belong to two different connected components of $\partial \sigma\cap \partial \sigma'$. Then, the angle $\angle xpy$ of the triangle $\triangle xpy$ is at least~$\frac{\pi}{4}$.}

\vspace{1mm}
Before proving the above lemma, we describe some notations. A {regular $k$-gon} is said to be \emph{odd} (respectively, \emph{even}) when $k$ is odd (respectively, even). A \emph{unit regular $k$-gon} is a regular $k$-gon in which the smallest disk enclosing it is of unit radius. Let $\sigma=a_0a_1\ldots a_{k-1}$ be a regular $k$-gon.  For any $i\in[k-1]\cup\{0\}$, the \emph{opposite vertex/vertices} of $a_i$ is/are defined as $a_{\left(i+\lceil\frac{k}{2}\rceil\right)\text{mod}{(k)}} \text{ and }  a_{\left(i+\lfloor\frac{k}{2}\rfloor\right)\text{mod}{(k)}}$ $\Big($respectively, $a_{\left(i+\frac{k}{2}\right)\text{mod}{(k)}}\Big)$, when $k$ is odd (respectively, even). Let  $a_j$ be an opposite vertex of $a_i$. Then, the line segment $\overline{a_ia_j}$  is known as a \emph{principal diagonal} of $\sigma$. Note that a regular odd (respectively, even) $k$-gon has $k$ (respectively,  $\frac{k}{2}$) principal diagonals.

\begin{observation}\label{Obs:relation}
    Let $r_{out}$ (respectively, $r_{in}$) be the radius of the smallest circumscribed (respectively, largest inscribed) disk of (respectively, in) a regular $k$-gon $\sigma$, and let $s$ be the length of each side of $\sigma$.  Then,  
    $r_{in}=r_{out}\cos{\left(\frac{\pi}{k}\right)}$, and
     $s=2r_{out}\sin{\left(\frac{\pi}{k}\right)}$.
\end{observation}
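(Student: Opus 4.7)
The plan is to reduce the claim to elementary right-triangle trigonometry applied to the isoceles triangle formed by the center of $\sigma$ and one of its sides. This is a classical fact about regular polygons, and no subtle geometric insight is needed; the main task is simply to identify the correct right triangle and read off the sine and cosine.

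First, I would fix notation: let $O$ be the common center of the circumscribed and inscribed disks of $\sigma$, and let $a_0,a_1,\ldots,a_{k-1}$ be the vertices of $\sigma$ in cyclic order, each at distance $r_{out}$ from $O$. By the rotational symmetry of the regular $k$-gon, the central angle subtended by a single side satisfies $\angle a_i O a_{i+1} = 2\pi/k$ for every $i$.

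Next, I would fix any edge, say $\overline{a_0 a_1}$, and let $M$ be its midpoint. The triangle $\triangle O a_0 a_1$ is isoceles with $|Oa_0|=|Oa_1|=r_{out}$, so the segment $\overline{OM}$ is the perpendicular bisector of the edge and simultaneously bisects the apex angle, giving $\angle MOa_0 = \pi/k$ and a right angle at $M$. Moreover, because $\overline{OM}$ meets $\overline{a_0a_1}$ orthogonally at the midpoint, $M$ is exactly the point at which the largest inscribed disk touches the side $\overline{a_0 a_1}$; hence $|OM| = r_{in}$.

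Finally, I would apply the standard trigonometric identities to the right triangle $\triangle OMa_0$: the leg opposite the angle $\pi/k$ has length $|Ma_0| = r_{out}\sin(\pi/k)$, and the leg adjacent to it has length $|OM| = r_{out}\cos(\pi/k)$. Since $s = |a_0 a_1| = 2|Ma_0|$, these two identities read as $s = 2r_{out}\sin(\pi/k)$ and $r_{in} = r_{out}\cos(\pi/k)$, which is exactly the statement of the observation. The only potential pitfall here is justifying that $M$ really is the tangent point of the inscribed disk (rather than merely a point inside it); but this follows immediately from symmetry, since the inscribed disk is tangent to each side, and by the rotational symmetry of $\sigma$ about $O$ the tangency point on each side must be the midpoint.
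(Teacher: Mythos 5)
Your proof is correct and follows essentially the same approach as the paper: both consider the right triangle formed by the center, the midpoint of a side, and an adjacent vertex, and then read off $r_{in}$ and $s/2$ as the legs via cosine and sine of the half central angle $\pi/k$. Your write-up simply spells out the symmetry arguments (perpendicular bisector, tangency at the midpoint) that the paper treats as self-evident.
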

\begin{proof}
    Let $c$ be the center of $\sigma$, and let $b$ be the midpoint of an edge of $\sigma$ (see Figure~\ref{fig:Obs_1}). Note that  $\overline{ab}=\frac{s}{2}$. It is easy to observe that $\triangle  abc$ is a right-angled triangle. As a result, $r_{in}=r_{out}\cos(\frac{\pi}{k})$ and $\frac{s}{2}=r_{out}\sin{(\frac{\pi}{k})}$.
  \end{proof}

\begin{figure}[htbp]
    \centering
    \begin{subfigure}[b]{0.30\textwidth}
        \centering
        \includegraphics[page=1, width=27 mm] {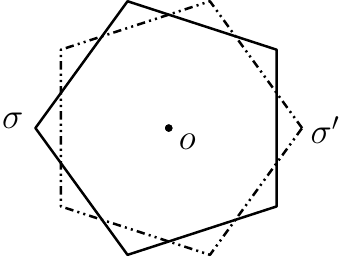}
        \caption{}
        \label{fig:ref}
    \end{subfigure}
    \begin{subfigure}[b]{0.30\textwidth}
        \centering
        \includegraphics[page=2, width=20 mm]{structure_k-gons.pdf}
        \caption{ }
        \label{fig:Obs_1}
    \end{subfigure}
    \begin{subfigure}[b]{0.30\textwidth}
        \centering
        \includegraphics[page=3, width=25 mm]{structure_k-gons.pdf}
        \caption{ }
        \label{fig:case_1.1}
    \end{subfigure}
    \caption{(a) The continuous black-colored regular pentagon is $\sigma$ and the dashed-dotted-dashed black-colored regular pentagon $-\sigma$ is the reflection of $\sigma$ through the center $o\in\sigma$. (b) Illustration of Observation~\ref{Obs:relation}. (c) Illustration of Observation~\ref{obs:0}.}
    \label{fig:notation}
\end{figure}

% \vspace{2mm}\noindent\emph{Structural Property.}
% \noindent We will prove the following important structural property of regular $k$-gon that will be used to analyze the algorithm $\ES$ (see Appendix~\ref{sec: appendix algo-es}). 

% \begin{lemma}\label{lemma:regular}
%     Let $\sigma$ and $\sigma'$ be two distinct translated copies of a regular $k$-gon such that $\sigma\cap \sigma' \neq \emptyset$. Let $p$ be any point in $\sigma\cap \sigma'$. Let $x$ and $y$ be any two points in $\partial \sigma\cap \partial \sigma'$ such that $x$ and $y$ belong to two different connected components of $\partial \sigma\cap \partial \sigma'$. Then, we have $\angle xpy$ is at least~$\frac{\pi}{4}$.
% \end{lemma}

\begin{observation}\label{obs:0}
    Let $p$ be any point in a square $\square abcd$. Then, $\angle apb \geq \frac{\pi}{4}$.
\end{observation}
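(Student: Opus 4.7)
My plan is to invoke the inscribed angle theorem on the circle $K$ circumscribed about the square $\square abcd$. Since $\square abcd$ is a square inscribed in $K$, the chord $ab$ subtends a central angle of exactly $\pi/2$ at the centre of $K$, and hence by the inscribed angle theorem every point on the major arc of $K$ from $a$ to $b$ (the arc through $c$ and $d$) sees $ab$ under angle exactly $\pi/4$. In particular, $\angle acb = \angle adb = \pi/4$, which already suggests that $\pi/4$ should be the extremal value on the whole square.

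I would then show that every point $p$ of the closed square lies in the circular segment $R$ bounded by the chord $ab$ and the major arc of $K$. Indeed, $\square abcd$ is the convex hull of $\{a,b,c,d\}\subset K$, so it is contained in the closed disk bounded by $K$; and it lies entirely in the closed half-plane determined by the line through $a$ and $b$ on the side containing $c$ and $d$. Intersecting these two observations places the whole square inside $R$.

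The crucial step is then the classical consequence of the inscribed angle theorem: on one fixed side of line $ab$, the loci $\{p : \angle apb = \theta\}$ for $\theta \in (0,\pi)$ are pairwise disjoint circular arcs through $a$ and $b$ (the underlying circle has radius $|ab|/(2\sin\theta)$) which sweep out the open half-plane, moving continuously closer to the segment $\overline{ab}$ as $\theta$ grows. Combined with the boundary data $\angle apb = \pi/4$ on the major arc of $K$ and $\angle apb \to \pi$ as $p$ approaches the interior of $\overline{ab}$, continuity forces $\angle apb \ge \pi/4$ throughout $R$, with equality only on the major arc. Since $\square abcd \subseteq R$, the observation follows. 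The main obstacle in a fully rigorous write-up is this nesting/monotonicity claim for the $\theta$-arcs; once it is established, the desired bound is immediate.
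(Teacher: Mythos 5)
Your proposal is correct, but it takes a genuinely different route from the paper. The paper's proof is entirely elementary: it first checks the bound for $p$ on the edge $\overline{cd}$ opposite $\overline{ab}$ (minimum $\pi/4$ attained at the corners $c,d$), then, for a general $p$ in the square, drops the foot $p'$ of the perpendicular from $p$ to $\overline{cd}$ and observes $\angle pab \le \angle p'ab$ and $\angle pba \le \angle p'ba$, so that by the angle-sum property $\angle apb \ge \angle ap'b \ge \pi/4$. You instead work with the circumscribed circle $K$: the inscribed-angle theorem gives $\angle apb = \pi/4$ exactly on the major arc, the square is contained in the circular segment $R$ between $\overline{ab}$ and that arc, and the nested family of constant-angle arcs then yields $\angle apb \ge \pi/4$ throughout $R$. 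Both are sound; your version is more conceptual and identifies the level sets of $\angle apb$, while the paper's is shorter and needs nothing beyond the angle sum of a triangle. The ``nesting/monotonicity of the $\theta$-arcs'' step you flag as the main gap is indeed the one thing your sketch leaves informal; if you want to avoid it entirely, note that for $p$ interior to $R$ you can extend the ray $ap$ past $p$ until it meets the major arc at $q$, and then the exterior-angle theorem in triangle $\triangle pqb$ gives $\angle apb = \angle aqb + \angle qbp \ge \angle aqb = \pi/4$ directly, making the argument as self-contained as the paper's. (Both your write-up and the paper's implicitly exclude the degenerate cases $p=a$ and $p=b$, where $\angle apb$ is undefined; that convention is harmless.)
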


\begin{proof}
    First, consider the case when $p$ lies on edge  $L=\overline{cd}$.
    It is easy to observe that the angle $\angle apb$ achieves its minimum value of $\frac{\pi}{4}$ when $p$ coincides with either $c$ or $d$. Thus, the angle $\angle apb \geq \frac{\pi}{4}$, whenever $p$ is on edge $L$. Now, consider the case when the point $p$ is any point in the square $\square abcd$ (see Figure~\ref{fig:case_1.1}). Let $L'$ be a line perpendicular to $L$ and pass through the point $p$. Assume that  $L$ and $L'$ intersect at a point $p'$. Consider the following two triangles: $\triangle apb$ and $\triangle ap'b$. Note that $\angle pab$ and $\angle pba$ are at most equal to $\angle p'ab$ and $\angle p'ba$, respectively. Due to the angle sum property of triangles, we have $\angle~apb~\geq\angle~ap'b~\geq~\frac{\pi}{4}$.
\end{proof}

% \noindent 
Now, we will prove the angle property of regular $k$-gon.

% \vspace{1mm}\noindent
% $\blacktriangleright$
% \textbf{Lemma~\ref{lemma:regular}.}\textit{
% Let $\sigma$ and $\sigma'$ be two distinct translated copies of a regular $k$-gon such that their interior has a non-empty intersection. Let $p$ be any point in $\sigma\cap \sigma'$. Let $x$ and $y$ be any two points in $\partial \sigma\cap \partial \sigma'$ such that $x$ and $y$ belong to two different connected components of $\partial \sigma\cap \partial \sigma'$. Then, we have $\angle xpy$ is at least~$\frac{\pi}{4}$.}

\noindent
\textbf{Proof of Lemma~\ref{lemma:regular}.}
% \begin{proof}
% \vspace{1mm}\noindent\textbf{Proof of Lemma~\ref{lemma:regular}.}
    Let $\sigma$ and $\sigma'$ be two translated copies of a regular  $k$-gon. If $p\in \sigma\cap \sigma'$ is a point on the line segment $\overline{xy}$, we have nothing to prove. Thus, w.l.o.g., let us assume that $p$ is to the right of $\overline{xy}$. Let $R$ be the common intersection region of $\sigma\cap \sigma'$ lying to the right of the line segment $\overline{xy}$.
    Let $V=\{v \text{ is a vertex of }\sigma\ |\ v \text{ does not belongs to }\sigma'\}$. Let $D$ be the collection of principal diagonals of $\sigma$ such that both endpoints of the principal diagonal are in $V$.  We have two cases depending on whether $D$ is empty or not.

    \vspace{1mm}\noindent\textbf{Case~1.} $D$ is non-empty. Let $\overline{ab}\in D$. Also, let $\square abcd$ be a square that lies to the right side of $\overline{ab}$ (see Figure~\ref{fig:lemma:odd}). The principal diagonal is the largest line segment inside the regular $k$-gon. Thus, the square $\square abcd$ contains $R$. Now, notice that $\angle xpy$ is at least as big as $\angle apb$. Due to Observation~\ref{obs:0}, we have, $\angle apb \geq \frac{\pi}{4}$. Thus, $\angle xpy \geq \frac{\pi}{4}$.

\begin{figure}[htbp]
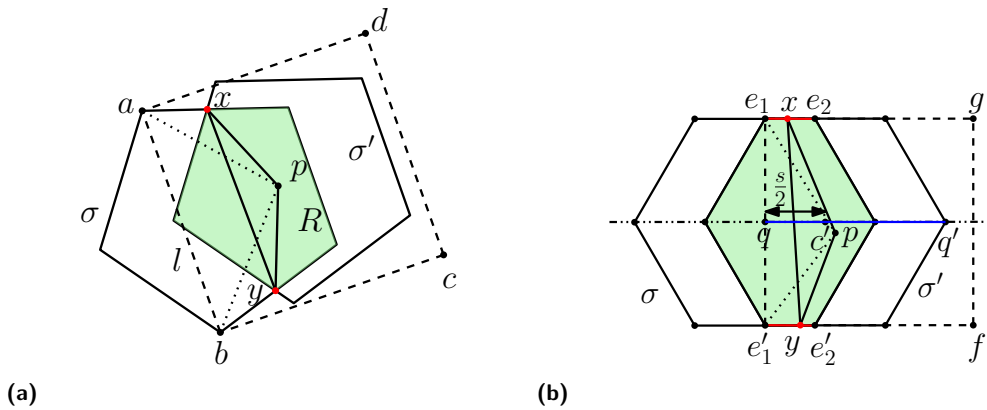

    \centering
    % \begin{subfigure}[b]{0.30\textwidth}
    %     \centering
    %     \includegraphics[page=4, width= 28 mm]{structure k-gons.pdf}
    %     \caption{}
    %     \label{fig:20}
    % \end{subfigure}
    % \centering
    \begin{subfigure}[b]{0.49\textwidth}
         \centering
        \includegraphics[page=5, width=50 mm]{structure_k-gons.pdf}
         \caption{}
         \label{fig:lemma:odd}
    \end{subfigure}
    \begin{subfigure}[b]{0.49\textwidth}
        \centering
        \includegraphics[page=6, width=50 mm]{structure_k-gons.pdf}
        \caption{}
        \label{fig:evengon_case2}
    \end{subfigure}
    \caption{
    % (a) The object $\sigma$ is a unit disk, and the square tile is of side length $\frac{1}{2}$. Each of the  grey-colored tiles has a non-empty intersection with $\sigma$; 
    Illustration of Lemma~\ref{lemma:regular}. (a) Case 1 and (b) Case 2. The red regions denote the two connected components of $\partial\sigma\cap\partial\sigma'$.}
    \label{fig:case_6}
\end{figure}

    \noindent\textbf{Case~2.} 
    $D$ is empty. Since $D=\emptyset$, the common intersection region $\sigma\cap\sigma'$ has at least  $\lceil \frac{k}{2}\rceil$ consecutive vertices of $\sigma$. For regular odd $k$-gons, this is only possible when $\sigma$ and $\sigma'$ coincide. Since $\sigma$ and $\sigma'$ are distinct translated copies, the set $D$ is always non-empty for regular odd $k$-gons. When $\sigma$ and $\sigma'$ are translated copies of a regular even $k$-gon and  $\sigma\cap\sigma'$ has at least $\lceil \frac{k}{2}\rceil$ consecutive vertices of $\sigma$, then  $\partial\sigma\cap\partial\sigma'$ consists of two parallel edges, say, $e=\overline{e_1e_2}$ and $e'=\overline{e_1'e_2'}$ (see Figure~\ref{fig:evengon_case2}). Let $x\in e$ and $y\in e'$ be two arbitrary points. Let $\square {e_1 e_1'f g}$ be a square containing the line segment $\overline{xy}$. 
    Let $L$ be a line perpendicular to the line segment $\overline{e_1e_1'}$ and pass through the center $c'$ of $\sigma'$. Let   $L$  and $\overline{e_1e_1'}$ intersect at a point $q$. Let $q'$ be the point of intersection with  $L$ and $\sigma'$ lying to the right side of $\overline{xy}$. 
    It is easy to observe that the length of the line segment $\overline{qc'}$ is $\frac{s}{2}$, where $s$ is the side length of $\sigma$.  The length of the line segment $\overline{c'q'}$ is at most $r_{out}$, where $r_{out}$  is the radius of the smallest circumscribed disk of $\sigma$. Consequently, the length of the line segment $\overline{qq'}$ is at most $r_{out}+\frac{s}{2}$. Note that the length of $\overline{e_1e_1'}$ is $2r_{in}$, where $r_{in}$ is the radius of the largest inscribed disk of $\sigma$.
    Due to Observation~\ref{Obs:relation}, the length of $\overline{e_1e_1'}$ is $2r_{out}\cos (\pi/k)\geq2r_{out}\cos (\pi/6)=\sqrt{3}r_{out}$ and the length of $\overline{qq'}$ is at most $r_{out}(1+\sin (\pi/k))\leq r_{out}(1+\sin (\pi/6))= 1.5r_{out}$. Observe that the length of $\overline{e_1e_1'}$ is greater than the length of $\overline{qq'}$.  Thus, the square $\square e_1 e_1'fg$ contains $R$.
    Notice that $\angle xpy$ is at least as big as $\angle e_1pe_1'$. Due to Observation~\ref{obs:0}, we have $\angle e_1pe_1' \geq \frac{\pi}{4}$. Thus, $\angle xpy \geq~\frac{\pi}{4}$. Hence, Lemma~\ref{lemma:regular} holds for unit regular $k$-gons with  $k\geq 4$.
% \end{proof} 

\section{Missing Proofs}
In this section, we give the proofs of the technical theorems and lemmas that are missing in the main body of the paper.

\subsection{Missing Proof of Lemma~\ref{obs: homothetic child node relation}}\label{App_obs: homothetic child node relation}

\noindent
$\blacktriangleright$ 
\textbf{Lemma~\ref{obs: homothetic child node relation}.} 
        \textit{Let $N$, $N'$ and $N''$ be three nodes of the tree $\T$ such that $N'$ and $N''$ are the two children of $N$. Also, let $\sigma'$ and $\sigma''$ be two hypercubes corresponding to the nodes $N'$ and $N''$, respectively. Then, $\sigma'$ and $\sigma''$ are interior disjoint.}
        
    \begin{proof}
        Let $N=N^{k,l}_{i,j}$ for some $k\in[\log_4 M]$, $l\in [2^{d(k-1)}]$, $i\in [d]$ and $j\in [2^{i-1}]$. First, consider the case when $N=N^{k,l}_{i,j}$ is an internal node of $T^{k,l}_{B}$, i.e., 
       $k\in[\log_4 M]$, $l\in [2^{d(k-1)}]$, $i\in [d-1]$ and $j\in [2^{i-1}]$. Then, we have $N'=N^{k,l}_{i+1,2j-1}$ and $N''=N^{k,l}_{i+1,2j}$, and the two hypercubes $\sigma'$ and $\sigma''$ are centered at points $c^{k,l}_{i+1,2j-1}$ and $c^{k,l}_{i+1,2j}$, respectively, having a width $\frac{M}{4^{k-1}}$. Recall that $c^{k,l}_{i+1,2j-1}=c^{k,l}_{i,j}+\frac{M}{4^{k-1}}{\bf e}_{i}$ and $c^{k,l}_{i+1,2j}=c^{k,l}_{i,j}-\frac{M}{4^{k-1}}{\bf e}_{i}$. Thus, the centers of the two hypercubes $\sigma'$ and $\sigma''$ only differ in the $i$th coordinate, and the distance between them is exactly $\frac{2M}{4^{k-1}}$. Since both $\sigma'$ and $\sigma''$ are of width $\frac{M}{4^{k-1}}$,  they are interior disjoint.

        Now, consider the case when $N=N^{k,l}_{d,j}$ is a leaf node of $T^{k,l}_{B}$, i.e., $k\in[\log_4 M-1]$, $l\in [2^{d(k-1)}]$ and $j\in [2^{d-1}]$. Then, we have $N'=N^{k+1,2^d(l-1)+2j-1}_{1,1}$ and $N''=N^{k+1,2^d(l-1)+2j}_{1,1}$, and  the two hypercubes $\sigma'$ and $\sigma''$ are centered at points $c^{k+1,2^d(l-1)+2j-1}_{1,1}$ and $c^{k+1,2^d(l-1)+2j}_{1,1}$, respectively, having a width $\frac{M}{4^{k}}$. Recall that $c^{k+1,2^d(l-1)+2j-1}_{1,1}=\frac{c^{k,l}_{1,1}+c^{k,l}_{d,j}}{2}+\frac{M}{2^{2k-1}}{\bf e}_{d}$ and $c^{k+1,2^d(l-1)+2j}_{1,1}=\frac{c^{k,l}_{1,1}+c^{k,l}_{d,j}}{2}-\frac{M}{2^{2k-1}}{\bf e}_{d}$. Thus, the centers of the two hypercubes $\sigma'$ and $\sigma''$ only differ in the $d$th coordinate, and the distance between them is exactly $\frac{M}{2^{2k-2}}=\frac{4M}{4^{k}}$. Since both $\sigma'$ and $\sigma''$ are of width $\frac{M}{4^{k}}$, they are interior disjoint.
    \end{proof}

\subsection{Missing Proof of Lemma~\ref{claim:lb_hyper}}\label{App_claim:lb_hyper}

\noindent
$\blacktriangleright$ \textbf{Lemma~\ref{claim:lb_hyper}.} \textit{Let $\T^{k,l}_B$ be a tree for some $k\in[\log_4 M-1]$ and $l\in[2^{d(k-1)}]$. Let ${\cal J}=(\sigma_u)_{u=1}^{d}$ be a sequence of hypercubes having a width $w=\frac{M}{4^{k-1}}$ corresponding to nodes along a path from the root $N_{1,1}^{k,l}$ to a leaf $N^{k,l}_{d,j}$ of the tree $\T^{k,l}_B$ for some $j\in[2^{d-1}]$. Then, we have the following:}
\begin{enumerate}
    \item[\textit{(i)}] \textit{The common intersection region $Q=\cap_{u=1}^{d} \sigma_u$ contains two interior disjoint hypercubes $H_1$ and $H_2$ having side length $\frac{w}{2}$.}
    % centered at  $c^{k+1,2^d(l-1)+2j-1}_{0,1}=c^{k,l}_{0,j}+\sum_{v=1}^{d-1} sign(v)\frac{M}{2^{k+1}} {\bf e}_{v}+\frac{M}{2^{k+1}}{\bf e}_{d}$ and $c^{k+1,2^d(l-1)+2j}_{0,1}=c^{k,l}_{0,j}+\sum_{v=1}^{d-1} sign(v)\frac{M}{2^{k+1}} {\bf e}_{v}-\frac{M}{2^{k+1}}{\bf e}_{d}$, respectively.
    
    \item[\textit{(ii)}] \textit{Let $\T_B^{k+1,l'}$ be a tree rooted at a child of the leaf node $N^{k,l}_{d,j}$ of the tree $\T_B^{k,l}$, where $l'\in \{2^d(l-1)+2j-1, 2^d(l-1)+2j\}$. Let ${\Pi}=(\sigma'_u)_{u=1}^{d}$ be a sequence of hypercubes corresponding to a path from the root to a leaf of the tree $\T_B^{k+1,l'}$. Then, the union of the hypercubes $\cup_{u=1}^{d} \sigma'_u$ is contained in $\cap_{u=1}^{d} \sigma_u$.}
    % For $k+1\in[\log_4 M-1]$, let ${\cal J}=(\sigma_u)_{u=0}^{d-1}$ be a sequence of hypercubes having a width $K=\frac{M}{4^k}$, corresponding to nodes along a path from the root to one of the leaves of the tree $T^{k,l}_B$. 
    % Let $N^{k,l}_{d-1,j}$ be a leaf node of the tree $T_B^{k,l}$ is  for some $j\in[2^{d-1}]$. The leaf node $N^{k,l}_{d-1,j}$ of the tree $T_B^{k,l}$ has $N_l=N^{k+1,2^d(l-1)+2j-1}_{0,1}$ (respectively, $N_r=N^{k+1,2^d(l-1)+2j}_{0,1}$) as the left (respectively, right) child. Let ${\cal K}=(\sigma_v)_{v=d}^{2d-1}$ (respectively, ${\cal K}'=(\sigma_v)_{v=d'}^{2d-1'}$)  be a sequence of hypercubes having a width $\frac{w}{4}$, corresponding to nodes along a path $P$ from the root to one of the leaves of the tree $T_l=T^{k+1,2^d(l-1)+2j-1}_B$ (respectively, $T_r=T^{k+1,2^d(l-1)+2j}_B$). Then, $\cup_{v=d}^{2d-1} \sigma_v$ (respectively, $\cup_{v=d'}^{2d-1'} \sigma_v$) contained in $\cap_{u=0}^{d-1} \sigma_u$.
\end{enumerate}
    
    % We call a series of $d$ consecutive rounds of the game that fulfil the conditions of the above-mentioned lemma as a $GSL(r)$. Since the adversary adaptively presents $d$ hypercubes inside the integer hypercube $S$, the hypercube $S$ is the \emph{starter}, while the integer hypercube $E$ is the \emph{empty hypercube} since it does not contain any of the hitting points placed by the game $GSL(r)$.

\begin{proof}
    First, from the construction of the hypercubes, recall that the center of the hypercube $\sigma_1$ corresponding to the node $N^{k,l}_{1,1}$ is $c^{k,l}_{1,1}$. Now, consider the next hypercube $\sigma_2$. Recall that the center of $\sigma_2$ is a translated copy of the center of $\sigma_1$ differing only in the $1$st coordinate. Specifically, this translation is by $w{\bf e}_1$ (or $-w{\bf e}_1$) for left (or right) child. Hence, $c^{k,l}_{2,j'}=c^{k,l}_{1,1}+sign(\sigma_2)w{\bf e}_{1}$, where $w=\frac{M}{4^{k-1}}$ and the value of $sign(\sigma_2)=1$ if $\sigma_{2}$ is the left child of $\sigma_{1}$, and $-1$ if $\sigma_{2}$ is the right child of $\sigma_{1}$. As a result of this, and $\sigma_1$ and $\sigma_2$ are both of width $w$, we also have $\sigma_2=\sigma_{1}+sign(\sigma_2)w{\bf e}_{1}$. Similarly, for each $u\in[d]\setminus\{1,2\}$, we have $c^{k,l}_{u,j'}=c^{k,l}_{u-1, \lfloor\frac{j'}{2}\rfloor}+sign(\sigma_u)w{\bf e}_{u-1}$ and $\sigma_u=\sigma_{u-1}+sign(\sigma_u)w{\bf e}_{u-1}$, where the value of $sign(\sigma_u)=1$ if $\sigma_{u}$ is the left child of $\sigma_{u-1}$, and $-1$ if $\sigma_{u}$ is the right child of $\sigma_{u-1}$. Therefore, for each $u\in[d]\setminus\{1\}$ and $j'\in [2^{u-1}]$, we have

    \vspace{-3mm}
    \begin{equation}\label{eq: c}
        c^{k,l}_{u,j'}=c^{k,l}_{1, 1}+\sum_{v=2}^{u} sign(\sigma_v)w{\bf e}_{v-1}
    \end{equation}
    and 
    \begin{equation}\label{eq: sigma}
        \sigma_u=\sigma_{1}+\sum_{v=2}^{u} sign(\sigma_v)w{\bf e}_{v-1}.
    \end{equation}

\begin{enumerate}
    \item[(i)] 
    % First, from the construction of the hypercubes, recall that the center of the hypercube $\sigma_1$ corresponding to the node $N^{k,l}_{1,1}$ is $c^{k,l}_{1,1}$. Now, consider the path from the root $N^{k,l}_{1,1}$ to the leaf $N^{k,l}_{d,j}$ of the tree $\T^{k,l}_B$. Observe that for any $i\in[d]\setminus\{1\}$, we have $\sigma_{i}$ as a child of $\sigma_{i-1}$ such that $\sigma_i=\sigma_{i-1}+sign(\sigma_i)w{\bf e}_{i-1}$, \blue{where $w=\frac{M}{4^{k-1}}$, and the value of $sign(\sigma_i)=1$ if $\sigma_{i}$ is the left child of $\sigma_{i-1}$ and $-1$ if $\sigma_{i}$ is the right child of $\sigma_{i-1}$}.
    % As a result, for each $u\in [d]\setminus\{1\}$, the center of the hypercube $\sigma_u$ is $c^{k,l}_{u,j}=c^{k,l}_{1,1}+\sum_{v=2}^u sign(\sigma_v) w{\bf e}_{v-1}$.
    % Since the leaf node corresponding to hypercube $\sigma_{d-1}$ is $N^{k,l}_{d-1,j}$ for some $j\in[2^{d-1}]$, the center of $\sigma_{d-1}$ is $c^{k,l}_{d-1,j}=c^{k,l}_{0,j}+\sum_{v=1}^{d-1} sign(\sigma_v) w {\bf e}_{v}$. Thus, for any $u\in[d-1]$, the center of $\sigma_{u}$ is  $c^{k,l}_{u,j}=c^{k,l}_{0,j}+\sum_{v=1}^u sign(\sigma_v) w{\bf e}_{v}$. 
    Since $\sigma_2=\sigma_1 +sign(\sigma_2)w{\bf e}_1$, we have that $\sigma_1\cap\sigma_2$ is a hyperrectangle centered at the point $c^{k,l}_{1,1}+sign(\sigma_2)\frac{w}{2}{\bf e}_{1}$ whose side corresponding to the first coordinate is of length $w$, and all the remaining $(d-1)$ sides are of length $2w$. Since $\sigma_3=\sigma_2+sign(\sigma_3) w{\bf e}_2=\sigma_1+\sum_{v=2}^{3}sign(\sigma_v) w{\bf e}_{v-1}$ (due to Equation~\ref{eq: sigma}), thus $\sigma_1\cap\sigma_2\cap\sigma_3$ is a hyperrectangle centered at the point $c^{k,l}_{1,1}+\sum_{v=2}^{3}sign(\sigma_v)\frac{w}{2}{\bf e}_{v-1}$ whose sides corresponding to the first two coordinates are of length $w$ each, and the remaining $(d-2)$ sides are of length $2w$. Since for any $u\in[d]\setminus\{1\}$, the hypercube $\sigma_u=\sigma_{u-1}+sign(\sigma_u) w{\bf e}_{u-1}=\sigma_{1}+\sum_{v=2}^{u} sign(\sigma_v)w{\bf e}_{v-1}$ (due to Equation~\ref{eq: sigma}), we have $Q=\cap_{u=1}^{d} \sigma_{u}$ as a hyperrectangle centered at a point $c^{k,l}_{1, 1}+\sum_{v=2}^{u} sign(\sigma_v)\frac{w}{2}{\bf e}_{v-1}$ whose sides corresponding to the first $(d-1)$ coordinates are of length $w$ each, and the remaining side is of length  $2w$.
    
    \quad As a result, the common intersection region $Q$ will contain two interior disjoint hypercubes $H_1$ and $H_2$ of width $\frac{w}{2}$ centered at $p_1$ and $p_2$, respectively, where
    \begin{align*}
        p_1&=c^{k,l}_{1,1}+\sum_{u=2}^{d} sign(\sigma_u)\frac{w}{2} {\bf e}_{u-1}+\frac{w}{2}{\bf e}_{d}\\
        &=\frac{c^{k,l}_{1,1}}{2}+\left(\frac{c^{k,l}_{1,1}}{2}+\frac{1}{2}\sum_{u=2}^{d} sign(\sigma_u)w {\bf e}_{u-1}\right)+\frac{w}{2}{\bf e}_{d}\\
        &=\frac{c^{k,l}_{1,1}}{2}+\frac{c^{k,l}_{d,j}}{2}+\frac{w}{2}{\bf e}_{d}
        =\frac{c^{k,l}_{1,1}+c^{k,l}_{d,j}}{2}+\frac{w}{2}{\bf e}_{d},
        % =c^{k+1,2^d(l-1)+2j-1}_{1,1},
        % \tag{Since $c^{k,l}_{d,j}=c^{k,l}_{1,1}+\sum_{u=2}^{d} sign(\sigma_u)w {\bf e}_{u-1}$}
        \tag{due to Equation~\ref{eq: c}}
    \end{align*}
 and 

 \begin{align*}
        p_2&=c^{k,l}_{1,1}+\sum_{u=2}^{d} sign(\sigma_u)\frac{w}{2} {\bf e}_{u-1}-\frac{w}{2}{\bf e}_{d}\\
        &=\frac{c^{k,l}_{1,1}}{2}+\left(\frac{c^{k,l}_{1,1}}{2}+\frac{1}{2}\sum_{u=2}^{d} sign(\sigma_u)w {\bf e}_{u-1}\right)-\frac{w}{2}{\bf e}_{d}\\
        &=\frac{c^{k,l}_{1,1}}{2}+\frac{c^{k,l}_{d,j}}{2}-\frac{w}{2}{\bf e}_{d}
        =\frac{c^{k,l}_{1,1}+c^{k,l}_{d,j}}{2}-\frac{w}{2}{\bf e}_{d}.
        % =c^{k+1,2^d(l-1)+2j}_{1,1}.
        \tag{due to Equation~\ref{eq: c}}
    \end{align*}
 % $p_2=c^{k,l}_{1,1}+\sum_{u=1}^{d} sign(u)\frac{w}{2} {\bf e}_{u-1}-\frac{w}{2}{\bf e}_{d}=\frac{c^{k,l}_{1,1}+c^{k,l}_{d,j}}{2}-\frac{M}{2^{2k-1}}{\bf e}_{d}=c^{k+1,2^d(l-1)+2j}_{1,1}$, respectively. 
    % Hence, the part (a) of the claim follows.

    \item[(ii)] 
    % Now, we will prove the part (b) of the claim.
    % W.l.o.g., let us assume that $\sigma'_1$ is a hypercube corresponding to the left child $N^{k+1,2^d(l-1)+2j-1}_{1,1}$ of the node $N^{k,l}_{d,j}$. The other case, i.e., $\sigma'_1$ is a hypercube corresponding to the right child $N^{k+1,2^d(l-1)+2j}_{1,1}$ of $N^{k,l}_{d,j}$, can be done in a similar way.
    From the construction of the centers of the hypercubes, we have that $\sigma'_1$ is centered at the point $\frac{c^{k,l}_{1,1}+c^{k,l}_{d,j}}{2}+sign(\sigma'_1)\frac{w}{2}{\bf e}_{d}$, where $sign(\sigma'_1)=1$ if $\sigma'_{1}$ is the left child of $\sigma_{d}$, and $sign(\sigma'_1)=-1$ if $\sigma'_{1}$ is the right child of $\sigma_{d}$.
    W.l.o.g., assume that $\sigma'_1$ is a hypercube corresponding to the left child of the node $N^{k,l}_{d,j}$. The other case, i.e., $\sigma'_1$ is a hypercube corresponding to the right child of $N^{k,l}_{d,j}$, can be done in a similar way.
    From (i), we have that $\cap_{u=1}^{d} \sigma_u$ contains a hypercube $H_1$ having width $\frac{w}{2}$ centered at the point $\frac{c^{k,l}_{1,1}+c^{k,l}_{d,j}}{2}+\frac{w}{2}{\bf e}_{d}$.
    Notice that $\sigma'_1$ and $H_1$ are concentric hypercubes having widths $\frac{w}{4}$ and $\frac{w}{2}$, respectively. Hence, $\sigma'_1\subset H_1$.
    In a similar way as we have Equation~\ref{eq: sigma}, for each $u\in[d]\setminus\{1\}$, we also have $\sigma'_u=\sigma'_{1}+\sum_{v=2}^{u} sign(\sigma'_v)\frac{w}{4}{\bf e}_{v-1}$, where $sign(\sigma'_v)=1$ if $\sigma'_{v}$ is the left child of $\sigma'_{v-1}$, and $sign(\sigma'_v)=-1$ if $\sigma'_{v}$ is the right child of $\sigma'_{v-1}$.
    % Observe that for each $u\in[d]\setminus\{1\}$, we have $\sigma'_{u}$ as a child of $\sigma'_{u-1}$ such that $\sigma'_u=\sigma'_{u-1}+sign(u)\frac{w}{4}{\bf e}_{u-1}$, where $sign(u)=1$ if $\sigma'_{u}$ is the left child of $\sigma'_{u-1}$, and $sign(u)=-1$ if $\sigma'_{u}$ is the right child of $\sigma'_{u-1}$. 
    % Since for any $u\in[d]\setminus\{1\}$, we have the hypercube $\sigma'_u=\sigma'_{u-1}+sign(u) \frac{w}{4}{\bf e}_{u-1}$,
    Thus, the centers of the hypercubes $\sigma'_1$ and $\sigma'_u$ only differ in the first $(u-1)$th coordinates, and the difference is exactly $\frac{w}{4}$ at each coordinate for $u\in[d]\setminus\{1\}$. As a result of this, and $\sigma'_1$ and $H_1$ are concentric hypercubes, and the hypercube $\sigma'_u$ is of width $\frac{w}{4}$, we have that $\sigma'_u\subset H_1$ for each $u\in[d]\setminus\{1\}$.
    % Thus, the maximum distance from the center $c^{k+1,2^d(l-1)+2j-1}_{1,1}$ of $\sigma'_1$ to the hypercube $\sigma'_u$ is at most $\frac{w}{4}$. As a result, for each $u\in[d]\setminus\{1\}$, the hypercube $\sigma'_u\subset H_1$. 
    Therefore, we have $\cup_{u=1}^{d} \sigma'_u\subseteq H_1 \subseteq \cap_{u=1}^{d} \sigma_u$. Hence, the lemma follows.

\end{enumerate}
\end{proof}

\subsection{Missing Proof of Claim~\ref{clm_core}}\label{App_clm_core}
\noindent
$\vartriangleright$ \textsf{Claim~\ref{clm_core}.}\textit{
  $\Q_{s(k)}(\sigma)=\Q_{s(k)}(H)$.}

\begin{proof}
    First, we demonstrate $\Q_{s(k)}(\sigma)\subseteq\Q_{s(k)}(H)$.
Since the convex hull $C$ of $\sigma$ containing all the points of $\sigma$ from $\IZ_k$ is totally contained inside $H$, we have $\Q_{s(k)}(\sigma)\subseteq\Q_{s(k)}(H)$. Next, we prove $\Q_{s(k)}(H)\subseteq\Q_{s(k)}(\sigma)$. 
% Recall that $H$ is a hypercube having side length $2^{j+1}$, and shares the same center. 
If the side length of $C$ is $2s(k)$, then that will also work as a side length of $H$. If the side length of $C$ is $s(k)$, we extend it by $\frac{s(k)}{2}$ in both directions to make it of length $2s(k)$, considering it as the side of $H$. Since we are extending it only by $\frac{s(k)}{2}$, it will not contain any new point from $\IZ_k$ apart from the points of $C$. Consequently,  we have $\Q_{s(k)}(H)\subseteq\Q_{s(k)}(\sigma)$. Hence, the claim follows.
\end{proof}

\subsection{Missing Proof of Claim~\ref{clm: r closest point}}\label{App_clm: r closest point}
% \noindent
% $\blacktriangleright$
\begin{claim}\label{clm: r closest point}
    The point $r$ is the closest point of $(2^{i+1}\mathbb{Z})^d$ from the centre $c$.
\end{claim}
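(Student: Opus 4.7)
The plan is to argue coordinate-wise. Because the $L_\infty$-distance decomposes as a maximum of per-coordinate absolute differences, and because the set $(2^{i+1}\mathbb{Z})^d$ is itself a product of per-coordinate sets $2^{i+1}\mathbb{Z}$, minimizing $d_\infty(\cdot, c)$ over $(2^{i+1}\mathbb{Z})^d$ is equivalent to minimizing $|t - c(x_j)|$ over $t \in 2^{i+1}\mathbb{Z}$ independently for each $j \in [d]$. So it would suffice to show that for every $j$, the value $r(x_j)$ defined by the algorithm is a nearest element of $2^{i+1}\mathbb{Z}$ to $c(x_j)$.

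To see this, recall the decomposition $c(x_j) = z_j + f_j$ with $z_j \in 2^{i+1}\mathbb{Z}$ and $f_j \in [0, 2^{i+1})$. The only two candidate nearest multiples in $2^{i+1}\mathbb{Z}$ are $z_j$, at distance $f_j$, and $z_j + 2^{i+1}$, at distance $2^{i+1} - f_j$; every other element of $2^{i+1}\mathbb{Z}$ is at distance at least $2^{i+1}$ from $c(x_j)$. The break-even point is $f_j = 2^i$: when $f_j \in [0, 2^i)$ we have $f_j < 2^{i+1} - f_j$, making $z_j$ uniquely closest and matching the algorithm's choice $r(x_j) = z_j$; when $f_j \in [2^i, 2^{i+1})$ we have $2^{i+1} - f_j \leq f_j$, making $z_j + 2^{i+1}$ a nearest multiple, again matching the algorithm's choice $r(x_j) = z_j + 2^{i+1}$.

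Putting the two ingredients together: for any $r' \in (2^{i+1}\mathbb{Z})^d$, the per-coordinate optimality gives $|r'(x_j) - c(x_j)| \geq |r(x_j) - c(x_j)|$ for every $j \in [d]$, whence
\[
d_\infty(r', c) \;=\; \max_{j\in[d]} |r'(x_j) - c(x_j)| \;\geq\; \max_{j\in[d]} |r(x_j) - c(x_j)| \;=\; d_\infty(r, c).
\]
Thus $r$ attains the minimum $L_\infty$-distance from $c$ to $(2^{i+1}\mathbb{Z})^d$, which is exactly the content of the claim. There is no real obstacle; the only subtlety is the boundary case $f_j = 2^i$, where the nearest multiple is not unique, but this does not affect the argument since we only need $r$ to achieve the minimum distance, not to be the unique minimizer.
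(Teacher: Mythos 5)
Your proof is correct and in fact more careful than the paper's. Both arguments reduce to a per-coordinate analysis of the $L_\infty$-distance over the product set $(2^{i+1}\mathbb{Z})^d$, so the high-level strategy is the same; the difference is in the per-coordinate step. You identify the only two competitive candidates $z_j$ and $z_j + 2^{i+1}$ (everything else in $2^{i+1}\mathbb{Z}$ is at distance at least $2^{i+1}$ from $c(x_j)$), and verify that the algorithm's threshold at $f_j = 2^i$ always selects a nearest one. The paper instead applies the reverse triangle inequality to get $|r'(x_j)-c(x_j)| \geq \bigl||r'(x_j)-z_j|-f_j\bigr|$, and then asserts this is $\geq f_j$ on the grounds that $|r'(x_j)-z_j|\geq 0$. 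That last step is not valid in general: taking $r'(x_j) = z_j + 2^{i+1}$ and $f_j \in (2^i, 2^{i+1})$ gives $\bigl||r'(x_j)-z_j|-f_j\bigr| = 2^{i+1}-f_j < f_j$. The claim itself survives, because in exactly those cases the algorithm also sets $r(x_j)=z_j+2^{i+1}$ so that $|r(x_j)-c(x_j)| = 2^{i+1}-f_j$ as well; but the paper's stated chain of inequalities breaks, whereas yours proves directly the weaker (and actually needed) per-coordinate bound $|r'(x_j)-c(x_j)| \geq |r(x_j)-c(x_j)|$. You also correctly flag the only delicate point, the tie at $f_j = 2^i$, and note it is harmless since the claim asks only that $r$ attain the minimum distance.
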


\begin{proof}
    Let $r'(\neq r)$ be any point of $(2^{i+1}\mathbb{Z})^d$.
    % Then, the point $r'$ can be written as $ r'=a_1{\bf e}_1+a_2{\bf e}_2+\ldots+a_d{\bf e}_d$, where $a_j\in2^{i+1}\mathbb{Z}$ and ${\bf e}_j$ is the standard unit vector, for all $j\in[d]$.
    Notice that for any $j\in[d]$, we have 
    \begin{align*}
    |r'(x_j)-c(x_j)|&=|r'(x_j)- z_j-f_j|\tag{$c(x_j)=z_j+f_j$, where $z_j\in 2^{i+1}\mathbb{Z}$ and $f_j\in\left[0,2^{i+1}\right)$}\\
        &\geq ||r'(x_j)-z_j|-|f_j||\tag{Reverse triangle inequality, i.e., $|a-b|\geq||a|-|b||$}\\
        &\geq|0-|f_j||=|f_j| \tag{Since $r'(x_j)\in(2^{i+1}\mathbb{Z})^d$, we have $|r'(x_j)-z_j|\geq 0$}\\
        &\geq|r(x_j)-c(x_j)|\tag{As per the construction of $r$, $|r(x_j)-c(x_j)|\leq|f_j|$}. 
    \end{align*}
    
    Since for each $j\in[d]$, we have $|r'(x_j)-c(x_j)|\geq |r(x_j)-c(x_j)|$,  the point $r$ is the closest point of $(2^{i+1}\mathbb{Z})^d$ from the center $c$. Hence, the claim follows.
\end{proof}

\subsection{Missing Proof of Lemma~\ref{lem_correct}}\label{App_lem_correct}
\noindent
$\blacktriangleright$
\textbf{Lemma~\ref{lem_correct}.}\textit{
The object $\sigma$ contains the point $r$ from $(2^{i+1}\mathbb{Z})^d$.}

\begin{proof}
   Due to Claim~\ref{clm:drc}, we have $d_{\infty}(r,c)\leq 2^{i}$. Thus, a hypercube centered at $c$ having side length $2^{i+1}$ contains the point  $r$ from $(2^{i+1}\mathbb{Z})^d$. Notice that the object $\sigma$ centered at $c$ having a width at least $2^i$ contains a concentric hypercube of side length at least $2^{i+1}$ inside it. 
    % Thus, the object $\sigma$ contains a concentric hypercube of side length at least $2^{i+1}$. 
    % Now, we will show that a hypercube centered at $c$ will contains at least one point from $(2^{i+1}\mathbb{Z})^d$ inside it. 
    % It is easy to observe that any hypercube having side length at least $2^{i+1}$ will contain at least one point from $(2^{i+1}\mathbb{Z})^d$.
    Hence, the lemma follows.    
\end{proof}

\subsection{Missing Proofs of Subsection~\ref{sec: algo-es}}\label{sec: appendix algo-es}
In this section, we primarily give the missing proofs of Subsection~\ref{sec: algo-es}. Before giving the proofs, we again describe the algorithm $\ES$ and some important results that are essential for those proofs.
% \todo{This section needs some changes.}

\vspace{1mm}\noindent\emph{Preprocessing Step.}
As part of preprocessing, we do the following. (i) Consider a partition $P_{\sigma}$ of the plane. (ii) For every tile $s$ in $P_{\sigma}$ such that $s\cap \P\neq \emptyset$, compute four sets of extreme points $V_{s,{\tau}}$ for $\tau\in [4]$.
(iii) For each $V_{s,{\tau}}$, construct a path graph $G_{s,\tau}$ over the set $V_{s,{\tau}}$ and then, compute a vertex ranking $c_{s,\tau}$ of the graph $G_{s,\tau}$ that uses $\lfloor\log_2(2|V_{s,\tau}|)\rfloor$ colors~\cite[Proposition~5]{EvenS14}.

% \vspace{1mm}\noindent\emph{Description of $\ES$.}
% The algorithm maintains a hitting set $\H_{i-1}$ for all the $i-1$ objects $\{\sigma_1,\sigma_2,\ldots, \sigma_{i-1}\}$ that have been the part of input so far. Upon the arrival of a new object $\sigma_i$, if it is already stabbed  by $\H_{i-1}$, then simply update $\H_i\leftarrow \H_{i-1}$. Otherwise, select the point $v_{s,\sigma_i}$ such that $c_{s,\tau} (v_{s,\sigma_i})= \max\{c_{s,\tau}(v)\ |\ v \in \sigma_i\cap\P \cap V_{s,{\tau}} \}$ from each tile $s$ such that $\sigma_i\cap\P\cap s\neq\emptyset$. These points are added to $\H_{i-1}$ to obtain $\H_i$.
% The pseudo-code of the algorithm is given in Algorithm~\ref{alg:cap} (see Appendix~\ref{App_alg:cap}).

\vspace{1mm}Now, we will describe each of the three preprocessing steps in detail.

\vspace{1mm}\noindent\hspace{2mm} $\bullet$ \underline{Partitioning the Plane}: 
Let $\sigma$ be a regular $k$-gon. Depending on the shape and size of the object $\sigma$, we define a parameter $\ell_{\sigma}$. Assuming the diameter of $\sigma$ to be 2 units, we set the value of $\ell_{\sigma}$ as follows. If $\sigma$ is a square (respectively, regular $k$-gon with $k\geq 7$ and   regular $k$-gon with $k\in\{5,6\}$),  the value of $\ell_{\sigma}$ is $\frac{1}{2}$ (respectively, $\frac{1}{2}$ and $\frac{1}{4}$).
Similar to~\cite{ChenKS09}, we partition the plane using square tiles each of side length $\ell_{\sigma}$ such that each point $p\in\P$ lies in the interior of a tile. It is easy to achieve such a partition $P_{\sigma}$, since the set $\P$ is finite.
Let $m_{\sigma}$ be the maximum number of tiles of the partition $P_{\sigma}$ that any translated copy of $\sigma$ can intersect.

\vspace{2mm}\noindent
$\blacktriangleright$
\textbf{Observation~\ref{obs:contain}.}\textit{
The value of $m_{\sigma}\leq 25$ $($respectively, $23$ and $63)$, when  $\sigma$ is a  square $($respectively, regular $k$-gon with $k\geq 7$ and  regular $k$-gon with $k\in\{5,6\})$.}
 
\begin{figure}[htbp]
    \centering
    \begin{subfigure}[b]{0.45\textwidth}
        \centering
        \includegraphics[page=2, width=47 mm]{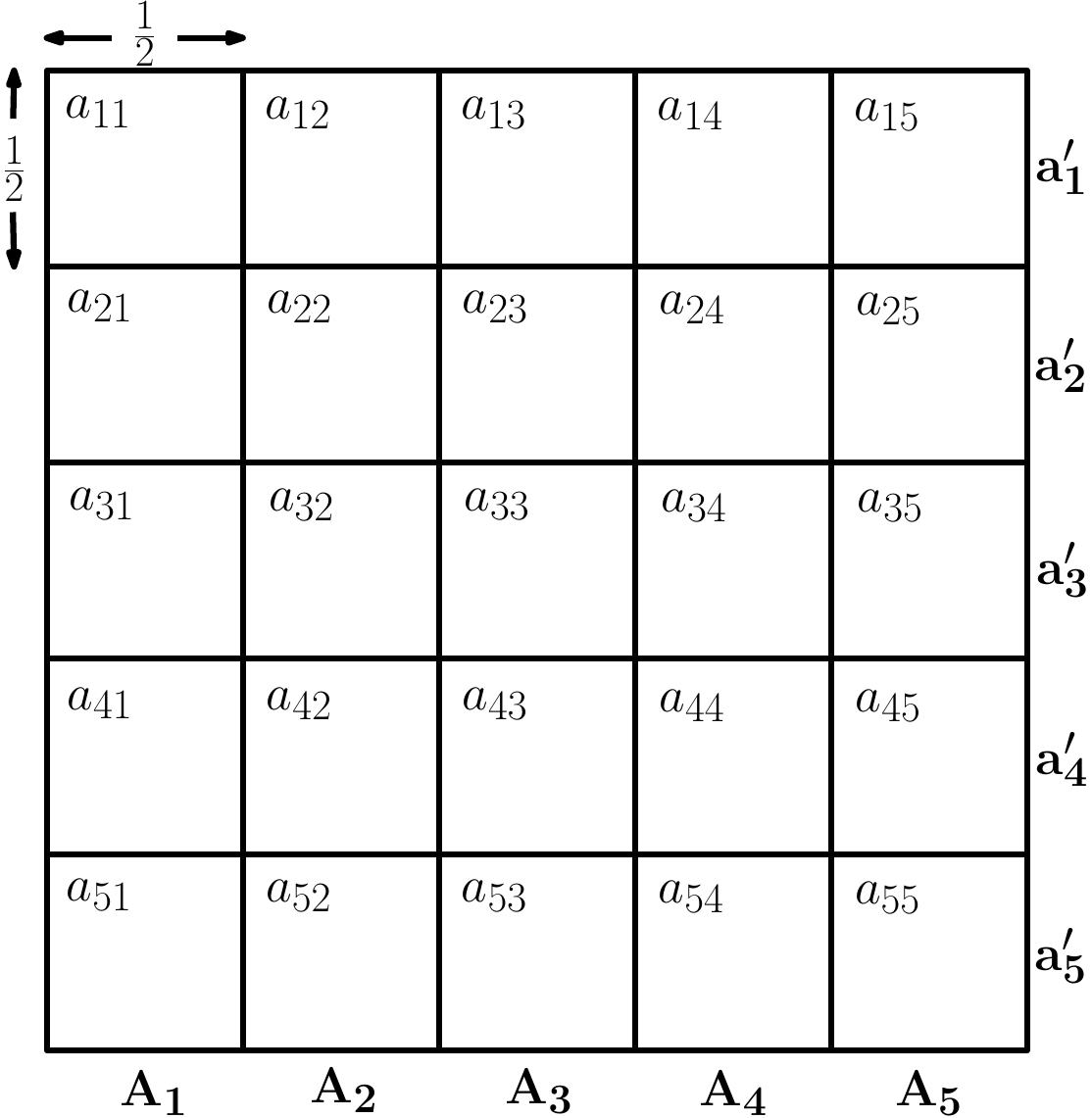}
        \caption{}
        \label{fig:cone1}
    \end{subfigure}
    \begin{subfigure}[b]{0.45\textwidth}
        \centering
        \includegraphics[page=1, width=45 mm]{homothetic_kgons_appendix.pdf}
        \caption{}
        \label{fig:disk}
    \end{subfigure}
    \caption{(a) Super-square $S$ of a tile  $s$ of the partition $P_{\sigma}$, where $\sigma$ is a unit square. (b) A $5\times 5$ grid such that the side length of each tile in the grid is $\frac{1}{2}$.}
    \label{fig:count}
\end{figure}

\begin{proof}
 Let ${\bf{A}}=\left(a_{ij}\right)_{t\times t}$ be a {$t\times t$} partition (see Figure~\ref{fig:disk}). Let $\bf{a_i}'$ and $\bf{A_j}$ denotes the $i$th row and $j$th column of $\bf{A}$. We use $a_{ij}$ to denote a tile in the intersection of the $i$th row and $j$th column, where $i,j\in[t]$. W.l.o.g., we assume that the object $\sigma$ intersects a tile in both $\bf{a_1}'$ and $\bf{A_1}$. 

When $\sigma$ is a unit square in $\IR^2$, then the side length of each tile in $\bf{A}$ is $\frac{1}{2}$.
  % In this case, let us consider $t=6$. 
  Note that the distance between any two points $x$ and $y$ such that $x\in \interior(\bf{A_1})$ (respectively, $\interior(\bf{a_1}')$) and $y\in \interior(\bf{A_6})$ (respectively, $\interior(\bf{a_6}')$) is strictly greater than 2. Thus, any unit square intersecting a tile in $\bf{A_1}$ (respectively, $\bf{a_1'}$), cannot intersect any tile in $\bf{A_6}$ (respectively, $\bf{a_6}'$) from the partition $\bf A$. As a result of this and the fact that the unit square $\sigma$ intersects a tile in both $\bf{a_1}'$ and $\bf{A_1}$, it can only intersect the tiles $a_{ij}$ for $i,j\in [5]$. Thus, the square $\sigma$ intersects at most twenty-five tiles from $\bf A$. 

    When $\sigma$ is a unit regular $k$-gon with $k\geq 7$ in $\IR^2$, then the side length of each tile in $\bf{A}$ is $\frac{1}{2}$. Any unit regular $k$-gon $\sigma(c)$ with fixed $k\geq 7$ is bounded in a disk of unit radius. Note that the distance between any two points $x$ and $y$ such that $x\in \interior(\bf{A_1})$ (respectively, $\interior(\bf{a_1}')$) and $y\in \interior(\bf{A_6})$ (respectively, $\interior(\bf{a_6}')$) is strictly greater than 2. Hence, any unit disk  intersecting a tile in $\bf{A_1}$ (respectively, $\bf{a_1'}$), cannot intersect any tile in $\bf{A_6}$ (respectively, $\bf{a_6}'$) from the partition $\bf A$. As a result of this and the fact that the unit disk $\sigma$ intersects a tile in both $\bf{a_1}'$ and $\bf{A_1}$, it can only intersect the tiles $a_{ij}$ for $i,j\in [5]$, i.e., at most twenty-five tiles from $\bf{A}$. Also, observe that the distance between any two points $x$ and $y$ such that $x\in a_{11}$ (respectively, $a_{15}$) and $y\in a_{55}$ (respectively, $a_{51}$) is at least~2.1. Thus, any unit disk intersecting tile $a_{11}$ (respectively, $a_{15}$) cannot intersect tile $a_{55}$ (respectively, $a_{51}$) or vice versa. As a result of this and the fact that the unit disk completely contains $\sigma$, we have that the regular $k$-gon $\sigma$ with $k\geq 7$ intersects at most twenty-three tiles from $\bf A$.

    When $\sigma$ is a  regular $k$-gon with $k\in\{5,6\}$, then the side length of each tile $a_{ij}$ in $\bf{A}$ is $\frac{1}{4}$. Any unit regular $k$-gon $\sigma(c)$ with fixed $k\geq 5$ is bounded in a unit disk. 
    With a similar argument as described in the previous paragraph, we can show that the unit disk cannot intersect more than sixty-two tiles of side length $\frac{1}{4}$ from $\bf A$. As a result of this and the fact that the unit disk completely contains $\sigma$, we have that the regular $k$-gon $\sigma$ with $k\in\{5,6\}$ intersects at most sixty-two from $\bf A$.
\end{proof}

% \vspace{1mm}
\noindent\hspace{2mm} $\bullet$ \underline{Extreme Points}: 
% \noindent\emph{Super-Square of a Square Tile.} 
Depending on the shape and size of the object $\sigma$, we define a parameter $t_{\sigma}$. Assuming the diameter of $\sigma$ as 2 units, we set the value of $t_{\sigma}$ as follows. If $\sigma$ is a square (respectively, regular $k$-gon with $k\geq 7$ and regular $k$-gon with $k\in\{5,6\}$)  the value of $t_{\sigma}$ 
% is~$\frac{5}{4}$ $\left(\text{respectively, } \frac{5}{2\cos{\left(\frac{\pi}{k}\right)}} \text{ and } \frac{9}{4\cos{\left(\frac{\pi}{k}\right)}}\right)$.
is~$\frac{5}{2}$ $\left(\text{respectively, } \frac{5}{2} \text{ and } \frac{9}{4}\right)$.
For each square tile $s$ from the partition $P_{\sigma}$, we define a \emph{super-square} $S$ that is a square concentric with the tile $s$ whose side length is $t_{\sigma}$. Let ${\cal D}_s$ be the set of all translated copies of $\sigma$ with non-empty intersection with the tile $s$.
It is easily observed that $S$  contains centers of all objects in ${\cal D}_s$. 
We divide the super-square $S$ into four quadrants, each a square of side length $t_{\sigma}/2$.  For $\tau\in [4]$, let  $S_{\tau}$ denote a quadrant of $S$ and let $o_{\tau}$ denote its center (see Figure~\ref{fig:cone1}).

\vspace{2mm}\noindent$\blacktriangleright$ \textbf{Observation~\ref{obs:inter}.}\textit{ If $\sigma' \in{\cal D}_s$, then  $\sigma'\cap\{o_1,o_2,o_3,o_4\}\neq\emptyset$.}
\begin{proof}
    Note that for any object $\sigma'(x)\in{\cal D}_s$, the center $x$ lies in the super-square $S$. Let us assume that $x$ lies in $S_{\tau}$, for some ${\tau}\in [4]$. Since $\sigma'(x)$ contains a disk of radius $\cos\left(\frac{\pi}{k}\right)$,  it is sufficient to prove that the maximum (Euclidean) distance from any point in $S_{\tau}$ to $o_{\tau}$ is at most $\cos\left(\frac{\pi}{k}\right)$ units. Let $c$ be the center of the tile $s$. Since $c$ is one of the farthest points from $o_{\tau}$, it is enough to show that $d(o_{\tau},c) \leq \cos\left(\frac{\pi}{k}\right)$.

    % First, consider the case when $\sigma$ is a unit disk. Note that $S_{\tau}(o_{\tau})$ is a square of side length $\frac{11}{8}$. As a result, the maximum distance between $x$ and $o_{\tau}$ is at most $d(o_{\tau},c)=\frac{11\sqrt{2}}{16}<1$. 
    {First, consider the case when $\sigma$ is a unit square, i.e., a square of side length $\sqrt{2}$, while $S_{\tau}(o_{\tau})$ is a square of side length $\frac{5}{4}$. Notice that the $L_{\infty}$ distance between $c$ and $o_{\tau}$ is $\frac{5}{8}<\frac{1}{\sqrt{2}}$.}
    Next, consider the case when $\sigma$ is a regular $k$-gon with fixed $k\geq 7$. Note that the $d(c,o_{\tau})=\frac{5\sqrt{2}}{8}<\cos\left(\frac{\pi}{k}\right)$, when $k\geq 7$.  
    Finally,  consider the case when $\sigma$ is  a unit regular $k$-gon with fixed $k\in\{5,6\}$. Note that the $d(c,o_{\tau})<0.797<\cos\left(\frac{\pi}{k}\right)$ for both $k=5,6$.
\end{proof}

% Let $\sigma$ be a regular $k$-gon. Recall that $P_{\sigma}$ is a partition of the plane into square tiles. Also, let $s$ be a tile, and the square $S$ is the super-square concentric with the tile $s$. Then, $S$ is divided into four quadrants $S_{\tau}$ with center at $o_{\tau}$ for $\tau\in [4]$.
Let $K_s^{\tau}$ be the convex cone with apex $o_{\tau}$ spanned by the tile~$s$ (see Figure~\ref{fig:cone1}). Here, we give an upper bound of the opening angle of $K_s^{\tau}$. 

\begin{lemma}\label{lemma:cone}
    The opening angle of the cone $K_s^{\tau}$ is strictly less than $\frac{\pi}{2}$. In particular, when $\sigma$ is a  regular $k$-gon with  $k\geq 4$, the opening angle of the cone $K_s^{\tau}$ is strictly less than $\frac{\pi}{4}$.
\end{lemma}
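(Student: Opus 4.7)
The plan is to place the tile $s$ at the origin, so that $s = [-\ell_\sigma/2,\, \ell_\sigma/2]^2$. By the four-fold rotational symmetry of the configuration, it suffices to treat the quadrant with $o_\tau = (t_\sigma/4,\, t_\sigma/4)$. Since $o_\tau$ lies strictly outside $s$ in the upper-right direction, the cone $K_s^\tau$, being the minimal convex cone at $o_\tau$ containing $s$, will be bounded by the two rays from $o_\tau$ through the ``silhouette'' corners of $s$ as seen from $o_\tau$, namely the upper-left corner $(-\ell_\sigma/2,\, \ell_\sigma/2)$ and the lower-right corner $(\ell_\sigma/2,\, -\ell_\sigma/2)$; the remaining two corners of $s$ will lie in the interior of the cone.

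The key observation I would exploit is a symmetry: the two silhouette corners are reflections of each other across the diagonal $y=x$, which is the line through the origin and $o_\tau$. Consequently, the segment joining them is perpendicular to the line $\overline{{\bf 0}\, o_\tau}$, is bisected by the origin, and has total length $\ell_\sigma\sqrt{2}$, while the distance from $o_\tau$ to the origin is $t_\sigma\sqrt{2}/4$. Reading off the tangent from the right triangle formed by $o_\tau$, the origin, and either silhouette corner gives
\[
\tan(\theta/2) \;=\; \frac{\ell_\sigma\sqrt{2}/2}{t_\sigma\sqrt{2}/4} \;=\; \frac{2\ell_\sigma}{t_\sigma},
\]
where $\theta$ is the opening angle of $K_s^\tau$.

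It will then remain to substitute the three permitted choices of $(\ell_\sigma, t_\sigma)$ fixed in the preprocessing step: the pair $(1/2,\, 5/2)$ for squares and for regular $k$-gons with $k \geq 7$, giving $\tan(\theta/2) = 2/5$, and the pair $(1/4,\, 9/4)$ for $k \in \{5,6\}$, giving $\tan(\theta/2) = 2/9$. Since $\tan(\pi/8) = \sqrt{2}-1 > 0.414$, while $2/5 = 0.4$ and $2/9 < 0.23$ both lie strictly below this threshold, I will conclude $\theta/2 < \pi/8$, hence $\theta < \pi/4$, in every case. The weaker bound $\theta < \pi/2$ is then immediate, since $t_\sigma > 2\ell_\sigma$ already forces $\tan(\theta/2) < 1$. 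There is no real obstacle here; the only thing to justify carefully is that the silhouette-corner identification is valid for every quadrant $\tau$, which follows at once from the symmetry of $s$ and $S$. After that, the lemma reduces to the trigonometric check above.
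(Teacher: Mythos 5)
Your proof is correct, and it follows essentially the same route as the paper: compute the opening angle from the explicit coordinates of $o_\tau$ and the two silhouette corners of the tile $s$. Where you differ is in the algebra. The paper applies the law of cosines to the isosceles triangle $\triangle xo_\tau y$ and substitutes numeric distances separately for the square, $k\geq 7$, and $k\in\{5,6\}$ cases; you exploit the reflection symmetry of the configuration across the diagonal through $o_\tau$ and the centre of $s$ to read off the half-angle directly, obtaining the uniform closed form $\tan(\theta/2)=2\ell_\sigma/t_\sigma$. That is strictly tidier: a single formula handles all three parameter choices, and the weaker bound $\theta<\pi/2$ drops out immediately from $t_\sigma>2\ell_\sigma$, replacing the separate inscribed-angle argument the paper uses for that part. (As a side note, your formula implies $d(o_\tau,x)=\sqrt{170}/16\approx 0.815$ when $\ell_\sigma=1/4$ and $t_\sigma=9/4$, not the $\approx 0.999$ and $\approx 0.937$ figures quoted in the appendix for $k=5$ and $k=6$; those two values cannot both be right since $\ell_\sigma,t_\sigma$ are identical in the two cases. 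The paper's final inequality still holds, so this is only a slip in the intermediate numerics, but it is another reason to prefer your parametrized derivation.)
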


\begin{proof}
    Let us assume that  $K_s^{\tau}$ touches the tile $s$ at two points $x$ and $y$. It is easy to argue that the opening angle of the cone $K_s^{\tau}$ is strictly less than  $\frac{\pi}{2}$. To see this, consider two triangles  $\triangle xyz$ and $\triangle xyo_{\tau}$ (see  Figure~\ref{fig:imp}); note that the angle $\angle xzy=\frac{\pi}{2}$ and $\angle xo_{\tau}y< \angle xzy$. Thus, the angle $\angle xo_{\tau}y<\frac{\pi}{2}$.
 
    Now, consider the specific case of regular $k$-gon. Let us consider the triangle $\triangle xyo_{\tau}$. By applying the law of cosines, we get

    \begin{equation}\label{eqn_1}
        \angle xo_{\tau}y=\cos^{-1}{\left(\frac{d^2(o_{\tau},x)+d^2(o_{\tau},y)-d^2(y,x)}{2d(o_{\tau},x)d(o_{\tau},y)}\right)}.
    \end{equation}

    First, consider the case when $\sigma$ is a regular 5-gon (respectively,  regular 6-gon), then note that $d(o_{\tau},x)$= $d(o_{\tau},y)$$\approx0.999$ (respectively, $d(o_{\tau},x)=d(o_{\tau},y)$$\approx 0.937$) and $d(x,y)$=$\frac{1}{2\sqrt{2}}$. By placing these values in Equation~\eqref{eqn_1}, we get
    $\angle xo_{\tau}y\approx\cos^{-1}{(0.936)}$ $<\frac{\pi}{4}$ (respectively, $\angle xo_{\tau}y\approx\cos^{-1}{(0.929)}<\frac{\pi}{4}$) for regular 5-gon (respectively,  regular 6-gon).

    Now, consider the case when $\sigma$ is a square or a regular $k$-gon ($k\geq 7$). Note that the length $\ell_\sigma$ of the square tile $s$ is the same for these objects. Among these objects, from the $t_{\sigma}$ value, it is easy to observe that the super-square $S$ is the smallest when $\sigma$ is a square (see Figure~\ref{fig:kgeq7}). So, it is enough to show that $\angle xo_{\tau}y<\frac{\pi}{4}$ when $\sigma$ is a square. When $\sigma$ is a square, we have $d(o_{\tau},x)$~= $d(o_{\tau},y)$~=$\frac{\sqrt{58}}{8}$ and $d(x,y)$=$\frac{1}{\sqrt{2}}$. By placing the values in equation~\eqref{eqn_1}, we get
    $\angle xo_{\tau}y=\cos^{-1}{\left(\frac{21}{29}\right)}<\frac{\pi}{4}$. Thus, the lemma follows.
\end{proof}

\begin{figure}[htbp]
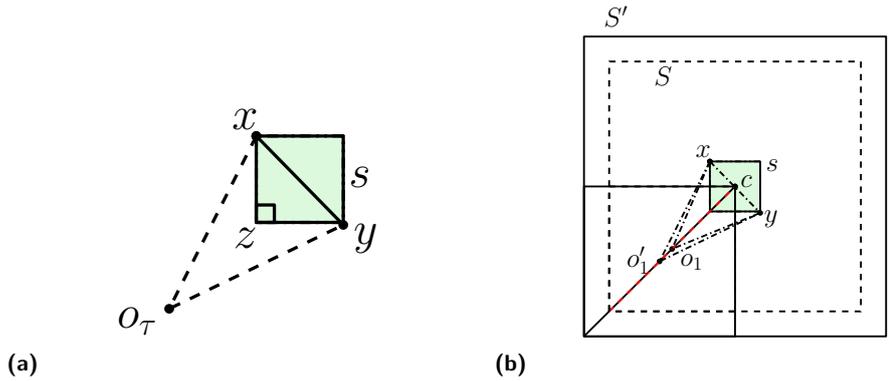

    \centering
    \begin{subfigure}[b]{0.45\textwidth}
        \centering
        \includegraphics[page=3, width=35 mm]{homothetic_kgons_appendix.pdf}
        \caption{}
        \label{fig:imp}
    \end{subfigure}
    \begin{subfigure}[b]{0.45\textwidth}
        \centering
        \includegraphics[page=4, width=40 mm]{homothetic_kgons_appendix.pdf}
        \caption{}
        \label{fig:kgeq7}
    \end{subfigure}
    \caption{Description of (a) $\triangle xyz$ and $\triangle xyo_1$ for cone $K_s^{\tau}$, (b) $S$ and $S'$ are super-squares for unit square and unit regular $k$-gon, where $k\geq 7$, respectively.}
    \label{fig:77}
\end{figure}

For $\sigma'\in{\cal D}_s$, let $\tau_{s,{\sigma'}}=\min\{\tau\ |\ o_{\tau}\in \sigma'\}$. Due to Observation~\ref{obs:inter}, the value of $\tau_{s,{\sigma'}}$ is well-defined. Let ${\cal D}_{s,{\tau}}$ denotes the set $\{ \sigma'\in{\cal D}_s\ |\ \tau_{s,{\sigma'}}=\tau\text{ and } \sigma'\cap s\cap\P\neq\emptyset\}$. Let $\sigma'$ and $\sigma''$ be any two translated copies of the regular $k$-gon $\sigma$. So, note that the intersection $\partial \sigma'\cap\partial \sigma''$ is either empty or consists of two connected components.  For example, see Figure~\ref{fig:case_6}. The following lemma says that if we consider $\sigma',\sigma'' \in {\cal D}_{s,\tau}$ and $\partial \sigma'\cap\partial \sigma''$ is non-empty, then at most one of the connected components can intersect the subregion $K_s^{\tau}$ which we will prove by using Lemma~\ref{lemma:regular} and Lemma~\ref{lemma:cone}.

\begin{lemma}\label{lemma:psLine_k_s}
    For any pair of objects $\sigma',\sigma'' \in {\cal D}_{s,\tau}$, the intersection $\partial\sigma'\cap \partial\sigma''\cap K_s^{\tau}$ consists of at most one connected component of $\partial\sigma'\cap\partial \sigma''$.
\end{lemma}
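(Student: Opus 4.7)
The plan is to argue by contradiction, combining the angle property of regular $k$-gons (Lemma~\ref{lemma:regular}) with the narrow opening-angle bound on the cone (Lemma~\ref{lemma:cone}). If $\sigma' = \sigma''$ the claim is vacuous, so assume $\sigma' \neq \sigma''$; since $\sigma'$ and $\sigma''$ are distinct translates of a convex $k$-gon, $\partial\sigma' \cap \partial\sigma''$ consists of at most two connected components. Suppose, for contradiction, that two distinct components both meet $K_s^{\tau}$, and pick $x$ in one component and $y$ in the other with $x, y \in K_s^{\tau}$.

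First I would note that $o_{\tau} \in \sigma' \cap \sigma''$, which is exactly the defining condition for $\sigma', \sigma'' \in {\cal D}_{s,\tau}$ via $\tau_{s,\sigma'} = \tau_{s,\sigma''} = \tau$. Because $\partial\sigma' \cap \partial\sigma''$ has two separate components, the two translates cannot merely touch externally and so their interiors overlap. Hence the hypotheses of Lemma~\ref{lemma:regular} are met with $p := o_{\tau}$ and with $x,y$ drawn from different components, yielding the lower bound
\[
\angle x\, o_{\tau}\, y \;\geq\; \frac{\pi}{4}.
\]

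For the contradictory upper bound, observe that $K_s^{\tau}$ is a convex cone with apex $o_{\tau}$, so the rays $\overrightarrow{o_{\tau} x}$ and $\overrightarrow{o_{\tau} y}$ both lie inside $K_s^{\tau}$; therefore the angle between them, namely $\angle x\, o_{\tau}\, y$, is bounded above by the opening angle of $K_s^{\tau}$. Lemma~\ref{lemma:cone} makes this opening angle strictly less than $\pi/4$, which contradicts the previous display. This forces $\partial\sigma' \cap \partial\sigma'' \cap K_s^{\tau}$ to be contained in at most one connected component of $\partial\sigma' \cap \partial\sigma''$, proving the lemma.

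The main potential obstacle I foresee is the boundary case in which $x$ or $y$ happens to coincide with $o_{\tau}$ (making the angle undefined), or in which one of the connected components degenerates to a single point. I would handle this by an infinitesimal perturbation argument: each nondegenerate component of $\partial\sigma' \cap \partial\sigma''$ is a continuum, so the representative can be moved slightly inside $K_s^{\tau}$ while remaining in the same component and distinct from $o_{\tau}$; the strict inequality in Lemma~\ref{lemma:cone} then survives the limit, so the contradiction is preserved. The rest of the argument is essentially a clean plug-in of the two prior lemmas, and I do not anticipate further complications.
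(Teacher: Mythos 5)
Your proof is correct and takes essentially the same approach as the paper: assume two components meet $K_s^\tau$, take $o_\tau$ as the vertex of the angle, invoke the angle property (Lemma~\ref{lemma:regular}) to get $\angle x\,o_\tau\,y \geq \pi/4$, and contradict this with the $<\pi/4$ opening-angle bound of Lemma~\ref{lemma:cone}. You are somewhat more explicit than the paper about verifying the hypotheses of Lemma~\ref{lemma:regular} (that $o_\tau \in \sigma'\cap\sigma''$ and that the interiors overlap) and about degenerate boundary cases, but the argument is the same.
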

\begin{proof}
    For a contradiction, let us assume that $\partial \sigma'\cap\partial \sigma''\cap K_s^{\tau}$ consists of two connected components. Let $p$ and $q$ be two points lying in the two connected components. It is easy to observe that $\sigma'\cap \sigma''\cap K_s^\tau$ contains the triangle $\triangle po_{\tau} q$. Now, due to Lemma~\ref{lemma:cone}, we know that the opening angle of the cone $K_s^{\tau}$ is strictly less than $\frac{\pi}{4}$. This leads to a contradiction to Lemma~\ref{lemma:regular}.
\end{proof}

For every tile $s$ from the partition $P_{\sigma}$ and every $\tau\in[4]$, we define a set $V_{s,\tau}$ of extreme points as follows.
$V_{s,{\tau}}=\{p\in\P\ |\ \exists\  \sigma'\in {\cal D}_{s,{\tau}}\text{ such that } p\in\partial\sigma' \text{ and } \sigma'\cap s\cap\P=\{p\}\}$.
Note that if $ \sigma'\in {\cal D}_{s,{\tau}}$ and $\sigma'\cap s\cap\P\neq\emptyset$, then $\sigma'\cap V_{s,{\tau}}\neq\emptyset$.

\vspace{1mm}\noindent\hspace{2mm} $\bullet$ \underline{Vertex Ranking of Extreme Points}: 
Let $\theta_{s,{\tau}}:V_{s,{\tau}}\rightarrow \left[0,2\pi\right]$ denote an angle function, where $\theta_{s,{\tau}}(p)$ is equal to the slope of the line $\overline{o_{\tau} p}$. It is easy to see that $\theta_{s,{\tau}}$ is a one-to-one function. Let $\{q_i\}_{i=1}^{|V_{s,{\tau}}|}$ denote an ordering of $V_{s,{\tau}}$ in increasing order of the angle function $\theta_{s,{\tau}}$. For any object $\sigma'\in{\cal D}_{s,{\tau}}$, the intersection $\sigma'\cap V_{s,{\tau}}$ is an \emph{interval} if there exist $i, k$ such that $\sigma'\cap V_{s,{\tau}}=\{q_j\ |\ i\leq j \leq k\}$.
Now, the following lemma plays an important role in proving the main result which we will prove by using Lemma~\ref{lemma:psLine_k_s}.

\begin{lemma}\label{lem:interval}
    % The angle function $\theta_{s,{\tau}}$ is one-to-one, and
    For every $\sigma'\in{\cal D}_{s,{\tau}}$, the intersection $\sigma'\cap V_{s,{\tau}}$ is an interval.
\end{lemma}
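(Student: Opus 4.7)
The plan is to argue by contradiction: suppose $\sigma' \cap V_{s,\tau}$ is not an interval, so that there exist indices $i < j < k$ in the angular ordering with $q_i, q_k \in \sigma'$ but $q_j \notin \sigma'$. Since $q_j \in V_{s,\tau}$, by definition there is some $\sigma'' \in {\cal D}_{s,\tau}$ with $q_j \in \partial \sigma''$ and $\sigma'' \cap s \cap \P = \{q_j\}$; in particular $q_i, q_k \notin \sigma''$. The goal is to use this configuration to exhibit two distinct connected components of $\partial \sigma' \cap \partial \sigma''$ lying inside $K_s^\tau$, contradicting Lemma~\ref{lemma:psLine_k_s}.

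First, I would examine the three rays from $o_\tau$ through $q_i, q_j, q_k$, all of which lie in $K_s^\tau$ since $q_i, q_j, q_k \in s$. Both $\sigma'$ and $\sigma''$ are convex and contain $o_\tau$ (the latter because $\sigma', \sigma'' \in {\cal D}_{s,\tau}$). By convexity, $\overline{o_\tau q_i} \cup \overline{o_\tau q_k} \subseteq \sigma'$, while the segment $\overline{o_\tau q_j}$ exits $\sigma'$ at a first point $y_j$ strictly between $o_\tau$ and $q_j$; since $q_j \in \partial \sigma''$, the point $y_j$ lies in $\partial \sigma' \cap \interior(\sigma'')$. Dually, $\overline{o_\tau q_i}$ and $\overline{o_\tau q_k}$ exit $\sigma''$ at first points $x_i, x_k$ strictly before reaching $q_i, q_k$, so $x_i, x_k \in \sigma' \cap \partial \sigma''$. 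All three special points $x_i, y_j, x_k$ sit inside $K_s^\tau$.

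The next step is an angular sweep over $\theta \in [\theta_i, \theta_k]$. For each such $\theta$, the ray from $o_\tau$ at angle $\theta$ leaves the convex set $\sigma' \cap \sigma''$ for the first time at some point $e(\theta) \in \partial(\sigma' \cap \sigma'') \cap K_s^\tau$, which is either on $\partial \sigma'$ or on $\partial \sigma''$ or on both. At $\theta_i$ and $\theta_k$ we have $e(\theta) \in \{x_i, x_k\} \subseteq \partial \sigma''$, while at $\theta_j$ we have $e(\theta) = y_j \in \partial \sigma'$. By continuity of $e(\theta)$, there must exist $\theta' \in (\theta_i, \theta_j)$ and $\theta'' \in (\theta_j, \theta_k)$ at which $e(\theta)$ lies on $\partial \sigma' \cap \partial \sigma''$ simultaneously, giving two crossing points $z' = e(\theta')$ and $z'' = e(\theta'')$ in $\partial \sigma' \cap \partial \sigma'' \cap K_s^\tau$.

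Finally, I would argue that $z'$ and $z''$ belong to different connected components of $\partial \sigma' \cap \partial \sigma''$. Traversing $\partial(\sigma' \cap \sigma'')$, each connected component of $\partial \sigma' \cap \partial \sigma''$ is a transition between an arc of $\partial \sigma'$ and an arc of $\partial \sigma''$; the two transitions witnessed by $z'$ and $z''$ (the flip of $e(\theta)$ from $\partial \sigma''$ to $\partial \sigma'$ at $\theta'$ and back at $\theta''$) therefore live in distinct components. This contradicts Lemma~\ref{lemma:psLine_k_s}. The main obstacle I foresee is making this final component-separation step fully rigorous (especially treating degenerate tangential contacts in the sweep); a cleaner alternative is to invoke the classical fact that the boundaries of two planar convex bodies intersect in at most two connected components, and then to note that the double flip of $e(\theta)$ forces both of these components to appear already inside $K_s^\tau$, again violating Lemma~\ref{lemma:psLine_k_s}.
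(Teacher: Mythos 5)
Your argument is correct and follows the same route as the paper: assume the intersection is not an interval, take a middle extreme point missed by $\sigma'$ together with its extreme-point witness $\sigma''\in{\cal D}_{s,\tau}$, and exhibit two connected components of $\partial\sigma'\cap\partial\sigma''$ meeting $K_s^\tau$, contradicting Lemma~\ref{lemma:psLine_k_s}. Where the paper's proof simply asserts that ``$\partial\sigma'$ and $\partial\sigma''$ intersect twice inside the tile $s$,'' your angular sweep and intermediate-value argument on the radial exit function $e(\theta)$ is a careful justification of precisely that step, and the ``cleaner alternative'' you sketch (using that translates of a convex body have boundaries meeting in at most two components) is essentially the observation the paper records just before Lemma~\ref{lemma:psLine_k_s}.
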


\begin{proof}
    % We must prove that for every $\sigma'\in{\cal D}_{s,{\tau}}$, the intersection $\sigma'\cap V_{s,\tau}$ is an interval.
    For a contradiction, let us assume that there exists a $\sigma'\in {\cal D}_{s,{\tau}}$ containing extreme points $q_j$ and $q_l$ but not $q_k$, where $\theta_{s,\tau}(q_j) < \theta_{s,\tau}(q_k) < \theta_{s,\tau}(q_l)$ (see Figure~\ref{fig:prop_int}). 
    Since $q_k$ is an extreme point, there exists a regular $k$-gon $\sigma''\in{\cal D}_{s,{\tau}}$ containing $q_k$ but neither $q_j$ nor $q_l$. It implies that $\partial \sigma'$ and $\partial \sigma''$ intersect twice inside the tile $s$. Consequently, $\sigma'\cap \sigma''\cap K_s^i$ consists of two connected component of $\partial\sigma'\cap\partial \sigma''$, which contradicts Lemma~\ref{lemma:psLine_k_s}. 
\end{proof}

% \vspace{1mm}\noindent\emph{Vertex Ranking of Extreme Points.} 
Let $G_{s,\tau}$ denotes the path graph over the extreme points $V_{s,{\tau}}$, where $q_i$ is a neighbor of $q_{i+1}$, for any $i\in[|V_{s,{\tau}}|-1]$. 
Due to~\cite[Proposition~5]{EvenS14}, we have that the vertex ranking number of a path graph having $n$ vertices is $\lfloor\log_2 2n\rfloor$.
Let $c_{s,\tau}:V_{s,{\tau}} \rightarrow [k]$ denotes a vertex ranking of the path graph $G_{s,\tau}$ that uses $\lfloor \log_2(2|V_{s,{\tau}}|)\rfloor$ colors. 
% Let us consider an object $\sigma'\in{\cal D}_{s,{\tau}}$. Let $c_{s,\tau}^{\max}(\sigma'\cap\P)= \max\{c_{s,\tau}(v)\ |\ v \in (\sigma'\cap\P \cap V_{s,{\tau}}) \}$. Let $v_{s,\tau}^{\max}(\sigma'\cap\P)$ denotes the vertex $v\in (\sigma'\cap\P\cap V_{s,{\tau}})$ such that $c_{s,\tau} (v) = c_{s,\tau}^{\max}(\sigma'\cap\P)$.

\begin{figure}[htbp]
    \centering
    \includegraphics[page=5, width=30 mm]{homothetic_kgons_appendix.pdf}
    \caption{Illustration of Lemma~\ref{lem:interval}.}
    \label{fig:prop_int}
\end{figure}

\vspace{2mm}
Now, we describe the algorithm $\ES$.

\vspace{1mm}\noindent\textbf{Description of $\ES$.}
The algorithm maintains a hitting set $\H_{i-1}$ for all the $i-1$ objects $\{\sigma_1,\sigma_2,\ldots, \sigma_{i-1}\}$ that have been the part of input so far. Upon the arrival of a new object $\sigma_i$, if it is already stabbed  by $\H_{i-1}$, then simply update $\H_i\leftarrow \H_{i-1}$. Otherwise, from each tile $s$ such that $\sigma_i\cap\P\cap s\neq\emptyset$, first determine the $\tau$ such that $\tau=\tau_{s,{\sigma_i}}$ and then select the point $v_{s,\sigma_i}$ such that $c_{s,\tau} (v_{s,\sigma_i})= \max\{c_{s,\tau}(v)\ |\ v \in \sigma_i\cap\P \cap V_{s,{\tau}}\}$. These points are added to $\H_{i-1}$ to obtain $\H_i$.
% The pseudo-code of the algorithm is given in Algorithm~\ref{alg:cap} (see Appendix~\ref{App_alg:cap}).

\vspace{2mm}
% \noindent\emph{Analysis of the Algorithm.}
In the following lemma, it is demonstrated that upon the arrival of two unstabbed translated copies of $\sigma$ that have a common point $p\in s$ and are of the same type $\tau$, then they are stabbed by extreme points in $V_{s,{\tau}}$ of different colors. We prove this by using the Lemma~\ref{lem:interval}.

\vspace{1mm}\noindent
$\blacktriangleright$ \textbf{Lemma~\ref{lem:distinct}.}\label{App_lem:distinct}
\textit{For any $\tau\in[4]$, if  $\sigma_i,\sigma_j\in{\cal D}_{s,{\tau}}$ are unstabbed upon their arrivals, and there exists a point $p\in \sigma_i\cap \sigma_j\cap\P \cap s$, then $c_{s,\tau}(v_{s,\sigma_i})\neq c_{s,\tau}(v_{s,\sigma_j})$.}
% \begin{lemma}\label{lem:distinct}
    % For any $\tau\in[4]$, if  $\sigma_i,\sigma_j\in{\cal D}_{s,{\tau}}$ are unstabbed upon their arrivals, and there exists a point $p\in \sigma_i\cap \sigma_j\cap\P \cap s$, then $c_{s,\tau}(v_{s,\sigma_i})\neq c_{s,\tau}(v_{s,\sigma_j})$.
% \end{lemma}
\begin{proof}
    Assume for a contradiction that $c_{s,\tau}(v_{s,\sigma_i})=c_{s,\tau}(v_{s,\sigma_j})$. According to Lemma~\ref{lem:interval}, the intersection $\sigma_i\cap V_{s,{\tau}}$ is an interval, i.e., $\sigma_i\cap V_{s,{\tau}}=I_i=[a_i, b_i]$. Likewise, $I_j=[a_j, b_j]$ represents the interval for $\sigma_j\cap V_{s,{\tau}}$. Given that $I_i=[a_i, b_i]$ and $I_j=[a_j, b_j]$ are both intervals having a common point, thus $I_i\cup I_j$ is also an interval. 
    According to the definition of vertex ranking, $I_i\cup I_j$ contains a vertex $k$ having color $c(k)$ such that $c(k)>c_{s,\tau}(v_{s,\sigma_i})$. Now, either $I_i$ or $I_j$ contains the vertex $k$. W.l.o.g., let us assume that $I_i$ contains the vertex $k$ having color $c(k)$ such that $c(k)>c_{s,\tau}(v_{s,\sigma_i})$. But according to our algorithm, $c_{s,\tau}(v_{s,\sigma_i})$ is the largest color in $I_i$. This leads to a contradiction.
\end{proof}

\subsection{Missing Proof of Lemma~\ref{lem:shrunk}}

\noindent
$\blacktriangleright$ \textbf{Lemma~\ref{lem:shrunk}.}\label{App_lem:shrunk}
\textit{$\sigma=\cup_{\sigma'\in{\cal K}_{\sigma}} \sigma'$.
}

\begin{figure}[htbp]
\centering
\begin{subfigure}[b]{0.30\textwidth}
    \centering
    \includegraphics[page=1, width= 35 mm]{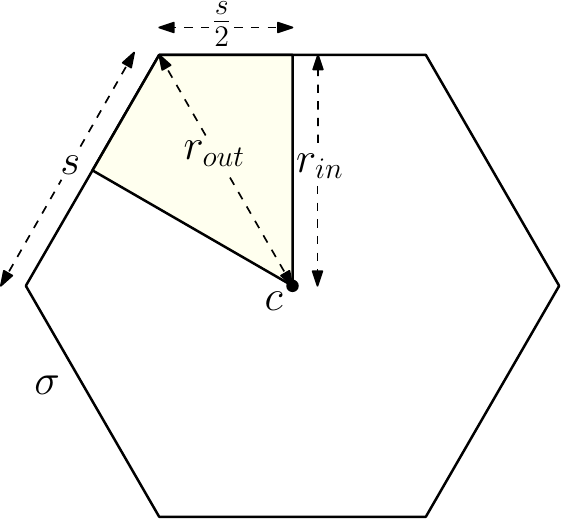}
    \caption{}
    \label{fig:clm_1e}
\end{subfigure}
\centering
\begin{subfigure}[b]{0.30\textwidth}
    \centering
    \includegraphics[page=2, width=35 mm]{Shrunk.pdf}
    \caption{}
    \label{fig:clm_2e}
\end{subfigure}
\centering
\begin{subfigure}[b]{0.34\textwidth}
    \centering
    \includegraphics[page=3, width=35 mm]{Shrunk.pdf}
    \caption{}
    \label{fig:clm_3e}
\end{subfigure}
\caption{Illustration of Case~1. (a) Illustration of the partitioning of $\sigma$ using $k$ congruent kites whose two distinct side lengths are $\frac{s}{2}$ and $w$. (b) Detailed description the cell $C_i$, $\sigma_i'$ and $\sigma_0'$. (c) Description of $c'$ and $a$, the two opposite vertices of $\sigma_i'$}
\label{fig:clm_shrunk_e}
\end{figure}

% \vspace{-3mm}
\begin{proof}
    As per the definition of the shrunk set $K_{\sigma}$, it is easy to observe that  $\cup_{\sigma'\in K_{\sigma}}\sigma'\subseteq \sigma$.
    To prove the lemma, we will show that $\sigma\subseteq\cup_{\sigma'\in K_{\sigma}}\sigma'$. Let $w$ be the width of $\sigma$. When $w$ is an integral power of $2$, then $K_{\sigma}=\{\sigma\}$, and we have nothing to prove. When $w\in(2^i,2^{i+1})$, then as per construction of ${\cal K}_{\sigma}$, it contains $k+1$ copies 
    % (i.e., $\sigma_1'\sigma_2',\ldots,\sigma_k'$ and $\sigma'$)
    of $\sigma_0'$ i.e., $\sigma_1',\sigma_2',\ldots,\sigma_k'$ along with $\sigma_0'$. Let $s$ and $s'$  be the side length of $\sigma$ and $\sigma_0'$, respectively.
    Let $P$ be a partition of $\sigma$ using $k$ congruent kites (a quadrilateral with two pairs of adjacent sides equal in length) whose two distinct side lengths are $\frac{s}{2}$ and $w$, respectively (for an illustration, see Figures~\ref{fig:clm_1} and~\ref{fig:clm_2}).
    Each kite of $P$ is known as a \emph{cell}. For each $i\in[k]$, let $C_i=\square abcd$ be a kite, where the point $a$ coincides with the corner $c_i$ of the regular $k$-gon and $c$ is the center of $\sigma$ (see Figure~\ref{fig:clm_3}). To prove the lemma, it is sufficient to prove that for each $i\in[k]$, we can cover the cell $C_i$ using $\sigma_0'$ and $\sigma'_i$ from the shrunk set $K_{\sigma}$.
    Let $r_{in}$ (respectively, $r_{in}'$) be the radius of the largest ball enclosed in $\sigma$ (respectively, $\sigma_0'$), whereas $r_{out}$ (respectively, $r_{out}'$)  be the radius of the smallest ball enclosing $\sigma$ (respectively, $\sigma_0'$).
    Notice that $d(a,b)=d(a,d)=\frac{s}{2}\leq s'$, $d(c,b)=d(c,d)=\frac{r_{in}}{2}\leq r_{in}'$ and $d(a,c)=\frac{r_{out}}{2}\leq r_{out}'$.
    Now, depending on whether $k$ is even or odd, we have the following cases.

    % \vspace{1mm}
    \noindent\textbf{Case 1.} $k$ is even. Let $c'$ and $a$ are the two opposite vertices of $\sigma_i'$ (see Figure~\ref{fig:clm_shrunk_e}). Since $d(a,c')=2r_{out}'\geq r_{out}$ and $d(a,c)=r_{out}$, $\sigma_i'$ contains $a$ and $c$ . Since the angle $\angle bad$ is equal to the interior angle of regular $k$-gon $\sigma_i'$ (or $\sigma$) and $s'\geq d(a,b)\left(=\frac{s}{2}\right)$, all points $a,b,c$ and $d$ lie in $\sigma_i'$. Hence, $\sigma_i'$ contains the cell $C_i=\square abcd$.

    % \vspace{2mm}
    \noindent\textbf{Case 2.} $k$ is odd. For this case, the regular $k$-gon $\sigma_i'$ contains the point $a$ (common vertex of $\sigma_i'$ and $\sigma$) but may or may not contain the point $c$.
    If $\sigma_i'$ contains both $a$ and $c$, then similar to the previous case, $\sigma_i'$ contains the cell $C_i=\square abcd$.
    When $\sigma_i'$ does not contain $c$, then $\sigma_i'$ does not fully contain the cell $C_i$. Notice that for a regular odd $k$-gon, there is an edge opposite to a vertex. Let $e$ be an edge opposite to the vertex $a$. Let $L$ be a line obtained by extending $e$ in both directions.  Since $d(a,L)\geq r_{out}'+r_{in}'\geq \frac{r_{out}+r_{in}}{2}$, the line $L$ intersects $C_i$ on the edge $bc$ at $b'$ and edge $dc$ at $d'$. 
    Now, first, we will show that $\sigma_i'$ contains the polygon $abb'd'da$. Then, we will prove that $\sigma_0'$ contains the triangle $\triangle b'cd'$.
    Let $\overline{ac}$ be a line segment joining points $a$ and $c$, whereas $\overline{b'd'}$ be a line segment joining points $b'$ and $d'$. Notice that $\overline{ac}$ partitions $\overline{b'd'}$ and $e$ into two equal parts. The line segments $\overline{ac}$ and $\overline{b'd'}$ intersect at a point. Let $f$ be that point. Before proving that $\sigma_i'$ contains the polygon $abb'd'da$, we prove the following claim in which we show that $d(b',f)\leq \frac{s'}{2}$.

\begin{figure}[htbp]
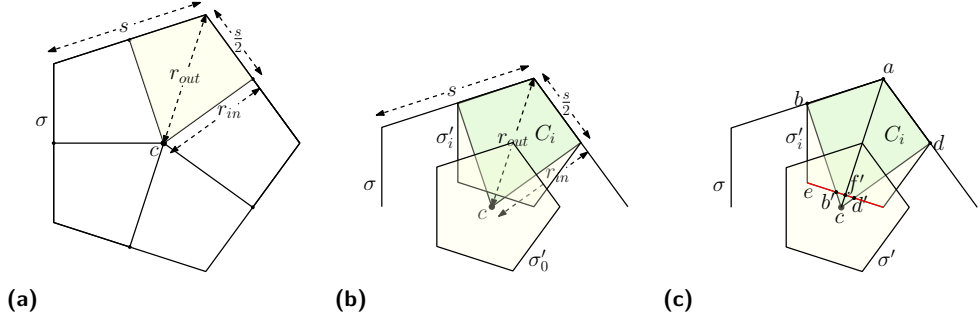

\centering
\begin{subfigure}[b]{0.30\textwidth}
    \centering
    \includegraphics[page=4, width= 35 mm]{Shrunk.pdf}
    \caption{}
    \label{fig:clm_1}
\end{subfigure}
\centering
\begin{subfigure}[b]{0.30\textwidth}
    \centering
    \includegraphics[page=5, width=35 mm]{Shrunk.pdf}
    \caption{}
    \label{fig:clm_2}
\end{subfigure}
\begin{subfigure}[b]{0.34\textwidth}
    \centering
    \includegraphics[page=6, width=35 mm]{Shrunk.pdf}
    \caption{}
    \label{fig:clm_3}
\end{subfigure}
\caption{Illustration of Case~2. (a) Illustration of the partitioning of $\sigma$ using $k$ congruent kites whose two distinct side lengths are $\frac{s}{2}$ and $w$. (b) Detailed description the cell $C_i$, $\sigma_i'$ and $\sigma_0'$. (c) Description of $\triangle b'fc$. The red-colored line segment is an edge $e$, opposite to the vertex $a$, of $\sigma_i'$.}
\label{fig:clm_shrunk}
\end{figure}

\begin{claim}\label{clm:dist}
    $d(b',f)\leq \frac{s'}{2}$
\end{claim}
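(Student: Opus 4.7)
The plan is to set up coordinates with $a$ at the origin and the common axis of symmetry $\overline{ac}$ along the positive $y$-axis. Both the kite $C_i$ and the polygon $\sigma_i'$ are reflectionally symmetric across this axis (the symmetry axis of $\sigma_i'$ through vertex $a$ passes through its center, which lies on $\overline{ac}$ because $\sigma_i'$ was obtained by shrinking $\sigma$ toward the corner $a$), so it suffices to bound $d(b',f)$; the bound for $d(d',f)$ follows identically. Using the identities $s=2r_{out}\sin(\pi/k)$ and $r_{in}=r_{out}\cos(\pi/k)$ from Observation~\ref{Obs:relation}, together with the fact that the two edges of $\sigma$ incident to $a$ make an angle of $\pi/k$ with the positive $x$-axis, one records the explicit coordinates $c=(0,r_{out})$, $b=\bigl(r_{out}\sin(\pi/k)\cos(\pi/k),\,r_{out}\sin^2(\pi/k)\bigr)$, and (using the odd-$k$-gon fact that the axis through $a$ reaches the opposite edge perpendicularly at its midpoint at distance $r_{out}'+r_{in}'$) $f=\bigl(0,\,r_{out}'(1+\cos(\pi/k))\bigr)$, with $L$ being the horizontal line $y=r_{out}'(1+\cos(\pi/k))$.

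Next, I would compute $b'$ as the intersection of the segment $\overline{bc}$ with $L$. Since $f$ sits on the $y$-axis at the same height as $b'$, we have $d(b',f)=x(b')$. A routine parametrization of $\overline{bc}$ gives
\[
x(b')=\bigl(r_{out}-r_{out}'(1+\cos(\pi/k))\bigr)\tan(\pi/k).
\]
Comparing with $s'/2=r_{out}'\sin(\pi/k)$ and dividing through by $\sin(\pi/k)>0$, the desired inequality $x(b')\leq s'/2$ collapses to the clean comparison
\[
\frac{r_{out}}{r_{out}'}\;\leq\;1+2\cos(\pi/k).
\]

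Finally, I would verify this reduced inequality. Because $\sigma$ has diameter $w\in(2^i,2^{i+1})$ while $\sigma_0'$ (and hence $\sigma_i'$) has diameter $w'=2^i$, and the circumradius is proportional to the diameter, $r_{out}/r_{out}'=w/w'<2$. On the other hand, in Case~2 $k$ is odd and therefore $k\geq 5$, so $\cos(\pi/k)\geq\cos(\pi/5)>1/2$, yielding $1+2\cos(\pi/k)>2$ and closing the argument (in fact with strict slack). The main obstacle is essentially bookkeeping: one must line up the two symmetry axes correctly, use the odd-$k$-gon fact that the distance from a vertex to the opposite edge equals $r_{out}'+r_{in}'$, and apply the appropriate trigonometric identities; once the inequality is cast in the displayed form above, the conclusion is an immediate consequence of the layer-doubling rule combined with $\cos(\pi/k)>1/2$ for odd $k\geq 5$.
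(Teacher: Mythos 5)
Your proof is correct and follows the same essential route as the paper: both derive the exact expression $d(b',f) = \bigl(r_{out} - r_{out}' - r_{in}'\bigr)\tan(\pi/k)$ (the paper via the law of sines in the right triangle $\triangle b'fc$, you via coordinates, using $r_{in}' = r_{out}'\cos(\pi/k)$) and then close with the layer relation $r_{out} < 2r_{out}'$ plus a trigonometric bound. Your reformulation of the endgame as the single ratio inequality $r_{out}/r_{out}' \leq 1 + 2\cos(\pi/k)$ is a somewhat tidier way to finish than the paper's chain of estimates culminating in $d(b',f) < s/4 < s'/2$, but the inputs and the underlying computation are the same.
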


\begin{proof}
    Observe Figure~\ref{fig:clm_3} and consider the right angle triangle $\triangle b'fc$ right angled at $\angle b'fc$. Since we partition $\sigma$ using $k$ congruent kites, the angle $\angle b'cd'=\frac{2\pi}{k}$.  As a result, the angle $\angle b'cf=\frac{\pi}{k}$. Thus, due to the angle sum property of a triangle, we have $\angle cb'f=\left(\frac{\pi}{2}-\frac{\pi}{k}\right)$.

    Now, using the law of sines, we have the following
    
   \begin{align*}
       d(f,b')&=d(f,c)\frac{\sin \left(\frac{\pi}{k}\right)}{\sin \left(\frac{\pi}{2}-\frac{\pi}{k}\right)}\leq d(f,c)\frac{\sin \left(\frac{\pi}{k}\right)}{\cos \left(\frac{\pi}{k}\right)}\tag{$\sin \left(\frac{\pi}{2}-\theta\right)=\cos (\theta)$}\\
        &\leq (r_{out}-r_{out}'-r_{in}')\frac{\sin \left(\frac{\pi}{k}\right)}{\cos \left(\frac{\pi}{k}\right)}\tag{$d(f,c)=r_{out}-r_{out}'-r_{in}'$}\\
        &\leq \left(r_{out}-\frac{r_{out}}{2}-\frac{r_{in}}{2}\right)\frac{\sin \left(\frac{\pi}{k}\right)}{\cos \left(\frac{\pi}{k}\right)}\tag{$r_{out}'\geq \frac{r_{out}}{2}$ and $r_{in}'\geq \frac{r_{in}}{2}$}\\
        &\leq \left(\frac{r_{out}}{2}-\frac{r_{out}\cos\left(\frac{\pi}{k}\right)}{2}\right)\frac{\sin \left(\frac{\pi}{k}\right)}{\cos \left(\frac{\pi}{k}\right)}\tag{$r_{in}= r_{out}\cos\left(\frac{\pi}{k}\right)$}\\
        &\leq \frac{r_{out}}{2}\left(\frac{1}{\cos \left(\frac{\pi}{k}\right)}-1\right)\sin \left(\frac{\pi}{k}\right)\\
         &< \frac{r_{out}}{2}\left(\sqrt{2}-1\right)\sin \left(\frac{\pi}{k}\right)<\frac{r_{out}}{2}\sin \left(\frac{\pi}{k}\right)\tag{$\frac{1}{cos\left(\frac{\pi}{k}\right)}\leq \sqrt{2}$, for $k\geq 4$}\\
          &< \frac{s}{4}\tag{Since $s=2r_{out}\sin\left(\frac{\pi}{k}\right)$ }\\
          &< \frac{s'}{2} \tag{Since $s'\geq \frac{s}{2}$}.
   \end{align*}
\end{proof}

Due to Claim~\ref{clm:dist}, we have $d(b',d')=2d(f,b')<2\frac{s'}{2}=s'$. Thus, $e$ contains $\overline{b'd'}$. As a result, all points $a,b,b',d'$ and $d$ lie on the boundary of $\sigma_i'$. Thus, $\sigma_i'$ contains the polygon $abb'd'da$. Now, we only need to show that $\sigma_0'$ contains $\triangle b'cd'$. Before proving this, we prove the following claim.

\begin{claim}\label{clm:tri}
    $d(b',c)\leq r_{in}'$.    
\end{claim}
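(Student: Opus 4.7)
The plan is to exploit the right triangle $\triangle b'fc$ already introduced in the proof of Claim~\ref{clm:dist}. First, I would observe that for a regular odd $k$-gon, the segment from any vertex to the center is an axis of symmetry and is therefore perpendicular to the opposite edge; hence $\overline{ac}\perp L$, which forces $\angle b'fc = \pi/2$. Together with $\angle b'cf=\pi/k$ (read off from the kite partition of $\sigma$), standard trigonometry gives
\[
d(b',c) \;=\; \frac{d(f,c)}{\cos(\pi/k)}.
\]

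Next, I would plug in the value $d(f,c)=r_{out}-r_{out}'-r_{in}'$ that was already used in Claim~\ref{clm:dist}; this follows because $a$ is a corner of $\sigma_i'$, so the center of $\sigma_i'$ lies at distance $r_{out}'$ from $a$ on $\overline{ac}$, and the opposite edge of $\sigma_i'$ lies a further distance $r_{in}'$ beyond that center. Since $\sigma_0'$ has diameter $2^j$ while $\sigma$ has diameter strictly less than $2^{j+1}$, we have $r_{out}'>r_{out}/2$, and by Observation~\ref{Obs:relation} we also have $r_{in}'=r_{out}'\cos(\pi/k)$. Combining these:
\[
d(b',c) \;<\; \frac{2r_{out}'-r_{out}'-r_{in}'}{\cos(\pi/k)} \;=\; \frac{r_{out}'(1-\cos(\pi/k))}{\cos(\pi/k)}.
\]

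Hence the desired inequality $d(b',c)\leq r_{in}'=r_{out}'\cos(\pi/k)$ reduces, after clearing the common factor $r_{out}'$, to the elementary trigonometric inequality $\cos(\pi/k)+\cos^2(\pi/k)\geq 1$. This I would verify by noting that $x\mapsto x+x^2$ is increasing on $(0,\infty)$ and $\cos(\pi/k)\geq \cos(\pi/4)=\sqrt{2}/2$ for every $k\geq 4$, so $\cos(\pi/k)+\cos^2(\pi/k)\geq \tfrac{\sqrt{2}}{2}+\tfrac{1}{2}>1$. The only real obstacle I anticipate is carefully justifying the perpendicularity $\overline{ac}\perp L$ and correctly bookkeeping the four radii $r_{out},r_{out}',r_{in},r_{in}'$; once the right-triangle identity is in place, the remainder is a short computation.
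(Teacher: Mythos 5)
Your proof is correct and takes essentially the same route as the paper: both rest on the right triangle $\triangle b'fc$ (with $\angle b'fc=\pi/2$ by the symmetry axis of $\sigma_i'$ through the vertex $a$), the identity $d(b',c)=d(f,c)/\cos(\pi/k)$, and the decomposition $d(f,c)=r_{out}-r_{out}'-r_{in}'$. The only deviation is cosmetic: you keep everything in terms of the primed radii (using $r_{out}<2r_{out}'$ and $r_{in}'=r_{out}'\cos(\pi/k)$) and reduce to the single clean inequality $\cos(\pi/k)+\cos^2(\pi/k)\geq 1$, whereas the paper converts to unprimed radii via both $r_{out}'\geq r_{out}/2$ and $r_{in}'\geq r_{in}/2$ and then applies the cruder bound $1/\cos(\pi/k)\leq\sqrt{2}$ in a longer chain of estimates; both routes land at the same conclusion. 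One small stylistic caution: when you invoke the symmetry axis, the relevant center is the center of $\sigma_i'$, not the center $c$ of $\sigma$ — the argument still works because the center of $\sigma_i'$ lies on $\overline{ac}$, but it is worth saying explicitly.
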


\begin{proof}
    See Figure~\ref{fig:clm_3} and consider the right angle $\angle b'fc$. Recall that the angle $\angle b'cf=\frac{\pi}{k}$, $\angle b'fc=\frac{\pi}{2}$ and $\angle cb'f=\left(\frac{\pi}{2}-\frac{\pi}{k}\right)$.

    Now, using the law of sines, we have the following
    
   \begin{align*}
       d(b',c)&=d(f,c)\frac{\sin \left(\frac{\pi}{2}\right)}{\sin \left(\frac{\pi}{2}-\frac{\pi}{k}\right)}\leq d(f,c)\frac{1}{\cos \left(\frac{\pi}{k}\right)}\tag{$\sin\left(\frac{\pi}{2}\right)=1$ and $\sin \left(\frac{\pi}{2}-\theta\right)=\cos (\theta)$}\\
        &\leq \frac{(r_{out}-r_{out}'-r_{in}')}{\cos \left(\frac{\pi}{k}\right)}\tag{$d(f,c)=r_{out}-r_{out}'-r_{in}'$}\\
        &\leq \frac{1}{\cos \left(\frac{\pi}{k}\right)}{\left(r_{out}-\frac{r_{out}}{2}-\frac{r_{in}}{2}\right)}=\frac{1}{\cos \left(\frac{\pi}{k}\right)}{\left(\frac{r_{out}}{2}-\frac{r_{in}}{2}\right)}\tag{$r_{out}'\geq \frac{r_{out}}{2}$ and $r_{in}'\geq \frac{r_{in}}{2}$}\\
        &\leq \frac{1}{\cos \left(\frac{\pi}{k}\right)}{\left(\frac{r_{in}}{2\cos \left(\frac{\pi}{k}\right)}-\frac{r_{in}}{2}\right)}\tag{$r_{in}= r_{out}\cos\left(\frac{\pi}{k}\right)$}\\
         &\leq \sqrt{2}{\left(\frac{r_{in}\sqrt{2}}{2}-\frac{r_{in}}{2}\right)}\tag{$\frac{1}{cos\left(\frac{\pi}{k}\right)}\leq \sqrt{2}$, for $k\geq 4$}\\
         &\leq \frac{r_{in}}{2}{\left(2-\sqrt{2}\right)}< \frac{r_{in}}{2}\tag{Since $r_{in}'\geq \frac{r_{in}}{2}$}\\
         &< r_{in}'\tag{Since $r_{in}'\geq \frac{r_{in}}{2}$}.
   \end{align*}
\end{proof}

    Due to Claim~\ref{clm:tri}, we have $d(b',c)\leq r_{in}'$. As a result, all points $b',c$ and $d'$ lie in $\sigma_0'$. Thus, $\sigma_0'$ contains $\triangle b'cd'$. Notice that $\cup_{i\in[k]} C_i=\sigma$ and $C_i\subseteq\sigma_i'\cup\sigma_0'$. Thus, $\sigma\subseteq\cup_{\sigma_0'\in K_{\sigma}}\sigma_0'$. Hence, the lemma follows.
\end{proof}

\end{document}